\theoremstyle{plain}\newtheorem{claim}[thm]{Claim}
\def\Bbox{
{\unskip\nobreak\hfil\penalty50
\hskip1em\hbox{}\nobreak\hfil{\lower .5pt \hbox{$\Box$}}
\parfillskip=0pt \finalhyphendemerits=0 \par}
}
\def\eop{
\ifmmode {\hbox{\Bbox}} \else \Bbox \fi
}
\def\bbox{
\ifmmode {\hbox{\bbox}} \else \Bbox \fi
}
\newcommand{\llangle}{{\langle \kern -.2em \langle}}
\newcommand{\rrangle}{{\rangle \kern -.2em \rangle}}
\newcommand{\Sta}{\mathsf{ST}(A)}
\newcommand{\A}{\mathcal {A}}
\newcommand{\N}{\mathbb{N}}
\renewcommand{\L}{\mathcal {L}}
\renewcommand{\to}{\rightarrow}
\renewcommand{\S}{S \langle \kern -.2em \langle Z^* \rangle \kern -.2em \rangle}
\newcommand{\ex}{\mathsf{ex}}
\newcommand{\Path}{\mathsf{Path}}
\newcommand{\Wedge}{\bigwedge}
\newcommand{\Colim}{\mathsf{Colim}}
\newcommand{\Tree}{\mathsf{Tree}}
\newcommand{\Graph}{\mathsf{Graph}}
\newcommand{\tand}{\;\text{and}\;}
\newcommand{\First}{\mathrm{First}}
\newcommand{\era}[1]{\overset{#1}{\longrightarrow}}
\newcommand{\erb}[2]{\overset{#1}{\underset{#2}{\longrightarrow}}}
\newcommand{\Unf}{\mathrm{Unf}}
\newcommand{\trans}{\mathcal{T}}
\renewcommand{\textbf}{\emph}
\newcommand{\Ima}{\mathrm{Im}}
\newcommand{\Dom}{\mathrm{Dom}}
\begin{document}

\title{Algebraic Synchronization Trees and Processes}

\author[L. Aceto]{Luca Aceto}	%required
\address{ICE-TCS, School of Computer Science,
 Reykjavik University, Iceland}	%required
%\email{}  %optional
\thanks{Luca Aceto and Anna Ing\'olfsd\'ottir have been partially supported by
the project `Meta-theory of Algebraic Process Theories'
(nr.~100014021) of the Icelandic Research Fund. The work on the paper
was partly carried out while Luca Aceto and Anna Ing\'olfsd\'ottir
held Abel Extraordinary Chairs at Universidad Complutense de Madrid,
Spain, supported by the NILS Mobility Project. Arnaud Carayol has been
supported by the project AMIS (ANR 2010 JCJC 0203 01 AMIS). Zolt\'an
\'Esik's work on this paper was partly supported by grant T10003
from Reykjavik University's Development Fund and by the Labex B{\'e}zout part
of the program \emph{Investissements d'Avenir} (ANR-10-LABX-58).}	%optional

\author[A. Carayol]{Arnaud Carayol}	%optional
\address{Universit\'e Paris-Est, LIGM (UMR 8049), CNRS, ENPC, ESIEE, UPEM, France}	%optional
%\email{author2@email2; ditto for email addresses}  %optional
%\thanks{thanks 2, optional.}	%optional

\author[Z. \'Esik]{Zolt\'an \'Esik}	%optional
\address{Institute of Informatics, University of Szeged,  Hungary}	%optional
%\email{author3@email3}  %optional
%\thanks{thanks 3, optional.}	%optional

\author[A. Ing\'olfsd\'ottir]{Anna Ing\'olfsd\'ottir}	%required
\address{ICE-TCS, School of Computer Science,
 Reykjavik University, Iceland}	%required
%\email{}  %optional
\thanks{}	%optional

%% etc.

%% required for running head on odd and even pages, use suitable
%% abbreviations in case of long titles and many authors:

%% mandatory lists of keywords and classifications:
\keywords{Synchronization trees, process algebra, recursion schemes}
\subjclass{F.4.1, F.4.2 and F.4.3}
\titlecomment{An extended abstract of this article was published in the proceedings of ICALP 2012.}

%\title{Algebraic Synchronization Trees and Processes \\ 
%Draft dated {\today}}
%
%\author{Luca Aceto\inst{1} \and  Arnaud Carayol\inst{2} \and Zolt\'an \'Esik\inst{3} \and  Anna Ing\'olfsd\'ottir\inst{1}}
%
% \institute{ICE-TCS, School of Computer Science,
% Reykjavik University, Iceland \and 
% Universit\'e Paris-Est, LIGM (UMR 8049), CNRS, ENPC, ESIEE, UPEM, France\and 
% Institute of Informatics, 
%University of Szeged,  Hungary}

\begin{abstract}
We study algebraic synchronization trees, i.e., initial solutions of
algebraic recursion schemes over the continuous categorical algebra of
synchronization trees. In particular, we investigate the relative
expressive power of algebraic recursion schemes over two signatures,
which are based on those for Basic CCS and Basic Process Algebra, as a
means for defining synchronization trees up to isomorphism as well as
modulo bisimilarity and language equivalence. The expressiveness of
algebraic recursion schemes is also compared to that of the low
levels in Caucal's pushdown hierarchy.
\end{abstract}
\maketitle

\section{Introduction}\label{Sect:intro}

%%% Aim of the paper and context
The study of recursive program schemes is one of the classic topics in
programming language semantics. (See,
e.g.,~\cite{deBakker,Courcelle83,Gue81,Nivat75,Scott71} for some of the early
references.) One of the main goals of this line of research is to
define the semantics of systems of recursive equations such as 
\begin{equation}\label{algschema}
F(n) = \text{ifzero}(n,1,\text{mult}(2,F(\text{pred}(n)))) . 
\end{equation}
In the above recursion scheme, the symbols ifzero, add, pred, $1$ and
$2$ denote given function symbols; these are used to define the
derived unary function $F(n)$, which we will refer to as a functor
variable. Interpreting ifzero as the function over the set of triples of $\mathbb{N}^3$
that returns its second argument when the first is zero and the third
otherwise, mult as multiplication and pred as the predecessor
function, intuitively one would expect the above recursion scheme to
describe the function over the natural numbers that, given an input
$n$, returns $2^n$. This expectation has been formalized in several
ways in the literature on recursive program schemes. A classic answer
can be summarized by the `motto' of the initial-algebra-semantics
approach: `The semantics of a recursive program scheme is the infinite term 
tree (or ranked tree) that is the least fixed point of the system of equations
associated with the program scheme.'

In the light of the role that infinite term trees play in defining the
semantics of recursive program schemes, it is not surprising that the
study of infinite term trees has received a lot of attention in the
research literature. Here we limit ourselves to mentioning Courcelle's
classic survey paper~\cite{Courcelle83}, which presents results on
topological and order-theoretic properties of infinite trees, notions
of substitutions for trees as well as regular and algebraic term 
trees. {\em Algebraic term trees} are those that arise as solutions of
recursive program schemes that, like (\ref{algschema}), are `first
order'. On the other hand, {\em regular term trees} are the solutions of
systems of equations like
\begin{eqnarray*} 
X & = & f(X,Y) \\
Y & = & a , 
\end{eqnarray*}
which define parameterless functions $X$ and $Y$. Regular term trees
arise naturally as the unfoldings of flowcharts, whereas algebraic
term trees stem from the unfoldings of recursion schemes that
correspond to functional programs~\cite{Courcelle83}.

In this paper, we study recursion schemes from a process-algebraic
perspective and investigate the expressive power of algebraic
recursion schemes over the signatures of Basic CCS~\cite{Mil89CC} and
of Basic Process Algebra (BPA)~\cite{BaetenBR2009} as a way of
defining possibly infinite synchronization trees~\cite{Milner}. Both
these signatures allow one to describe every finite synchronization
tree and include a binary choice operator +. The difference between
them is that the signature for Basic CCS, which is denoted by $\Gamma$
in this paper, contains a unary action prefixing operation $a.\_$ for
each action $a$, whereas the signature for BPA, which we denote by
$\Delta$, has one constant $a$ for each action that may label the edge
of a synchronization tree and offers a full-blown sequential
composition, or sequential product, operator. Intuitively, the
sequential product $t \cdot t'$ of two synchronization trees is
obtained by appending a copy of $t'$ to the leaves of $t$ that
describe successful termination of a computation. In order to
distinguish successful and unsuccessful termination, both the
signatures $\Gamma$ and $\Delta$ contain constants $0$ and $1$, which
denote unsuccessful and successful termination, respectively.

As an example of a regular recursion scheme over the signature
$\Delta$, consider
\begin{equation*}
X = (X \cdot a ) + a . 
\end{equation*}
On the other hand, the following recursion scheme is
$\Gamma$-algebraic, but not $\Gamma$-regular:
\begin{eqnarray*}
F_1 & = & F_2(a.1) \\
F_2(v)  & = & v + F_2(a.v) . 
\end{eqnarray*}
It turns out that both these recursion schemes define the infinitely
branching synchronization tree depicted on Figure~\ref{Fig:anyfinitedepth}.

In the setting of process algebras such as CCS~\cite{Mil89CC} and
ACP~\cite{BaetenBR2009}, synchronization trees are a classic model of
process behaviour. They arise as unfoldings of labelled transition
systems that describe the operational semantics of process terms and
have been used to give denotational semantics to process description
languages---see, for instance,~\cite{Abramsky91}. Regular
synchronization trees over the signature $\Gamma$ are unfoldings of
processes that can be described in the regular fragment of CCS, which
is obtained by adding to the signature $\Gamma$ a facility for the
recursive definition of processes. On the other hand, regular
synchronization trees over the signature $\Delta$ are unfoldings of
processes that can be described in Basic Process Algebra
(BPA)~\cite{BaetenBR2009} augmented with constants for the deadlocked
and the empty process as well as recursive definitions.

As is well known, the collection of regular synchronization trees over
the signature $\Delta$ strictly includes that of regular
synchronization trees over the signature $\Gamma$ even up to language
equivalence. Therefore, the notion of regularity depends on the
signature.  But what is the expressive power of algebraic recursion
schemes over the signatures $\Gamma$ and $\Delta$? The aim of this
paper is to begin the analysis of the expressive power of those
recursion schemes as a means for defining synchronization trees, and
their bisimulation or language equivalence classes.

%%% Contributions of the paper
In order to characterize the expressive power of algebraic recursion
schemes defining synchronization trees, we interpret such schemes in
continuous categorical $\Gamma$- and $\Delta$-algebras of
synchronization trees. Continuous categorical $\Sigma$-algebras are a
categorical generalization of the classic notion of continuous
$\Sigma$-algebra that underlies the work on algebraic
semantics~\cite{BloomTWW83,CourcelleN76,GoguenTWW77,Gue81}, and have
been used in \cite{BEreg,BEmezei,Esiksynch} to give semantics to
recursion schemes over synchronization trees and words. (We refer the
interested reader to~\cite{MiliusM2006} for a recent discussion of
category-theoretic approaches to the solution of recursion schemes.)
In this setting, the $\Gamma$-regular (respectively,
$\Gamma$-algebraic) synchronization trees are those that are initial
solutions of regular (respectively, algebraic) recursion schemes over
the signature $\Gamma$. $\Delta$-regular and $\Delta$-algebraic
synchronization trees are defined in similar fashion.
 
 Our first
contribution in the paper is therefore to provide a
 categorical
semantics for first-order recursion schemes that define
 processes,
whose behaviour is represented by synchronization
 trees. The use of
continuous categorical $\Sigma$-algebras allows us
 to deal with
arbitrary first-order recursion schemes; there is no need
 to
restrict oneself to, say, `guarded' recursion schemes, as one is
forced to do when using a metric semantics (see, for
instance,~\cite{vBreugel2001} for a tutorial introduction to metric
semantics), and this categorical approach to giving semantics to
first-order recursion schemes can be applied even when the
order-theoretic framework either fails because of the lack of a
`natural' order or leads to undesirable identities.
 
 As a second
contribution, we provide a comparison of the expressive
 power of
regular and algebraic recursion schemes over the signatures
 $\Gamma$
and $\Delta$, as a formalism for defining processes described
 by
their associated synchronization trees. We show that each
$\Delta$-regular tree is $\Gamma$-algebraic
(Theorem~\ref{Thm:Delta2Gamma}) by providing an algorithm for
transforming a $\Delta$-regular recursion scheme into an equivalent
$\Gamma$-algebraic one that involves only unary functor variables. In
addition, we prove that every synchronization tree that is defined by
a $\Gamma$-algebraic recursion scheme of a certain form that involves
only unary functor variables can be transformed into an equivalent
$\Delta$-regular recursion scheme.
 
 We provide examples of
$\Gamma$-algebraic synchronization trees that
 are not
$\Delta$-regular and of $\Delta$-algebraic trees that are not
$\Gamma$-algebraic, not even up to bisimulation equivalence. In
particular, in Proposition~\ref{Prop:bagnotalg} we prove that the
synchronization tree associated with the bag over a binary alphabet
(which is depicted on Figure~\ref{fig:bag} on page~\pageref{fig:bag})
is not $\Gamma$-algebraic,
 even up to language equivalence, and that
it
 is not $\Delta$-algebraic up to bisimilarity. These results are
a
 strengthening of a classic theorem from the literature on process
algebra proved by Bergstra and Klop in~\cite{BergstraK84}.

Since each $\Gamma$-algebraic synchronization tree is also
$\Delta$-algebraic, we obtain the following strict expressiveness
hierarchy, which holds up bisimilarity~\cite{Mil89CC,Par81}:  
\begin{center}
$\Gamma$-regular $\subset$ $\Delta$-regular $\subset$ $\Gamma$-algebraic 
$\subset$ $\Delta$-algebraic. 
\end{center}
In the setting of language equivalence, the notion of $\Gamma$-regularity
corresponds to the regular languages, the one of $\Delta$-regularity
or $\Gamma$-algebraicity 
corresponds to the context-free languages, and $\Delta$-algebraicity
corresponds to the macro languages~\cite{Fischer}, which coincide with 
the languages generated by Aho's indexed grammars~\cite{Aho68}.

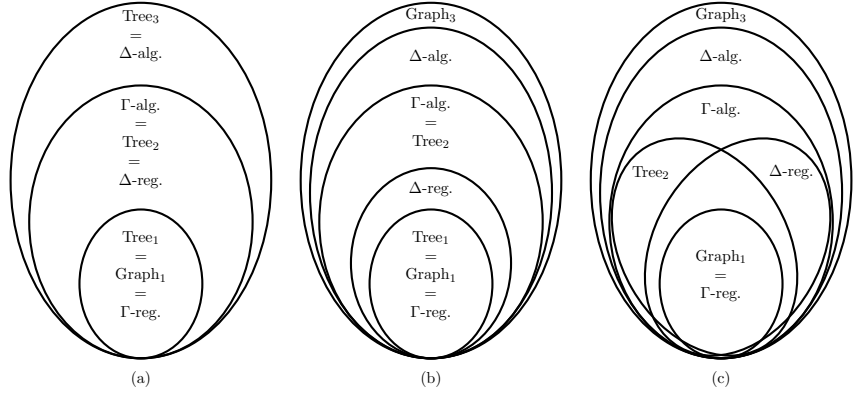
\begin{figure}
\begin{center}
\begin{tikzpicture}[thick,scale=0.55,transform shape]% every node/.style={transform shape}]
\draw (0,-0.2) ellipse (1.65*0.90 and 1.8);
%\draw (0,2) ellipse (3.25 and 4); 
\draw (0,1.3) ellipse (3*0.90 and 3.3); 
\draw (0,2.3) ellipse (3.5*0.90 and 4.3); 
%\draw (0,0.45) ellipse (2.15 and 2.3); 
%\draw (0,0.8)[rotate=-30] ellipse (2.15 and 2.8); 

%\node (0,0) {$Graph_{1}$};
%\node at (-1.7,2.5) {$\mathrm{Tree}_{2}$};
%\node at (0,2.1) {$\Delta$-reg.};
\node[text width=1.4cm, text centered] at (0,3.2) {$\Gamma$-alg. $\mbox{}\;=\;\;$ $\mathrm{Tree}_2$ $\;=\;\;$ $\Delta$-reg.};

\node at (0,5.8)[text width=1.4cm, text centered] {$\mathrm{Tree}_{3}$
$\mbox{}=\quad$ $\Delta$-alg.};
\node[text width=1.4cm, text centered] at (0,0) {$\mathrm{Tree}_{1}$ $\quad$ $=$
 $\mathrm{Graph_{1}}$ $\quad$ $\mbox{}\quad=$ $\quad$ $\Gamma$-reg.};
\node at (0,-2.5) {(a)};
\end{tikzpicture}
$\;$
\begin{tikzpicture}[thick,scale=0.55,transform shape]
\draw (0,-0.2) ellipse (1.65*0.90 and 1.8);
\draw (0,2) ellipse (3.25*0.90 and 4); 
\draw (0,1.3) ellipse (3*0.90 and 3.3); 
\draw (0,2.3) ellipse (3.5*0.90 and 4.3); 
\draw (0,0.3) ellipse (2.15*0.90 and 2.3); 
%\draw (0,0.8)[rotate=-30] ellipse (2.15 and 2.8); 

%\node (0,0) {$Graph_{1}$};
%\node at (-1.7,2.5) {$\mathrm{Tree}_{2}$};
\node at (0,2.1) {$\Delta$-reg.};
\node[text width=1.4cm, text centered] at (0,3.7) {$\Gamma$-alg. $\mbox{}\;=\;\;$ $\mathrm{Tree}_2$};
\node at (0,5.3) {$\Delta$-alg.};
\node at (0,6.3) {$\mathrm{Graph}_{3}$};
\node[text width=1.4cm, text centered] at (0,0) {$\mathrm{Tree}_{1}$ $\quad$ $=$
 $\mathrm{Graph_{1}}$ $\quad$ $\mbox{}\quad=$ $\quad$ $\Gamma$-reg.};
\node at (0,-2.5) {(b)};
\end{tikzpicture}
$\;$
\begin{tikzpicture}[thick,scale=0.55,transform shape]
%, every node/.style={transform shape}]
\draw (0,-0.2) ellipse (1.65*0.90 and 1.8);
\draw (0,2) ellipse (3.25*0.90 and 4); 
\draw (0,1.3) ellipse (3*0.90 and 3.3); 
\draw (0,2.3) ellipse (3.5*0.90 and 4.3); 
\draw (0,0.8)[rotate=30] ellipse (2.15*0.94 and 2.8); 
\draw (0,0.8)[rotate=-30] ellipse (2.15*0.94 and 2.8); 

%\node (0,0) {$Graph_{1}$};
\node at (-1.7,2.5) {$\mathrm{Tree}_{2}$};
\node at (1.7,2.5) {$\Delta$-reg.};
\node at (0,4) {$\Gamma$-alg.};
\node at (0,5.3) {$\Delta$-alg.};
\node at (0,6.3) {$\mathrm{Graph}_{3}$};
\node[text width=1.4cm, text centered] at (0,0) {$\mathrm{Graph_{1}}$ $\quad$ $\mbox{}\quad=$ $\quad$ $\Gamma$-reg.};
\node at (0,-2.5) {(c)};
\end{tikzpicture}
\end{center}
\caption{\label{Fig:hierarchy}The expressiveness hierarchies up to language equivalence (a), up to bisimilarity (b) and up to isomorphism (c)}
\end{figure}

In order to obtain a deeper understanding of $\Gamma$-algebraic
recursion schemes, we
characterize their expressive power by following the lead of
Courcelle~\cite{Courcelle78a,Courcelle78b,Courcelle83}. In those
references, Courcelle proved that a term  tree
is algebraic if, and only if, its branch language is a deterministic
context-free language. In our setting, we associate with each
synchronization tree with bounded branching a family of branch
languages and we show that a synchronization tree with bounded
branching is $\Gamma$-algebraic if, and only if, the family of branch
languages associated with it contains a deterministic context-free
language (Theorem~\ref{thm-DCFL}). In conjunction with standard tools
from formal language theory, this result can be used to show that
certain synchronization trees are not $\Gamma$-algebraic.

  As a final main contribution of the paper, we compare the expressiveness of those recursion schemes to that of the low levels in Caucal's hierarchy. This hierarchy is already known to include the term trees defined by safe\footnote{Safety is a syntactic restriction which is trivial for first order schemes. In particular, it does not play any role in our setting.}
 higher-order recursion schemes when interpreted over the free continuous algebra \cite{Caucal}. Unsurprisingly, we show that the classes of $\Gamma$-algebraic and $\Delta$-algebraic synchronization trees belong to the 
  third level of the Caucal hierarchy. We provide a more detailed comparison, which is summarized in Figure~\ref{Fig:hierarchy}. 

As a benefit of the comparison with the Caucal hierarchy, we obtain structural properties and decidability of the monadic second-order theories of $\Delta$-algebraic synchronization trees~\cite{Thomas03}. By contraposition, this implies that a synchronization tree with an undecidable monadic second-order theory cannot be $\Delta$-algebraic. This allows us to show that $\Gamma$-algebraic (and hence $\Delta$-algebraic trees) are not closed under minimization with respect
to bisimulation equivalence (Proposition~\ref{Prop:QuotientnotCaucal}).

%%% Discussion of the techniques 
The technical developments in this paper make use of techniques and
results from a variety of areas of theoretical computer science. We
employ elementary tools from category theory and initial-algebra
semantics to define the meaning of recursion schemes over algebras of
synchronization trees. Tools from concurrency and formal-language
theory are used to obtain separation results between the different
classes of synchronization trees we consider in this paper. The proof
of Theorem~\ref{thm-DCFL}, characterizing the expressive power of
$\Gamma$-algebraic recursion schemes in the style of Courcelle, uses a
Mezei-Wright theorem for categorical algebras~\cite{BEmezei} as well
as tools from monadic second-order
logic~\cite{Braudthesis,CarayolWohrle}. Overall, we find it pleasing
that methods and results developed by different communities within theoretical
computer science play a role in the study of natural structures like
the synchronization trees that arise from the solution of algebraic
recursion schemes.

%%% Roadmap of the paper
The paper is organized as follows. In Section~\ref{Sect:definitions}, we
introduce (first-order) recursion schemes and the synchronization trees they define using continuous categorical algebras. In Section~\ref{Sect:comparison}, we compare  the classes of $\Gamma$-regular, $\Delta$-regular, $\Gamma$-algebraic and $\Delta$-algebraic synchronization trees up to isomorphism, bisimulation and language equivalence.  In Section~\ref{Sect:caucal}, we compare
our classes of synchronization trees to the first levels of the Caucal hierarchy. In Section~\ref{Sect:branchlang}, following Courcelle, we characterize the expressive power of $\Gamma$-algebraic recursion
schemes by studying the branch languages of synchronization trees
whose vertices have bounded outdegree. In Section~\ref{Sect:treesandlogic}, we show how the previous results can be used to prove that 
some synchronization trees cannot be defined using algebraic recursion schemes. Section~\ref{Sect:concl} concludes the paper and lists topics for future research.

\section{Categorical semantics of first-order recursion schemes}
\label{Sect:definitions}

In this section, we provide a categorical semantics for first-order recursion schemes using continuous categorical algebras, and introduce several notions and results that will be used throughout the paper. Section~\ref{Sect:contalg} introduces the necessary categorical background. The continuous categorical 
algebras associated with synchronization trees are introduced in Section~\ref{Sect:syntrees}.  Sections~\ref{Sect:edge-labelled-graphs}--\ref{Sect:graph-transformations} present edge-labelled graphs, monadic second order logic over such structures and the kinds of graph transformations used in the paper. 
The definition of the synchronization tree defined by a recursion scheme is given in Section~\ref{sec-algebraic objects}. 
Sections~\ref{Sect:coa} and \ref{sec:morphism} describe the essentially unique morphisms from the initial continuous ordered  $\Gamma$-algebra (resp. $\Delta$-algebra) of $\Gamma$-term trees (resp. $\Delta$-term trees) to the continuous $\Gamma$-algebra (resp. $\Delta$-algebra) of synchronization trees. These morphisms play an important
role when comparing the synchronization trees defined by recursion schemes to the low levels of the Caucal hierarchy in Section~\ref{Sect:caucal}. We conclude by giving some basic properties of synchronization trees defined by first-order recursion schemes in Section~\ref{Sect:closure}.

Throughout the paper, when $n$ is a non-negative integer, we denote the set 
$\{1,\ldots,n\}$ by $[n]$.
\newcommand{\cca}{C}
\newcommand{\ccb}{D}
\subsection{Continuous categorical algebras}\label{Sect:contalg}

In this section, we recall the notion of continuous categorical 
$\Sigma$-algebra. These structures were used in \cite{BEreg,BEmezei,Esiksynch} 
to give semantics to recursion schemes over synchronization trees and words.  

Let  $\Sigma=\bigcup_{n \geq 0} \Sigma_n$ 
be a ranked set (or `signature'). A \textbf{categorical $\Sigma$-algebra}
is a small category $\cca$ equipped with a functor
$\sigma^{\cca} : \cca^n \to \cca$ for each $\sigma \in \Sigma_n$, $n \geq 0$. 
A \textbf{morphism} between categorical $\Sigma$-algebras $\cca$ and $\ccb$ 
is a functor $h: \cca \to \ccb$
such that for each $\sigma \in \Sigma_n$, the diagram
\begin{eqnarray*}
\bfig
\square<1000,600>[{\cca}^n`\cca`{\ccb}^n`\ccb;\sigma^{\cca}`h^n`h`\sigma^{\ccb}]
\efig
\end{eqnarray*}
commutes \textit{up to a natural isomorphism}.
Here, the functor $h^n:{\cca}^n \rightarrow  {\ccb}^n$ maps each 
object and morphism $(x_1,\ldots,x_n)$ in $\cca^n$ to $(h(x_1),\ldots,h(x_n))$
in $\ccb^n$.

Suppose that $\cca$ is a categorical $\Sigma$-algebra. 
We call $\cca$ \textbf{continuous}  if $\cca$ has a distinguished 
initial object (denoted $\bot^{\cca}$ or $0^\cca$) and colimits of all 
$\omega$-diagrams 
$(f_k: a_k \rightarrow  a_{k+1})_{k \geq 0}$. Moreover, each
functor $\sigma^{\cca}$ is continuous, i.e., 
preserves colimits of $\omega$-diagrams. 
Thus, if $\sigma  \in \Sigma_2$,
say, and if
\begin{eqnarray*}
  x_0 \stackrel{f_0}{\to} x_1 \stackrel{f_1}{\to} x_2 \stackrel{f_2}{\to} \ldots \\
  y_0 \stackrel{g_0}{\to} y_1 \stackrel{g_1}{\to} y_2 \stackrel{g_2}{\to} \ldots 
\end{eqnarray*}
are $\omega $-diagrams in $\cca$ with colimits $(x_k \stackrel{\phi_k}{\to} x)_k$
and $(y_k \stackrel{\psi_k}{\to} y)_k$,  respectively, then
\begin{eqnarray*}
  \sigma^{\cca}(x_0,y_0) \stackrel{\sigma^{\cca}(f_0,g_0)}{\to} \sigma^{\cca}(x_1,y_1) \stackrel{\sigma^{\cca}(f_1,g_1)}{\to} 
  \sigma^{\cca}(x_2,y_2) \stackrel{\sigma^{\cca}(f_2,g_2)}{\to} \ldots 
\end{eqnarray*}
has colimit
\begin{eqnarray*}
  (\sigma^{\cca}(x_k, y_k) &\stackrel{\sigma^{\cca}(\phi_k,\psi_k)}{\to} & \sigma^{\cca}(x,y))_k.
\end{eqnarray*}
A morphism of continuous categorical $\Sigma$-algebras is
a categorical $\Sigma$-algebra morphism which preserves the 
distinguished initial object and colimits of all $\omega$-diagrams. 
In what follows, we will often write just $\sigma$ for $\sigma^\cca$,
in particular when $\cca$ is understood.

For later use, we note that if $\cca$ and $\ccb$ are continuous categorical 
$\Sigma$-algebras then so is $\cca \times \ccb$. Moreover, for each $k \geq 0$, 
the category 
$[\cca^k \to \cca]$ of all continuous functors $\cca^k \to \cca$ is also 
a continuous categorical $\Sigma$-algebra, where 
for each $\sigma \in \Sigma_n$, 
\[
\sigma^{[\cca^k\to \cca]}(f_1,\ldots,f_n) = \sigma^\cca \circ \langle f_1,\ldots,f_n \rangle
\]
with  $\langle f_1,\ldots,f_n \rangle$ standing for the target tupling of the 
continuous functors $f_1,\ldots,f_n: \cca^k \to \cca$. On natural transformations,
$\sigma^{[\cca^k \to \cca]}$ is defined in a similar fashion. In $[\cca^k \to \cca]$,
colimits of $\omega$-diagrams are formed pointwise.

In the rest of the paper, we will assume, without loss of generality,
 that the signature $\Sigma$ contains a special symbol denoted $\bot$
 or $0$ of rank $0$, interpreted in a continuous categorical
 $\Sigma$-algebra as its initial object.

\subsection{Synchronization trees}\label{Sect:syntrees}

A \textbf{synchronization tree} $t=(V,v_0,E,l)$ 
over an alphabet $A$ of `action symbols'
consists of 
\begin{itemize}
\item a finite or countably infinite
set $V$ of `vertices' and an element $v_0 \in V$, the `root';
\item a set $E \subseteq V \times V$ of `edges';
\item a `labelling function' $l: E \to A \cup \{\ex\}$ where $\ex \not\in A$ is a label used to denote successful termination.
\end{itemize}
These data obey the following restrictions.
\begin{itemize}
\item $(V,v_0,E)$ is a rooted tree: for each
$u \in V$, there is a unique path $v_0 \leadsto u$.
\item If $e=(u,v) \in E$ and $l(e)=\ex$, then $v$ is a leaf,
and $u$ is called an \textbf{exit vertex}.
\end{itemize}
A synchronization tree is {\em deterministic} if no node in the tree
has two equally-labelled outgoing edges. 

A \textbf{morphism} $\phi:t \to t'$ of synchronization trees is a
function $V \to V'$ which preserves the root, the edges and the
labels, so that if $(u,v)$ is an edge of $t$, then
$(\phi(u),\phi(v))$ is an edge of $t'$, and $l'(\phi(u),
\phi(v))=l(u,v)$. {In particular such a morphism maps the root of $t$ to the root of $t'$.} Morphisms are therefore functional
\textbf{simulations}~\cite{Mi71,Par81}.  It is clear that the trees over $A$ and tree morphisms form a category, which we denote by $\Sta$. The tree that has a single vertex
and no edges is initial. It is known that the category of trees has
colimits of all $\omega$-diagrams, see~\cite{BEbook}. (It also has
binary coproducts.) In order to make the category of trees small, we
may require that the vertices of a tree form a subset of some fixed
infinite set.

\begin{rem} 
Suppose that $(\phi_n: t_n \to t_{n+1})_{n \geq 0}$ is an $\omega$-diagram
in $\Sta$, where $t_n = (V_n,v_n,E_n,l_n)$ for each $n\geq 0$. 
Then the colimit $\Colim(\phi_n: t_n \to t_{n+1})_{n \geq 0}$
can be constructed in the expected way. First,
we define for each $n \leq m$ the tree morphism $\phi_{n,m}$ 
as the composition of the morphisms $\phi_n,\ldots,\phi_{m -1}$ {or the identify morphism if $n=m$.}
{Then we take the disjoint union $\{ (v,n) \mid v \in V_n \;\textrm{and } n \geq 0\}$ of the sets $V_n$, $n \geq 0$, and 
define a vertex $(v,n)$  equivalent to a vertex $(v',m)$ if and only if
there is some $k \geq \max\{n,m\}$ with $\phi_{n,k}(v) = \phi_{m,k}(v')$.
The set of vertices of the colimit tree will be the equivalence classes 
of $\{ (v,n) \mid v \in V_n \;\textrm{and } n \geq 0\}$. The root will be  the equivalence class containing the roots $(v_n,n)$ 
of the trees $t_n$, $n \geq 0$. When $C$ and $C'$ are 
equivalence classes, the pair $(C,C')$ is an edge in the 
colimit tree exactly when there exist
$(v,n) \in C$ and $(v',n) \in C'$ for some $n$
such that $E_n$ contains the edge $(v,v')$. The label 
of this edge is $l((C,C')) = l_n((v,v'))$.} Note that, in this case, for each
$k \geq n$ we have that $(\phi_{n,k}(v), \phi_{n,k}(v'))\in E_k$,
and that the label of this edge in $t_k$ is equal to $l_n(v,v')$.
This defines the object part $t$ of the colimit. The canonical morphism
$t_n \to t$ maps a vertex $v\in V_n$ to its equivalence class.

In the particular case when the morphisms $\phi_n$ are injective, we 
may usually assume that  
$V_n \subseteq V_{n + 1}$ for each $n$ and that the morphism  
$\phi_n$ is the inclusion of $V_n$ into $V_{n+1}$.
In this case the colimit is simply the `union' of the trees $t_n$.
\end{rem}

Consider the sequence $t_n=(\{0,\ldots,n\},0,E_n,l_n)$, $n \geq 0$ of synchronization trees, where $E_n = \{ (i,i+1) \mid 0\leq i<n \}$ and $l_n$ maps each edge in $E_n$ to the action symbol $a$. Let $\psi_n: t_n \to t_{n+1}$ be the inclusion map from $\{0,\ldots,n\}$ into $\{0,\ldots,n+1\}$. Then $(\phi_n = t_n \to t_{n+1})_{n\geq 0}$ is an $\omega$-diagram and its colimit is $(f_n : t_n \to t_\omega)_n$, where
\[
t_\omega = (\{i \mid i \geq 0\}, 0, \{(i,i+1) \mid i \geq 0\},l)
\]
and $l$ maps each edge to $a$. Pictorially, we have:

\medskip

	\begin{tikzpicture}[transform shape,scale=1.0]
 \tikzstyle{n}=[circle, draw=black, fill=black, inner sep = 0.25mm, outer sep= 0.5mm]
 \tikzstyle{e}=[ inner sep = 0mm, outer sep= 0mm]
      \begin{scope}
      	\node at (0,0.75) {$t_0$}; 
      	\node[n] (n0) at (0,0) {};  	
      \end{scope}

	\begin{scope}[xshift=1.5cm]
      	\node at (0,0.75) {$t_1$}; 
      	\node[n] (n0) at (0,0) {}; 
      	\node[n] (n1) at (0,-0.75) {};
      	\draw[-latex] (n0) to[right] node[yshift=1mm] {$a$} (n1);  	
      \end{scope}
      
      \begin{scope}[xshift=3cm]
      	\node at (0,0.75) {$t_2$}; 
      	\node[n] (n0) at (0,0) {}; 
      	\node[n] (n1) at (0,-0.75) {};
      	\node[n] (n2) at (0,-1.5) {};
      	\draw[-latex] (n0) to[right] node[yshift=1mm] {$a$} (n1);
      	\draw[-latex] (n1) to[right] node[yshift=1mm] {$a$} (n2);
      \end{scope}
      
       \begin{scope}[xshift=4.5cm]
      	\node at (0,0.75) {$t_3$}; 
      	\node[n] (n0) at (0,0) {}; 
      	\node[n] (n1) at (0,-0.75) {};
      	\node[n] (n2) at (0,-1.5) {};
      	\node[n] (n3) at (0,-2.25) {};
      	\draw[-latex] (n0) to[right] node[yshift=1mm] {$a$} (n1);
      	\draw[-latex] (n1) to[right] node[yshift=1mm] {$a$} (n2);
      	\draw[-latex] (n2) to[right] node[yshift=1mm] {$a$} (n3);
      \end{scope}
      
      \begin{scope}[xshift=5.5cm]
      	\draw[dashed] (0,-0.5) -- (1,-0.5);	
      \end{scope}

	  \begin{scope}[xshift=8cm]
      	\node at (0,0.75) {$t_\omega$}; 
      	\node[n] (n0) at (0,0) {}; 
      	\node[n] (n1) at (0,-0.75) {};
      	\node[n] (n2) at (0,-1.5) {};
      	\node[n] (n3) at (0,-2.25) {};
      	\node[n] (n4) at (0,-3) {};
      	\draw[-latex] (n0) to[right] node[yshift=1mm] {$a$} (n1);
      	\draw[-latex] (n1) to[right] node[yshift=1mm] {$a$} (n2);
      	\draw[-latex] (n2) to[right] node[yshift=1mm] {$a$} (n3);
      	\draw[-latex] (n3) to[right] node[yshift=1mm] {$a$} (n4);
      	\draw[dashed] (n4) -- (0,-4);
      \end{scope}

 \end{tikzpicture}

\medskip

The category $\Sta$ of synchronization trees over $A$ is equipped with
two binary operations: $+$ (sum) and  $\cdot$  
(sequential product  or sequential composition),
and either with a unary operation or a constant associated with 
each letter $a \in A$.  

The \textbf{sum} $t+t'$ of two trees 
is obtained by taking the disjoint  
union of the vertices of $t$ and $t'$ and identifying the roots.
The edges and labelling are inherited.  The 
\textbf{sequential product} $t \cdot t'$ of
two trees is obtained by replacing each edge of $t$ labelled $\ex$ 
by a copy of $t'$. 
With each letter $a \in A$, we can either associate a constant,
or a unary \textbf{prefixing operation}. 
As a constant,  $a$ denotes the tree with
vertices $v_0,v_1,v_2$ and two edges: the edge $(v_0,v_1)$, labelled
$a$, and the edge $(v_1,v_2)$, labelled $\ex$.  
As an operation $a(t)$ is the tree $a \cdot t$, for each tree $t$.
Let $0$ denote the tree with no edges and $1$ the tree 
with a single edge labelled $\ex$. 
On morphisms, all operations are defined in the expected way.
For example, if $h: t \to t'$ and $h': s \to s'$, then 
$h + h'$ is the morphism that agrees with $h$ on the 
non-root vertices of $t$ and that agrees with $h'$ on the 
non-root vertices of $s$. The root of $t+s$ is mapped 
to the root of $t' + s'$.

In the sequel we will consider two signatures for synchronization
trees, $\Gamma$ and $\Delta$. The signature $\Gamma$ contains $+$, $0,1$
and each letter $a \in A$ as a \emph{unary} symbol. In contrast,
$\Delta$ contains $+,\cdot$, $0,1$ and each letter $a\in A$ as a 
\emph{nullary} symbol. It is known that, for both signatures, 
$\Sta$ is a continuous categorical algebra.
See \cite{BEbook} for details. 

In what follows, for each $a \in A$ and term/tree $t$, we write $a.t$
for $a(t)$. We shall also abbreviate $a.1$ to $a$, and write $a^n.t$ for 
\[
\underbrace{a.\ldots.a.}_{\text{$n$ times}}t . 
\]

\begin{rem}
  Note that $\cdot$ is associative and has $1$ as
  unit, at least up to isomorphism, and that for all trees $t_1,t_2$
  and $s$, $(t_1 + t_2)\cdot s = (t_1 \cdot s) + (t_2 \cdot s)$ and $0
  \cdot s = 0$ up to isomorphism.
\end{rem}

\begin{rem}
\label{rem:delta-implies-gamma}
A continuous categorical $\Delta$-algebra $C$ naturally induces an associated $\Gamma$-algebra $D$. For each letter $a \in A$, the functor 
$a^{D}$ associated to unary symbol $a$ is defined for each object and morphism $x$ by $a^{D}(x) = a^{C} \cdot^{C} x$. The functors for the other symbols are inherited (i.e., for all $f \in \Sigma \setminus A$, $f^{D}=f^{C}$). {It is easy to see that $D$ is indeed a continuous categorical $\Gamma$-algebra. In particular the $\Delta$-algebra of synchronization trees induces in this sense the $\Gamma$-algebra of synchronization trees.}
\end{rem}

\begin{defi}\label{def:bisimilarity-languageeq}
Two synchronization trees $t=(V,v_0,E,l)$ and $t'=(V',v'_0,E',l')$ are
{\em bisimilar}~\cite{Mil89CC,Par81} if there is some symmetric
relation $R\subseteq (V \times V')\cup (V' \times V)$,
called a {\em bisimulation}, 
 that relates
their roots, and such that if $(v_1,v_2)\in R$ and there is some edge
$(v_1,v_1')$, then there is an equally-labelled edge $(v_2,v_2')$ with
$(v_1',v_2')\in R$.

 The {\em path language} of a synchronization tree
is composed of the words in $A^*$ that label a path from the root to
the source of an exit edge. Two trees are {\em language equivalent} if
they have the same path language.
\end{defi}

\begin{lem}
\label{lem:bisim-det-iso}
 If $t$ and $t'$ are deterministic, bisimilar synchronization trees, then they are isomorphic. 	
\end{lem}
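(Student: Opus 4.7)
The plan is to use determinism together with the tree structure to show that each vertex is uniquely determined by its path from the root, and then to define the isomorphism by pairing vertices reachable by the same sequence of labels.

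Concretely, for any synchronization tree $s=(W,w_0,F,m)$, each vertex $w \in W$ is reached from the root $w_0$ by a unique path, whose sequence of edge labels I denote $\pi_s(w) \in (A \cup \{\ex\})^*$. Determinism of $s$ implies that $\pi_s$ is injective: two distinct vertices in a tree have distinct paths from the root, and since no vertex has two equally-labelled outgoing edges, distinct paths from the root give rise to distinct label sequences. Conversely, any label sequence in $\pi_s(W)$ determines the corresponding vertex uniquely.

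Now let $R$ be a bisimulation between $t=(V,v_0,E,l)$ and $t'=(V',v_0',E',l')$ relating their roots. I first show that $\pi_t(V) = \pi_{t'}(V')$. Given $u \in V$ with $\pi_t(u)=a_1 \cdots a_k$ and $v_0 = u_0, u_1, \ldots, u_k = u$ the path to $u$, a straightforward induction on $i$ produces vertices $u_i' \in V'$ with $(u_i,u_i') \in R$ and $(u_{i-1}',u_i') \in E'$ labelled $a_i$: the base case holds since $(v_0,v_0') \in R$, and the induction step uses the bisimulation clause applied to $(u_{i-1},u_{i-1}')$ and the edge $(u_{i-1},u_i)$. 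Hence $\pi_{t'}(u_k') = a_1 \cdots a_k = \pi_t(u)$. The reverse inclusion follows by symmetry of $R$.

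Define $\phi : V \to V'$ by $\phi(u) = \pi_{t'}^{-1}(\pi_t(u))$. By the above, $\phi$ is a well-defined bijection, and clearly $\phi(v_0) = v_0'$. To see that $\phi$ is a tree morphism, let $(u,v) \in E$ with $l(u,v)=a$. Then $\pi_t(v) = \pi_t(u) \cdot a$, whence $\pi_{t'}(\phi(v)) = \pi_{t'}(\phi(u)) \cdot a$; since $t'$ is a tree, the unique path from $v_0'$ to $\phi(v)$ passes through the unique vertex at label-distance $\pi_{t'}(\phi(u))$, which is $\phi(u)$, and its final edge is $(\phi(u),\phi(v))$ with label $a$. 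Thus $\phi$ preserves edges and their labels, and the symmetric argument shows that $\phi^{-1}$ is also a tree morphism, so $\phi$ is an isomorphism.

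The only subtle point is justifying that a single label sequence determines at most one vertex in a deterministic tree (which uses both the tree property and determinism) and verifying that bisimulation can be iterated along the entire path, including exit edges; both are routine given the definitions.
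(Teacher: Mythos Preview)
Your proof is correct and follows the standard argument for this folklore result. The paper itself states the lemma without proof, so there is nothing to compare against; your route via the label-sequence map $\pi_s$ is exactly the natural one, and all the steps (injectivity of $\pi_s$ from determinism plus the tree property, equality of the images via the bisimulation, and edge-preservation of the induced bijection) are handled correctly.
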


\begin{rem}
The above lemma fails up to language equivalence. For example, the trees \tikz{\tikzstyle{n}=[circle, draw=black, fill=black, inner sep = 0.25mm, outer sep= 0.5mm]; \node[n] at (0,0) {};} and \tikz[baseline=-0.6em]{\tikzstyle{n}=[circle, draw=black, fill=black, inner sep = 0.25mm, outer sep= 0.5mm]; \node[n] (r) at (0,0) {}; \node[n] (s) at (0,-0.5) {}; \draw[-latex] (r) to[right] node[yshift=1mm] {$a$} (s);} are language equivalent but not isomorphic.
\end{rem}

We conclude this section by defining an operation on synchronization
trees over $A \uplus B$ that contracts all edges labelled by some symbol in $B$. The definition of this operation is illustrated in Figure~\ref{fig:contraction-synchronization-trees}. {This notion will, in particular, be used to characterize $\Gamma$-algebraic trees in the Caucal hierarchy.}

\begin{defi}
Let $t=(V,v_0,E,l)$ be synchronization tree over an alphabet $A \uplus B$.
The $B$-contraction of $t$ is the synchronization tree $t'=(V',v_{0},E',l')$
defined by:
\begin{itemize}
\item $V'= \{ v' \in V \mid \exists v \in V\  (v,v') \in E \;\textrm{with}\; l(v,v') \in A \cup \{\ex\}  \} \cup \{v_{0} \}$,
\item a pair $(v,v') \in V' \times V'$ is an edge in $E'$ if there exists a path in $t$ from $v$ to $v'$  labelled by a word in $B^{*}(A\cup \{\ex\})$,
\item for each  $(v,v') \in E'$, $l(v,v')$ is the unique letter $a \in A \cup \{\ex\}$ such that 
the path from $v$ to $v'$ in $t$ is labelled in $B^{*}a$.
\end{itemize}
\end{defi}

\begin{figure}
\begin{center}
\begin{tikzpicture}[edge from parent/.style={draw,-latex}]
 \tikzstyle{n}=[circle, draw=black, fill=black, inner sep = 0.25mm, outer sep= 0.5mm]
 \tikzstyle{e}=[ inner sep = 0mm, outer sep= 0mm]
 \tikzstyle{l}=[font=\tiny,label distance=-1mm]
 \tikzstyle{level}=[level distance=0.85cm,sibling distance=1cm]
 \tikzstyle{level 1}=[level distance=0.85cm,sibling distance=1.25cm]

\node[n,label={[l]above:{$(v0)$}}] {}
  child { node[n,label={[l]below:{$(v_1)$}}] {}
  edge from parent node[draw=none,above left] {$a$}}
  child { node[n,label={[l]right:{$(v_2)$}}] {}
    child { node[n,label={[l]left:{$(v_4)$}}] {}
      child { node[n,label={[l]below:{$(v_7)$}}] {}
      edge from parent node[draw=none,left] {$b$}}
    edge from parent node[draw=none,left] {$e$}}
    child { node[n,label={[l]right:{$(v_5)$}}] {}
      child { node[n,label={[l]below:{$(v_8)$}}] {}
      edge from parent node[draw=none,right] {$\ex$}}
    edge from parent node[draw=none,right] {$c$}}
  edge from parent node[draw=none,right] {$e$}}
  child { node[n,label={[l]right:{$(v_3)$}}] {}
    child { node[n,label={[l]right:{$(v_6)$}}] {}
      child { node[n,label={[l]below:{$(v_9)$}}] {}
      edge from parent node[draw=none,right] {$b$}}
    edge from parent node[draw=none,right] {$e$}}
  edge from parent node[draw=none,above right] {$e$}}
;

 \tikzstyle{level 1}=[level distance=0.85cm,sibling distance=0.85cm]

\node[n,label={[l]above:{$(v0)$}},xshift=5cm]  {} 
  child { node[n,label={[l]below:{$(v_1)$}}] {}
  edge from parent node[draw=none,above left] {$a$}}
  child { node[n,label={[l]below:{$(v_7)$}}] {}
  edge from parent node[draw=none,left] {$b$}}
  child { node[n,label={[l]right:{$(v_5)$}}] {}
    child { node[n,label={[l]below:{$(v_8)$}}] {}
    edge from parent node[draw=none,right] {$\ex$}}
  edge from parent node[draw=none,right] {$c$}}
  child { node[n,label={[l]below:{$(v_9)$}}] {}
  edge from parent node[draw=none,above right] {$b$}}
;
\end{tikzpicture}
\end{center}

  \caption{A synchronization tree (on the left) over $\{a,b,c,e\}$ and its $\{e\}$-contraction (on the right)}
  \label{fig:contraction-synchronization-trees}
\end{figure}
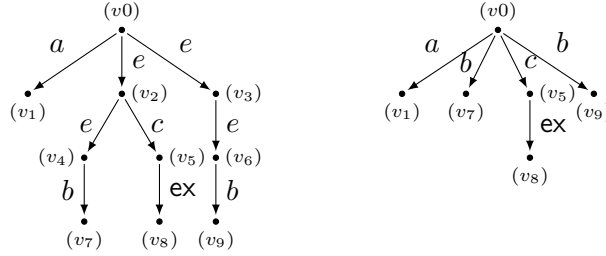

\subsection{Edge-labelled graphs}\label{Sect:edge-labelled-graphs}

An \emph{edge-labelled graph} (or simply graph), whose edges are labelled by letters in a finite alphabet $A$, is a pair $(V,E)$ where $V$ is a finite or countable set of vertices and $E \subseteq V \times A \times V$ is a set of labelled edges. The edge $(u,a,v)$ has source $u$, target $v$ and label $a$. The existence of an edge $(u,a,v)$ in $E$ is denoted by $u\erb{a}{G}v$ or simply $u \era{a} v$ when $G$ is clear from the context. This notation is extended to words in $A^{*}$ in the usual way. For a language $L$ over $A$, we write $u \era{L} v$ if there is some word $w\in L$ such that $u \era{w} v$.

A graph $(V,E)$ is {\em deterministic} if for all $u, v_1, v_2 \in V$ and $a\in A$, if $u \era{a} v_{1}$ and $u \era{a} v_{2}$ then $v_{1}=v_{2}$.

A \emph{root} of a graph $(V,E)$ labelled by $A$ is a vertex $v_{0}$ such that for all $v \in V$, there exists a path from $v_{0}$ to $v$.
If this path is unique for each vertex $v$, the graph $(V,E)$ is a \emph{tree} with $v_{0}$ as its root.

A graph $(V,E)$ labelled in $A \cup \{\ex\}$ represents a synchronization tree if $(V,E)$ is a tree with root $v_{0}$ and if the target of an $\ex$-labelled edge is always a leaf of the tree. We will identify such a graph
with the synchronization tree $(V,v_{0},E',l)$ where $E'=\{ (u,v) \mid \exists a \in A \cup \{\ex\}. u \era{a} v\}$ and for all $(u,v) \in E'$, $l((u,v))$ 
is the unique symbol in $A \cup \{\ex\}$ such that $(u,l((u,v)),v)$ belongs to $E$.

The definition of $B$-contraction for labelled graphs generalizes the
case of synchronization trees introduced in
Section~\ref{Sect:syntrees}. The $B$-contraction of a graph $G$
labelled in $A \uplus B$ 
from one of its root vertices $r$ is the graph
labelled by $A$ whose vertices are the targets (in $G$) of edges labelled
in $A$ together with the vertex $r$. There is an edge from a vertex
$u$ to a vertex $v$ if there is a path from $u$ to $v$ in $G$ labelled
by a word in $B^{*}A$.

\subsection{Monadic second-order logic on edge labelled graphs.}
\emph{Mo\-na\-dic second-order  logic (MSO)} on graphs is the extension of first-order logic with the ability to quantify over sets of vertices. We use monadic-second order logic over edge-labelled  graphs with the standard syntax and semantics (see
e.g. \cite{Ebbinghaus95} for a detailed presentation).

\emph{Monadic second-order formulas}  are built using \emph{first-order variables}, which are interpreted as vertices of the graph and are denoted by lower case letters $x,y\ldots$ and \emph{monadic second-order variables}, which are interpreted as sets of vertices of the graph and are denoted by upper case letters $X,Y\ldots$ Atomic formulas are of the form $x \in X$, $x=y$ and $x \era{a} y$ where $x$ and $y$ are first-order variables, $X$ is a second-order variable and $a$ is an edge label. Complex formulas are built as usual from atomic ones using Boolean connectives and quantifiers. Free and bound occurrences of variables in a formula are defined in the standard fashion. We write $\varphi(X_1, \ldots, X_n, y_1, \ldots, y_m)$ to denote that the formula $\varphi$ has free variables in $\{X_1,\ldots,X_n,y_1,\ldots,y_m\}$.
A \emph{closed formula} has no free variables. 

For a formula $\varphi(X_1, \ldots, X_n, y_1, \ldots, y_m)$, a graph $G$, vertices $u_1,\ldots,u_m$ of $G$ and sets of vertices $U_1,\ldots,U_n$ of $G$, we write $G \models 
\varphi[U_1,\ldots,U_n,u_1,\ldots,u_m]$ to denote that $G$ 
\emph{satisfies} $\varphi$ when the free variable $X_i$, $i \in [n]$, is interpreted as the set of vertices $U_i$ and the free variable $x_j$, $j \in [m]$, is interpreted as the vertex $u_j$. For a closed formula $\varphi$, we simply write $G \models \varphi$. The \emph{MSO-theory} of $G$ is the set of closed formulas satisfied by $G$. 

For example, the MSO formula 
\[
 \varphi(x,y) = \forall X. \left[\left( \forall x \forall y ( x \in X
\wedge x \era{a} y) \Rightarrow y \in X \right) \wedge x \in X \right] \Rightarrow y \in X
\]
expresses that there exists a path from $x$ to $y$ whose edges are labelled $a$. 

\subsection{Graph transformations}\label{Sect:graph-transformations}

In the remainder of the paper, we will use several transformations of edge-labelled graphs, namely the unfolding of a graph, MSO-interpretations and MSO-transductions.

The \emph{unfolding}  $\Unf(G,r)$ of a graph $G$ from one of its vertices $r$ is the tree whose vertices are the paths in $G$ starting from $r$ and with an edge labelled $a$ between 
two such paths $\pi$ and $\pi'$ if $\pi'$ extends $\pi$ by exactly one edge labelled $a$.

An \emph{MSO-interpretation} is a graph transformation specified using MSO-formu\-las. An MSO-interpretation 
 is given by a tuple of MSO-formulas of the form $(\delta(x),(\varphi_b(x,y))_{b \in B})$, where $B$ is a set of labels. This interpretation when applied to a graph $G$ will produce a graph, denoted $\mathcal{I}(G)$, whose edges have labels in $B$. The set of vertices of $\mathcal{I}(G)$ is  the subset of the vertices of $G$ satisfying the formula $\delta(x)$. For each edge label $b \in B$, 
 there is an edge from $u$ to $v$ labelled by $b$ in $\mathcal{I}(G)$ if and only if $G$ satisfies $\varphi_b[u,v]$. More formally, the graph $(V',E')$ obtained by applying $\mathcal{I}$ to a graph $(V,E)$ is such that:
 \[
 \begin{array}{lcl}
 	V' & = & \{ u \in V \mid G \models \delta[u] \} \\
 	E' & = & \{ (u,b,v) \in V' \times B \times V' \mid G \models \varphi_b[u,v]\}. \\ 
 \end{array}
 \]
 
 MSO-interpretations cannot increase the number of vertices  of a graph. To overcome
this weakness the notion of a transduction was introduced by Courcelle in \cite{Courcelle94}. The idea is to perform an operation that will increase the number of vertices before applying the MSO-interpretation.
Let $K=\{k_1,\ldots,k_m\}$ be a finite set of edge labels.  A \emph{$K$-copying operation} applied to a graph $G$
adds, to every vertex of $G$, $m$ outgoing arcs respectively labelled by
$k_1,\ldots k_{m-1}$ and $k_m$ all going to fresh vertices.  An
\emph{MSO-transduction} is a $K$-copying operation
followed by an MSO-interpretation.

\subsection{Algebraic objects and functors}
\label{sec-algebraic objects}

We start by introducing the types of recursion schemes studied in this paper. 
\begin{defi}
\label{def: recursion scheme}
Let $\Sigma$ be a signature. A \textbf{$\Sigma$-recursion scheme}, or
\textbf{recursion scheme over $\Sigma$}, is a sequence $E$ of equations
\begin{eqnarray}
\label{eq:scheme}
\nonumber
F_1(v_1,\ldots,v_{k_1}) &=&  t_1 \\
 & \vdots & \\
\nonumber
 F_n(v_1,\ldots,v_{k_n}) &=& t_n
\end{eqnarray}
where each $t_i$ is a term over the signature $\Sigma_\Phi = \Sigma \cup \Phi$
in the variables $v_1,\ldots,v_{k_i}$, and $\Phi$ contains the symbols
$F_i$ (sometimes called `functor variables' or `function variables") 
of rank $k_i$, $i \in[n]$. 
A $\Sigma$-recursion scheme is \textbf{regular} if $k_i=0$, for 
{each}
$i \in [n]$.
\end{defi}

Suppose that $C$ is a continuous categorical $\Sigma$-algebra.
 Define
\begin{eqnarray*}
C^{r(\Phi)} &=&   [C^{k_1} \to C]\times \cdots \times [C^{k_n} \to C].
\end{eqnarray*}
Then $C^{r(\Phi)}$ is a continuous categorical $\Sigma$-algebra, as
noted in Section~\ref{Sect:contalg}.

When each $F_i$, $i \in [n]$, is interpreted as a continuous functor
$f_i: C^{k_i} \to C$, each term over the extended signature
$\Sigma_\Phi =\Sigma \cup \Phi $ in the variables $v_1,\ldots,v_m$
induces a continuous functor $C^m \to C$ that we denote by
$t^C(f_1,\ldots,f_n)$.  In fact, $t^C$ is a continuous functor
\begin{eqnarray*}
t^C &:& C^{r(\Phi)} \to [C^m \to C].
\end{eqnarray*}
More precisely, we define $t^C$ as follows. Let $f_i,g_i$
denote continuous functors $C^{k_i}\to C$, $i\in [n]$,
and let $\alpha_i$ be a natural transformation $f_i \to g_i$
for each $i\in [n]$. 
When $t$ is the variable $v_i$, say, then $t^C(f_1,\ldots,f_n)$ is the $i$th 
projection functor $C^m \to C$, and $t^C(\alpha_1,\ldots,\alpha_n)$ 
is the identity natural transformation corresponding to this projection 
functor. Suppose now that $t$ is of the form $\sigma(t_1,\ldots,t_k)$, 
where $\sigma \in \Sigma_k$ and $t_1,\ldots,t_k$ are terms. Then 
$t^C(f_1,\ldots,f_n) = \sigma^C \circ \langle h_1,\ldots,h_k \rangle$
and $t^C(\alpha_1,\ldots,\alpha_n) = \sigma^C \circ \langle \beta_1,\ldots,\beta_k\rangle $,
where $h_j = t_j^C(f_1,\ldots,f_n)$ and $\beta_j =
t_j^C(\alpha_1,\ldots,\alpha_n)$ for all $j\in [k]$. (Here, we use the
same notation for a functor and the corresponding identity natural
transformation.)   Finally, when $t$ is of the form
$F_i(t_1,\ldots,t_{k_i})$, then $t^C = f_i \circ \langle
h_1,\ldots,h_{k_i} \rangle$, and the corresponding natural
transformation is $\alpha_i \circ \langle \beta_1,\ldots,\beta_{k_i}
\rangle$, where the $h_j$ and $\beta_j$, $j \in [k_i]$, are defined
similarly as above.

Note that if each $\alpha_i: f_i \to f_i$ is an identity 
natural transformation (so that $f_i = g_i$, for all $i \in [n]$), 
then $t^C(\alpha_1,\ldots,\alpha_n)$ is the identity 
natural transformation 
$t^C(f_1,\ldots,f_n) \to t^C(f_1,\ldots,f_n)$.

In any continuous categorical $\Sigma$-algebra $C$, 
by target-tupling the functors $t_i^C$ associated with the right-hand sides of the equations in a recursion scheme $E$ as in (\ref{eq:scheme}), we obtain a continuous functor
\begin{eqnarray*}
  E^{C}: C^{r(\Phi)} & \to & C^{r(\Phi)}.
\end{eqnarray*}
Indeed,  $t_i^{C}:C^{r(\Phi)} \to [C^{k_i} \to C]$, for $i \in [n]$, so that
\begin{eqnarray*}
E^{C}=\langle t_1^{C},\ldots,t_n^{C} \rangle : C^{r(\Phi)} & \to & C^{r(\Phi)}.
\end{eqnarray*}
Thus, $E^{C}$ has an initial fixed point in $C^{r(\Phi)}$, unique up to 
natural isomorphism, that we denote by
\begin{eqnarray*}
  |E^C| &=&  (|E|_1^C,\ldots,|E|_n^C), 
\end{eqnarray*}
so that, in particular,
\begin{eqnarray*}
  |E|_i^C &=& t_i^C(|E|_1^C,\ldots,|E|_n^C),
\end{eqnarray*}
at least up to isomorphism, for each $i \in [n]$.
 
 It is well
known that the initial fixed point $|E|^C$ can be `computed' in the
following way. 
 Let $g_0 = ( g_{0,1},\ldots,g_{0,n} ) $, where, for
each $j \in [n]$, 
 $g_{0,j} = 0^{[C^{k_j} \to C]}$ is the constant
functor $C^{k_j} \to C$ 
 determined by the object $0^C$. Then, for
each $i \geq 0$, define 
 $g_{i+1} = (g_{i+1,1},\ldots,g_{i+1,n})$,
where $g_{i+1,j} 
 = t_j^C(g_i)$ for all $j\in [n]$. 
 Next, let
$\phi_0 = (\phi_{0,1},\ldots,\phi_{0,n})$, 
 where $\phi_{i,j}$,
$j\in [n]$, is the unique 
 natural transformation $0^{[C^{k_j} \to
    C]} \to g_{1,j}$.
 Moreover, for each $i \geq 0$, define 
$\phi_{i+1} = (\phi_{i+1,1},\ldots,\phi_{i+1,n})$
 as the natural
transformation $\phi_{i+1,j} = t_j^C(\phi_i)$. 
 Then $|E|^C$ is the
colimit of the $\omega$-diagram $(\phi_{i}: g_{i} \to g_{i+1})_{i \geq
  0}$
 in the continuous functor category $C^{r(\Phi)}$.

\begin{defi}
Suppose that $C$ is a continuous categorical $\Sigma$-algebra.
We call a functor $f: C^m \to C$
\textbf{$\Sigma$-algebraic}, if there is a recursion scheme $E$
such that $f$  is isomorphic to $|E|_1^C$, the first component 
of the above-mentioned initial solution of $E$. 
When $m = 0$, we identify a $\Sigma$-algebraic 
functor with a \textbf{$\Sigma$-algebraic object}. 
Last, a \textbf{$\Sigma$-regular object} is an object
isomorphic to the first component of the initial 
solution of a $\Sigma$-regular recursion scheme. 
\end{defi}

In particular, we get the notions of $\Gamma$-algebraic and $\Gamma$-regular 
trees, and $\Delta$-algebraic and $\Delta$-regular trees.

\begin{rem} 
Suppose that $C$ is the continuous categorical $\Sigma$-algebra $\Sta$, where 
$\Sigma$ is either $\Gamma$ or $\Delta$, and consider a recursion
scheme $E$ of the form (\ref{eq:scheme}). If the continuous functors $g_1: \Sta^{k_1} \to \Sta$, $\ldots$, 
$g_n: \Sta^{k_n} \to \Sta$ preserve injective synchronization tree morphisms
(in each variable),  then so does each $t_j(g): \Sta^{r(\Phi)} \to \Sta^{[k_j \to k]}$,
where $g = (g_1,\ldots,g_n)$. It follows by induction that the functors 
$g_{i,j}$ preserve injective synchronization tree morphisms. moreover, 
the components of the natural transformations $\phi_{i,j}$ are injective 
synchronization tree morphisms. Thus, when $k_1 = 0$, so that the recursion
scheme defines a tree, each $\phi_{i,1}$ is an embedding of the tree $g_{i,1}$ 
into $g_{i+1,1}$. If we represent the trees $g_{i,1}$ so that these 
embeddings are inclusions, then the colimit $|E|_1^{\Sta}$ of the $\omega$-diagram 
$(\phi_{i,1} : g_{i,1} \to g_{i+1,1})_{i \geq 0}$ becomes the union 
of the trees $g_{i,1}$, $i \geq 0$.
\end{rem}

\begin{rem}
\label{rem:arbitrary-rank-sum}
  It is sometimes convenient to add to the signature $\Gamma$ sums of arbitrary
  nonzero rank.  For this, we consider the signature\footnote{Although
    the signature is infinite, we will always only use a finite subset
    of it.}  $\tilde{\Gamma}= \{ +^{n} \mid n \geq 1 \} \cup A \cup
  \{0,1\} $ where the rank of each symbol $+^n$ is $n$.  The
  synchronization tree associated with a $\tilde{\Gamma}$-term tree is
  defined similarly as the synchronization tree associated to a
  $\Gamma$-term tree. The $\tilde{\Gamma}$-algebraic synchronization trees 
  are $\Gamma$-algebraic.

  Indeed we can transform an algebraic $\tilde{\Gamma}$-scheme into a
  $\Gamma$-scheme defining the same synchronization tree by replacing
  every occurrence of a subterm of the form $+^{n}(t_{1},\ldots,t_{n})$ on the right-hand side of an equation
   by $((t_{1} + t_{2}) +
  t_{3}) \ldots + t_{n}$ for $n \geq 3$, every subterm of the form
  $t_{1} +^{2} t_{2}$ by $t_{1} + t_{2}$, and finally every subterm of
  the form $+^{1}(t_{1})$ by $t_{1} + 0$.
\end{rem}

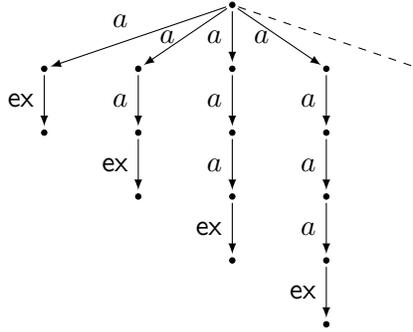
\begin{figure}
  \begin{center}
    \begin{tikzpicture}[transform shape,scale=1.0,edge from parent/.style={draw,-latex}]
 \tikzstyle{n}=[circle, draw=black, fill=black, inner sep = 0.25mm, outer sep= 0.5mm]
 \tikzstyle{e}=[ inner sep = 0mm, outer sep= 0mm]
 \tikzstyle{l}=[font=\tiny,label distance=-1mm]
 \tikzstyle{level}=[level distance=0.85cm,sibling distance=1cm]
 \tikzstyle{level 1}=[level distance=0.85cm,sibling distance=1.25cm]

\node[n] {}
  child { node[n] {}
    child { node[n] {}
    edge from parent node[draw=none,left] {$\ex$}}
  edge from parent node[draw=none,above left] {$a$}}
  child { node[n] {}
    child { node[n] {}
      child { node[n] {}
      edge from parent node[draw=none,left] {$\ex$}}
    edge from parent node[draw=none,left] {$a$}}
  edge from parent node[draw=none,left] {$a$}}
  child { node[n] {}
    child { node[n] {}
      child { node[n] {}
        child { node[n] {}
        edge from parent node[draw=none,left] {$\ex$}}
      edge from parent node[draw=none,left] {$a$}}
    edge from parent node[draw=none,left] {$a$}}
  edge from parent node[draw=none,left] {$a$}}
  child { node[n] {}
    child { node[n] {}
      child { node[n] {}
        child { node[n] {}
          child { node[n] {}
          edge from parent node[draw=none,left] {$\ex$}}
        edge from parent node[draw=none,left] {$a$}}
      edge from parent node[draw=none,left] {$a$}}
    edge from parent node[draw=none,left] {$a$}}
  edge from parent node[draw=none,left] {$a$}}
  child { node[e] {}
  edge from parent[dashed,-]}
;

\end{tikzpicture}
    
  \end{center}

\caption{An infinitely branching synchronization tree}
\label{Fig:anyfinitedepth}
\end{figure}

\begin{exa}\label{Ex:anyfinitedepth}
The following $\Delta$-regular recursion scheme 
\begin{equation}
\label{Eqn:Deltascheme}
X = (X \cdot a ) + a 
\end{equation}
has the infinitely branching tree $\sum_{i\geq 1} a^i$ depicted on
Figure~\ref{Fig:anyfinitedepth} as its initial solution. That tree is
therefore $\Delta$-regular. Note that the tree $\sum_{i\geq 1} a^i$ is
also $\Gamma$-algebraic because it is the initial solution of the
following $\Gamma$-algebraic recursion scheme
\begin{eqnarray*}
F_1 & = & F_2(a) \\
F_2(v)  & = & v + F_2(a.v) . 
\end{eqnarray*}
(Recall that we use $a$ as an abbreviation of $a.1$.) The path language associated with the tree $\sum_{i\geq 1} a^i$ is $\{a^i \mid i \geq 1\}$. Note that the subtrees of that tree whose roots are children of the root of $\sum_{i\geq 1} a^i$ are pairwise inequivalent modulo language equivalence. So
$\Gamma$-algebraic recursion schemes can be used to define infinitely branching
trees that have an infinite number of subtrees, even up to
bisimilarity~\cite{Mil89CC,Par81} or language equivalence. 
\end{exa}

\paragraph{Further examples of algebraic synchronization trees} 

Consider the labelled transition system (LTS)
on Figure~\ref{fig:Dyck-like}. The synchronization tree
associated with that LTS is $\Gamma$-algebraic because it is 
defined by the recursion scheme below.
\begin{eqnarray*}
F_1 & = & F_2(1) \\
F_2(v) & = & v + a. F_2(b.F_2(v))  
\end{eqnarray*}
The idea underlying the above definition is as follows. At any given
vertex in the LTS on Figure~\ref{fig:Dyck-like}, the argument $v$ of
$F_2$ denotes the subtree obtained by unfolding the LTS from that vertex
\emph{and not} taking the $a$-labelled edge as a first step. $F_2(v)$
denotes the tree obtained by unfolding the LTS at that vertex.

This algebraic recursion scheme over $\Gamma$ corresponds to the 
regular recursion scheme over $\Delta$, 
\begin{eqnarray*}
G &=& 1 + a\cdot G \cdot b \cdot G.
\end{eqnarray*}

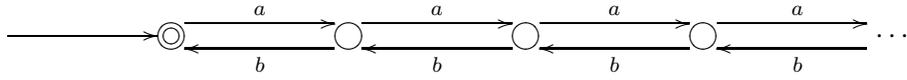
\begin{figure} 
\newcommand{\state}{*+[o][F-]{\makebox(4,4){}}}
\newcommand{\istate}{*+[o][F-]{\makebox(4,4){}}}
\newcommand{\fstate}{*+[o][F=]{\makebox(4,4){}}}
\[\xymatrix@C=12ex@R=12ex{
\ar[r] & \fstate{} \ar@<1ex>[r]^a & \state{}  \ar@<1ex>[r]^a \ar@<1ex>[l]^b  & \state{}  \ar@<1ex>[r]^a \ar@<1ex>[l]^b & \state{}  \ar@<1ex>[r]^a \ar@<1ex>[l]^b & \cdots  \ar@<1ex>[l]^b
}
\]
\caption{An LTS accepting the Dyck language}
\label{fig:Dyck-like}
\end{figure}

As another example, consider the LTS on Figure~\ref{fig:non-BPA2}.  This
LTS is not expressible in BPA modulo modulo bisimilarity---see~\cite[page~206,
  Example (c)]{Moller96}. On the other hand, the synchronization tree
associated with that LTS is $\Gamma$-algebraic because it is the
unique solution of the $\Gamma$-algebraic recursion scheme below.
\begin{eqnarray*}
G & = & F(1,1)  \\
F(v_1,v_2) & = & v_1 + c.v_2 + a.F(b.F(v_1,{v_2}),b.v_2)  
\end{eqnarray*}
The idea underlying the above definition is as follows. At any given
vertex in the LTS on Figure~\ref{fig:non-BPA2}, the argument $v_1$ of
$F$ denotes the subtree obtained by unfolding the LTS from the vertex
\emph{and not} taking the $a$-labelled or $c$-labelled edges as a
first step. The argument $v_2$ of $F$ instead encodes the number of
$b$-labelled edges that one must perform in a sequence in order to
terminate. $F(v_1,v_2)$ denotes the tree obtained by unfolding the LTS
from that vertex.

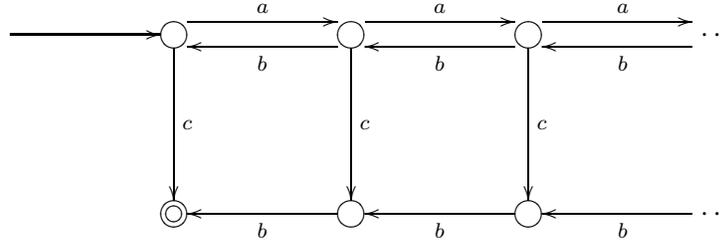
\begin{figure} 
\newcommand{\state}{*+[o][F-]{\makebox(4,4){}}}
\newcommand{\istate}{*+[o][F-]{\makebox(4,4){}}}
\newcommand{\fstate}{*+[o][F=]{\makebox(4,4){}}}
\[\xymatrix@C=12ex@R=12ex{
\ar[r] & \istate{} \ar@<1ex>[r]^a \ar[d]^c & \state{}\ar@<1ex>[l]^b \ar@<1ex>[r]^a \ar[d]^c &  \state{} \ar@<1ex>[r]^a \ar@<1ex>[l]^b \ar[d]^c & \cdots \ar@<1ex>[l]^b \\
& \fstate{} & \state{} \ar[l]^b  & \state{} \ar[l]^b & \ar[l]^b \cdots  
}
\]
\caption{An LTS that cannot be expressed in BPA, but whose
  unfolding is an algebraic synchronization tree}
\label{fig:non-BPA2}
\end{figure}

In our arguments, we will often make use of the following Mezei-Wright theorem
\cite{BEmezei}:

\begin{thm}
\label{thm-mezei}
Suppose that $h: C \to D$ is a morphism of continuous categorical $\Sigma$-algebras.
If $c$ is an algebraic (or regular) object of $C$, then $h(c)$ is an 
algebraic (or regular) object of $D$. Moreover, for every algebraic object $d$ 
of $D$ there is an algebraic object $c$ of $C$ such that $d$ is isomorphic to $h(c)$,
and similarly for regular objects. 
\end{thm}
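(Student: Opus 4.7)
The plan is to prove both claims by unwinding the colimit construction of the initial solution $|E|^C$ described immediately before the statement.

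Step 1 (key lemma, by structural induction on terms). I would first establish that, for every term $t$ over the extended signature $\Sigma_\Phi$ in $m$ variables and every choice of continuous functors $f_i : C^{k_i} \to C$, $i \in [n]$, there is a natural isomorphism
\[
h \circ t^C(f_1,\ldots,f_n) \;\cong\; t^D(\tilde f_1,\ldots,\tilde f_n) \circ h^m,
\]
where each $\tilde f_i : D^{k_i} \to D$ denotes the transport of $f_i$ along $h$ given by the ambient isomorphism $h \circ f_i \cong \tilde f_i \circ h^{k_i}$. The base case of a variable is trivial; the base case of a symbol $\sigma \in \Sigma_k$ is exactly the defining natural isomorphism of the $\Sigma$-algebra morphism $h$; the inductive step combines the isomorphisms for the immediate subterms with the naturality of $\sigma^D$. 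A second induction, of the same shape, shows that these natural isomorphisms are themselves natural in the $f_i$ and in natural transformations between them, so the whole collection assembles into a natural isomorphism between the two continuous endofunctors $h^{r(\Phi)} \circ E^C$ and $E^D \circ h^{r(\Phi)}$.

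Step 2 (preservation of algebraic and regular objects). Let $(g_i, \phi_i)_{i \ge 0}$ be the $\omega$-chain defined just before the statement whose colimit is $|E|^C$, and let $(g'_i, \phi'_i)_{i \ge 0}$ be the analogous chain for $|E|^D$. Because $h$ preserves the distinguished initial object, we have $h \circ g_0 \cong g'_0$, and by Step~1 together with the inductive definition $g_{i+1} = E^C(g_i)$, $g'_{i+1} = E^D(g'_i)$, we obtain by induction natural isomorphisms $h \circ g_i \cong g'_i$ compatible with the connecting morphisms $h(\phi_i)$ and $\phi'_i$. Since $h$ preserves colimits of $\omega$-diagrams, $h(|E|^C)$ is a colimit of $(h(\phi_i) : h(g_i) \to h(g_{i+1}))_{i \ge 0}$, which, via these isomorphisms, is also a colimit of the chain defining $|E|^D$. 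Uniqueness of colimits then gives $h(|E|^C_j) \cong |E|^D_j$ for every $j$, and in particular $h(|E|^C_1) \cong |E|^D_1$. The regular case is the specialization where every $k_i = 0$ and the argument takes place directly in $C$ rather than in a continuous functor category.

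Step 3 (essential surjectivity). Given an algebraic (respectively regular) object $d$ of $D$, pick a recursion scheme $E$ (respectively a regular scheme) with $d \cong |E|^D_1$. Setting $c = |E|^C_1$ gives an algebraic (respectively regular) object of $C$, and Step~2 yields $h(c) \cong d$, as required.

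The main obstacle is Step~1, precisely because $h$ commutes with the operations $\sigma^C$ only up to natural isomorphism rather than on the nose. One has to verify that these coherence isomorphisms are natural in all of their arguments, behave well under composition of terms, and survive passage through the $\omega$-colimits used to define $|E|^C$ and $|E|^D$. Once the coherence in Step~1 is in place, Steps~2 and~3 reduce to a standard application of the continuity of $h$ and the universal property of colimits.
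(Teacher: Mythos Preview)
The paper does not prove this theorem: it is quoted from \cite{BEmezei} and used as a black box throughout. So there is no ``paper's own proof'' to compare against; your sketch is a plausible reconstruction of how such a result is established.

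On the substance, your outline is essentially correct, but two points deserve tightening. First, in Step~1 you speak of the ``transport'' $\tilde f_i$ of an \emph{arbitrary} continuous functor $f_i: C^{k_i}\to C$ along $h$, defined by $h\circ f_i \cong \tilde f_i \circ h^{k_i}$. For a general $h$ (not essentially surjective) such a $\tilde f_i$ need not exist; what you actually need, and what suffices, is the conditional lemma: \emph{if} each $f_i$ admits a transport $\tilde f_i$, \emph{then} $t^C(f_1,\dots,f_n)$ admits the transport $t^D(\tilde f_1,\dots,\tilde f_n)$. This is exactly what the structural induction proves, and it is enough for Step~2 because the Kleene iterates $g_{i,j}$ do admit transports (namely the $g'_{i,j}$), starting from the base case where $h$ preserves the initial object. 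Second, your conclusion ``$h(|E|^C_j)\cong |E|^D_j$ for every $j$'' does not typecheck when $k_j>0$: the left side is a functor $C^{k_j}\to D$ and the right a functor $D^{k_j}\to D$. The correct invariant is $h\circ |E|^C_j \cong |E|^D_j \circ h^{k_j}$, which specializes to the desired isomorphism of objects precisely when $k_j=0$ (in particular for $j=1$). With these two adjustments your Steps~2 and~3 go through as written.
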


\subsection{Continuous ordered algebras}\label{Sect:coa}

An important subclass of continuous categorical algebras is formed
by the continuous \emph{ordered} algebras, which constitute the classical 
framework for algebraic semantics \cite{Courcelle83,CourcelleN76,GoguenTWW77,Gue81,Nivat75,Scott71}. 
We say that a continuous categorical $\Sigma$-algebra $\cca$ is ordered if 
its underlying category is a poset 
category, so that there is at most one morphism between any two objects.
The objects of a continuous ordered algebra are usually called its elements
and, as usual, the categorical structure is replaced by a partial order $\leq$. 
The assumptions that $\cca$ has initial object
and colimits of $\omega$-diagrams correspond to the requirements that there is 
a least element $\bot$ and each $\omega$-chain $(a_n)_n$ has a supremum,
denoted $\sup_n a_n$. Thus, $\cca$ is an \emph{$\omega$-complete poset} as is 
each finite power $\cca^n$ of $\cca$, equipped with the pointwise order. 
A continuous functor $\cca^n \to \cca$ is simply a function that preserves the suprema
of $\omega$-chains. Every such continuous function preserves the 
partial order. Continuous functions are ordered pointwise. As is well known, if $\cca$ and $\ccb$ 
are $\omega$-complete posets, then so is the poset $[\cca \to \ccb]$ of all continuous 
functions $\cca \to \ccb$.

Suppose that $A$ is an alphabet and recall the definition of the signatures $\Gamma$ and $\Delta$.
An example of a continuous ordered $\Delta$-algebra is the 
language semiring $P(A^*)$ of all subsets of $A^*$, ordered by set inclusion, where each letter $a\in A$ 
denotes the set $\{a\}$ and $0$ and $1$ are interpreted as the 
empty set and the set $\{\epsilon\}$ containing only the empty word $\epsilon$, and where 
$L_1 + L_2 = L_1 \cup L_2$ and $L_1\cdot L_2 = L_1L_2$ is the concatenation
of $L_1$ and $L_2$, for all $L_1,L_2 \subseteq A^*$. Alternatively, we may 
view $P(A^*)$ as a continuous ordered $\Gamma$-algebra with $a(L) = \{a\}\cdot L$
for all $a\in A$ and $L\subseteq A^*$.

The initial continuous $\Sigma$-algebra may be described as the algebra 
of \emph{$\Sigma$-term trees}. A term tree may be represented as (the isomorphism 
class of) a possibly infinite directed ordered tree whose vertices are 
labelled with the letters in $\Sigma$ such that a vertex 
labelled in $\Sigma_n$ has exactly $n$ successors. In particular, 
vertices labelled in $\Sigma_0$ are leaves. Finite term trees over $\Sigma$ may be identified
with variable-free $\Sigma$-terms. As usual, we identify each symbol
in $\Sigma_0$ with a term.

Let $T_\Sigma^\omega$ denote the set of all $\Sigma$-term trees. We equip 
$T_\Sigma^\omega$ with a partial order defined by $t \leq t'$ iff $t'$ 
can be constructed from $t$ by replacing some leaves labelled $\bot$ by 
some term trees in $T_\Sigma^\omega$. It is well-known 
(see e.g. \cite{Courcelle83,GoguenTWW77,Gue81}) that this relation 
turns $T_\Sigma^\omega$ into an $\omega$-complete partial order with least 
element $\bot$. We may further turn $T_\Sigma^\omega$ into a continuous 
ordered $\Sigma$-algebra. For each $\sigma \in \Sigma_n$ and term trees $t_1,\ldots,t_n$,
the tree $\sigma(t_1,\ldots,t_n)$ is defined as usual as the tree whose 
root is labelled $\sigma$ which has $n$ subtrees isomorphic 
in order to $t_1,\ldots,t_n$. The following fact is known, see e.g. \cite{BEmezei}.

\begin{prop}
\label{prop-freeness}
For every continuous categorical $\Sigma$-algebra $\cca$ there is, 
up to natural isomorphism, a unique continuous categorical $\Sigma$-algebra
morphism $T_\Sigma^\omega \to \cca$. 
\end{prop}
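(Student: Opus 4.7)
\medskip

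\noindent\emph{Proof proposal.} The plan is to construct a morphism $h : T_\Sigma^\omega \to \cca$ by first defining it on finite term trees by structural induction, extending it to infinite term trees by $\omega$-colimits (which exist by assumption), and then verifying that any two such morphisms are naturally isomorphic. Since $T_\Sigma^\omega$ is a poset category, a functor from $T_\Sigma^\omega$ consists of an object assignment together with compatible morphisms for each pair $t \leq t'$; I will handle both simultaneously.

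First I would define $h$ on finite $\Sigma$-term trees by induction on height: set $h(\bot) = \bot^\cca$, and set $h(\sigma(t_1,\ldots,t_n)) = \sigma^\cca(h(t_1),\ldots,h(t_n))$ for $\sigma \in \Sigma_n$. For the morphism part, if $t \leq t'$ is a relation between finite trees, define the morphism $h(t) \to h(t')$ in $\cca$ by induction on $t$: when $t = \bot$ use the unique morphism from $\bot^\cca$; when $t = \sigma(t_1,\ldots,t_n) \leq \sigma(t_1',\ldots,t_n') = t'$ (necessarily with matching root labels), apply the functor $\sigma^\cca$ to the tuple of inductively defined morphisms $h(t_i) \to h(t_i')$. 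Functoriality on finite trees follows from functoriality of each $\sigma^\cca$ and uniqueness of the morphism out of $\bot^\cca$.

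Next, for an arbitrary $t \in T_\Sigma^\omega$, let $t^{(k)}$ denote the truncation of $t$ at depth $k$, obtained by relabelling all vertices at depth $k$ by $\bot$ and deleting their descendants. Then $(t^{(k)})_{k \geq 0}$ is an $\omega$-chain of finite term trees whose supremum in $T_\Sigma^\omega$ is $t$, and the inclusions furnish an $\omega$-diagram $(h(t^{(k)}) \to h(t^{(k+1)}))_{k}$ in $\cca$. Define $h(t)$ to be a choice of colimit of this diagram, together with the canonical cocone maps $h(t^{(k)}) \to h(t)$. For a relation $t \leq t'$ between arbitrary term trees, one has $t^{(k)} \leq t'^{(k)}$ for each $k$, so the compatible family of morphisms $h(t^{(k)}) \to h(t'^{(k)}) \to h(t')$ induces a unique mediating morphism $h(t) \to h(t')$ out of the colimit; functoriality is immediate from uniqueness of the mediating maps. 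To see that $h$ is a $\Sigma$-algebra morphism, note that for $\sigma \in \Sigma_n$ and trees $t_1,\ldots,t_n$, the tree $\sigma(t_1,\ldots,t_n)$ is (up to a straightforward reindexing of truncations) the colimit of $\sigma(t_1^{(k)},\ldots,t_n^{(k)})$; applying continuity of $\sigma^\cca$ in each argument gives a natural isomorphism $h(\sigma(t_1,\ldots,t_n)) \cong \sigma^\cca(h(t_1),\ldots,h(t_n))$, which is exactly the commutativity (up to natural iso) required. Preservation of the initial object is by construction, and preservation of $\omega$-colimits follows because a colimit of an $\omega$-diagram $(t_n \to t_{n+1})_n$ in $T_\Sigma^\omega$ admits a cofinal reindexing by the truncations used to define $h$, combined with the standard fact that a colimit of colimits is a colimit of the diagonal.

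For uniqueness, suppose $h'$ is another continuous categorical $\Sigma$-algebra morphism $T_\Sigma^\omega \to \cca$. I would build a natural isomorphism $\eta : h \Rightarrow h'$ by induction. On finite trees, $\eta_\bot$ is the unique iso between initial objects, and $\eta_{\sigma(t_1,\ldots,t_n)}$ is the composite of $\sigma^\cca(\eta_{t_1},\ldots,\eta_{t_n})$ with the natural isomorphism witnessing that $h'$ preserves $\sigma$; naturality with respect to the poset relation follows from induction and the naturality of the iso for $h'$. For infinite $t$, use that both $h(t)$ and $h'(t)$ are colimits of the $\omega$-diagrams induced by truncations (by continuity of $h'$), and let $\eta_t$ be the unique induced iso between these colimits. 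The hardest part will be bookkeeping the coherence of the natural isomorphisms witnessing commutativity of the $\sigma$-squares and verifying that they interact correctly with the colimit definitions when building $\eta$, so that naturality is genuinely preserved under the passage from finite approximations to their colimits; this is where continuity of $\sigma^\cca$ in each argument and the universal property of colimits must be invoked together carefully.
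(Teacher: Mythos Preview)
The paper does not actually prove this proposition: it is stated as a known fact with a citation to \cite{BEmezei}, and no proof is given. However, the remark immediately following the proposition sketches exactly the construction you propose---define $h$ on finite trees by preservation of the operations and of the initial object, and on an infinite tree as the colimit of the images of its depth-$n$ truncations---so your approach is the expected one and matches the paper's informal description.
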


Sometimes it is convenient to represent a term tree in $T_\Sigma^\omega$
as a partial function $t: \N^* \to \Sigma$ whose domain is a prefix closed 
nonempty subset of $\N^*$, where $\N$ denotes the set of positive integers, 
such that whenever $t(u) \in \Sigma_n$,
then for each $i \in \N$, $t(ui)$ is defined if and only if $i \in [n]$. 
The operations are defined in the usual way. For trees $t,t'\in T_\Sigma^\omega$,
we have $t \leq t'$ if and only if for all $u \in \N^*$, either $t(u) = t'(u)$ 
or $t(u) = \bot$. See \cite{Gue81,Nivat75,Scott71} for details.

Yet another 
representation of a term tree in $T_\Sigma^\omega$ is by edge-labelled 
trees. To this end, let $\overline{\Sigma}$ denote the ordinary alphabet whose letters
are the symbols $\sigma_i$ where $\sigma\in \Sigma_n$, $n > 0$ and $i \in [n]$.
When $t$ is a $\Sigma$-term tree, each vertex $u$ may be `addressed" by a word 
 $\overline{u} \in \overline{\Sigma}^*$ which encodes the unique path from the 
root to that vertex. The edge-labelled tree corresponding to $t$ has as its vertex sets 
the addresses of its vertices, together with a vertex $\overline{u}\sigma$ 
whenever $\overline{u}$ is the address of a leaf labelled $\sigma$. 
The edges are given 
as follows. Suppose that $u$ and $v$ are vertices of $t$ with associated addresses 
$\overline{u}$ and $\overline{v}$, respectively. If $v$ is the $i$th successor of $u$ 
and $u$ is labelled $\sigma \in \Sigma_n$ (so that $n > 0$ and $i \in [n]$), then 
there is an edge from $\overline{u}$ to $\overline{v}$ labelled $\sigma_i$. 
Moreover, if $u$ is a leaf vertex of $t$, labelled $\sigma \in \Sigma_0$, then 
there is an edge labelled $\sigma$ from $\overline{u}$ to $\overline{u}\sigma$.

Consider, for example, the following term tree over $\Gamma$.

\begin{center}
    \begin{tikzpicture}[transform shape,scale=1.0,edge from parent/.style={draw,-latex}]
 \tikzstyle{n}=[circle, draw=black, fill=black, inner sep = 0.25mm, outer sep= 0.5mm]
 \tikzstyle{m}=[inner sep = 0.25mm, outer sep= 0.5mm]
 \tikzstyle{e}=[ inner sep = 0mm, outer sep= 0mm]
 \tikzstyle{l}=[font=\footnotesize,label distance=-1mm]
 \tikzstyle{level}=[level distance=0.85cm,sibling distance=1cm]
 \tikzstyle{level 1}=[level distance=0.85cm,sibling distance=1.75cm]
 \tikzstyle{level 2}=[level distance=0.85cm,sibling distance=1.75cm]

\node[m,label={[l]right:{$\epsilon$}}] {$+$}
  child { node[m,label={[l]left:{$+_1$}}] {$a$}
  	child {node[m,label={[l]left:{$+_1a_1$}}] {$0$}}}	
  child { node[m,label={[l]right:{$+_2$}}] {$+$}
    child { node[m,label={[l]right:{$+_2 +_1$}}] {$a$}%
    child {node[m,label={[l]left:{$+_2 +_1 a_1$}}] {$0$}}}
    child { node[m,label={[l]right:{$+_2 +_2$}}] {$a$}%
    child {node[m,label={[l]right:{$+_2 +_2 a_1$}}] {$1$}}}
};
\end{tikzpicture}
\end{center}

The resulting edge-labelled tree is
\begin{center}
    \begin{tikzpicture}[transform shape,scale=1.0,edge from parent/.style={draw,-latex}]
 \tikzstyle{n}=[circle, draw=black, fill=black, inner sep = 0.25mm, outer sep= 0.5mm]
 \tikzstyle{m}=[inner sep = 0.25mm, outer sep= 0.5mm]
 \tikzstyle{e}=[ inner sep = 0mm, outer sep= 0mm]
 \tikzstyle{l}=[font=\footnotesize,label distance=-1mm]
 \tikzstyle{level}=[level distance=1cm,sibling distance=5cm]
 \tikzstyle{level 1}=[level distance=1cm,sibling distance=3cm]
 \tikzstyle{level 2}=[level distance=1cm,sibling distance=2.5cm]

\node[m] {$\epsilon$}
  child { node[m] {$+_1$}
  		  child { node[m] {$+_1 a_1$}
  		  		  child { node[m] {$+_1 a_1 0$}
  		  		          edge from parent node[l,left] {$0$}}
  	              edge from parent node[l,left] {$a_1$}}		
  	      edge from parent {node[l,above,xshift=-1mm] {$+_1$}}}	
  child { node[m] {$+_2$}
	  	  child { node[m] {$+_2 +_1$}
  		  		  child { node[m] {$+_2 +_1 a_1$}
  		  		          child { node[m] {$+_2 +_1 a_1 0$}
  		  		          edge from parent node[l,left] {$0$}}
  		  		          edge from parent node[l,left] {$a_1$}}
  	              edge from parent node[l,above,xshift=-1mm] {$+_1$}}		   
           child { node[m] {$+_2 +_2$}
  		  		  child { node[m] {$+_2 +_2 a_1$}
  		  		          child { node[m] {$+_2 +_2 a_1 1$}
  		  		          edge from parent node[l,left] {$1$}}
  		  		          edge from parent node[l,left] {$a_1$}}
  	              edge from parent node[l,above,xshift=1mm] {$+_2$}		
              }edge from parent {node[l,above,xshift=1mm] {$+_2$}}
              }
;
\end{tikzpicture}
\end{center}

\begin{rem} 
Suppose that $C$ is a continuous categorical $\Sigma$-algebra
and $h$ is the essentially unique continuous categorical 
$\Sigma$-algebra morphism $T_\Sigma^\omega \to C$. Then, for each 
finite tree $t\in T_\Sigma^\omega$, $h(t)$ is uniquely determined 
up to isomorphism, since $h$ preserves the operations.
Moreover, when $t,t'$ are finite with\footnote{Here, we 
denote by $t \leq t'$ also the unique morphism $t\to t'$ 
in $T_\Sigma^\omega$ seen as a category.} $t \leq t'$, $h(t \leq t')$ is 
determined by the conditions that $h(\bot)$ is initial 
and $h$ preserves the operations. When $t$ is an infinite tree,
$t$ is the supremum of an $\omega$-chain $(t_n)_n$ of finite trees---say $t_n$ is the approximation of $t$ obtained by removing 
all vertices of $t$ at distance greater than or equal to $n$ 
from the root (and relabeling vertices at distance $n$ by $\bot$). 
Then $h(t)$ is the colimit of the $\omega$-diagram $(\phi_n: h(t_n) \to h(t_{n+1}))_n$,
where $\phi_n = h(t_n \leq t_{n+1})$. When $\Sigma = \Gamma$ 
or $\Sigma = \Delta$, and $C = \Sta$, the morphisms $\phi_n$ are injective,
and thus the colimit $h(t)$ is the `union' of the $h(t_n)$.  
\end{rem}

\subsection{Morphisms}
\label{sec:morphism}

By Proposition~\ref{prop-freeness}, there is an (essentially) unique
morphism of continuous categorical $\Gamma$-algebras $T_\Gamma^\omega
\to \Sta$, as well as an essentially unique continuous categorical
$\Delta$-algebra morphism $T_\Delta^\omega \to \Sta$. In the first
part of this section, we provide a combinatorial description of (the
object part) of these morphisms. In the second part of the section, we
will consider morphisms from $\Sta$ to $P(A^*)$, seen as a
$\Gamma$-algebra or a $\Delta$-algebra.

We start by describing the (object part of the) essentially 
unique continuous categorical $\Delta$-algebra morphism
$T_\Delta^\omega \to \Sta$. We will call the image $t'$ of a 
term tree $t$ under this morphism the 
\emph{synchronization tree denoted by} $t$,
or the \emph{synchronization tree associated with} $t$. 

Suppose that $t$ is a $\Delta$-term tree and $u$ and $v$ are 
leaves of $t$, labelled in $A \cup\{1\}$. 
Let $p$ (resp. $q$) denote the sequence of vertices 
along the unique path from the root to $u$ (resp. $v$),  
including $u$ (resp. $v$).
We say that $v \in S_t(u)$ if $p$ and $q$ are of the form $p= r w w_1 p'$
and $q = r w w_2 q'$ with $w$ labelled by $\cdot$ 
and successors $w_1,w_2$, ordered as indicated, such that
\begin{itemize} 
\item 
every vertex appearing in $p'$ which is different 
from $u$ is either labelled $+$, or if it is labelled $\cdot$ 
then its second successor belongs to $p'$, and 
\item 
every vertex appearing in $q'$ which is different 
from $v$ is either labelled $+$, or if it is labelled $\cdot$ 
then its first successor belongs to $q'$.
\end{itemize}  
We say that $u \in M(t)$ if each vertex appearing in $p$ 
which is different from $u$ is either labelled $+$, 
or if it is labelled $\cdot$, then its first successor
belongs to $p$.  
Finally, we say that $u\in E(t)$ if  
each vertex appearing in $p$ 
which is different from $u$ is either labelled $+$, 
or if it is labelled $\cdot$, then its second successor
appears in $p$. 
Note that if $u\in M(t)$ then there is no 
$w$ such that $u \in S_t(w)$, and similarly, 
if $u \in E(t)$ then $S_t(u) = \emptyset$.

Now form the edge-labelled directed graph $G(t)$ whose vertices are 
a new \emph{root vertex} $v_0$, the leaves of $t$ labelled in $A \cup \{1\}$, and 
the \emph{exit vertex} denoted $*$. The edges of $G(t)$ are the following: 
\begin{itemize}
\item
 For each $u \in M(t)$, there is an edge from the root $v_0$ to  
 $u$, labelled by the label of $u$ in $t$.
\item 
 For all leaf vertices $u$ and $v$ of $t$ labelled in $A \cup \{1\}$ 
 such that  $v \in S(u)$, there is an edge in $G(t)$ from $u$ to $v$
 whose label is the same as the label of $v$ in $t$.
\item Whenever $u$ 
 belongs to $E(t)$,  there is an edge in $G(t)$ labelled $\ex$ 
 from $u$ to $*$. 
\end{itemize} 
Note that $G(t)$ does not contain any cycle. We denote by $\tau(t)$
the synchronization tree obtained by unfolding $G(t)$ from $v_{0}$ and
then contracting edges labelled by $1$.

In Proposition~\ref{prop-tau} below, we will prove that $\tau(t) = t'$, 
the synchronization tree denoted by $t$.
However, before doing so, we find it instructive to 
give several examples. 

As a first example, consider the term
\[t = a \cdot ((1 + a) + ((1 + 1) \cdot b))\ .\]
As seen in Figure~\ref{fig:example-morphism-1}, the graph $G(t)$ contains 8 vertices, the root $v_0$, the 
exit vertex $*$, and vertices $v_1,\ldots,v_6$ 
corresponding to the $6$ leaves of $t$. 
There is an $a$-labelled edge from $v_0$ to $v_1$,
edges from $v_1$ to $v_2$, $v_4$ and $v_5$ labelled $1$, an edge from
$v_1$ to $v_3$ labelled $a$, and edges from $v_4$ and $v_5$ to $v_6$
labelled $b$. Finally, there is an edge labelled $\ex$ from 
each of $v_2$, $v_3$ and $v_6$ to $*$. 
It is clear that $\tau(t)$ is the synchronization tree denoted by $t$.

\begin{figure}
  \begin{center}
    \begin{tikzpicture}[transform shape,scale=1.0,edge from parent/.style={draw,-latex}]
 \tikzstyle{n}=[circle, draw=black, fill=black, inner sep = 0.25mm, outer sep= 0.5mm]
 \tikzstyle{m}=[inner sep = 0.25mm, outer sep= 0.5mm]
 \tikzstyle{e}=[ inner sep = 0mm, outer sep= 0mm]
 \tikzstyle{l}=[font=\tiny,label distance=-1mm]
 \tikzstyle{level}=[level distance=0.85cm,sibling distance=1cm]
 \tikzstyle{level 1}=[level distance=0.85cm,sibling distance=1.75cm]
 \tikzstyle{level 2}=[level distance=0.85cm,sibling distance=1.75cm]

\node[m,label={[l]above:{$(v0)$}}] {$\cdot$}
  child { node[m,label={[l]below:{$(v1)$}}] {$a$}}
  child { node[m] {$+$}
    child { node[m] {$+$}
      child { node[m,label={[l]below:{$(v2)$}}] {$1$}}
      child { node[m,label={[l]below:{$(v3)$}}] {$a$}} 
    }
    child { node[m] {$\cdot$}
      child { node[m] {$+$}
        child { node[m,label={[l]below:{$(v4)$}}] {$1$}}
        child { node[m,label={[l]below:{$(v5)$}}] {$1$}}
      }
      child { node[m,label={[l]below:{$(v6)$}}] {$b$}}
    }
  }
;

\begin{scope}[xshift=4.5cm]
\node[n,label={[l]above:{$(v0)$}}] (v0) at (0,0) {};
\node[n,label={[l]left:{$(v1)$}}] (v1) at (0,-0.85cm) {};
\node[n,label={[l]left:{$(v2)$}}] (v2) at (-1.2cm,-1.7cm) {};
\node[n,label={[l]left:{$(v3)$}}] (v3) at (-0.4cm,-1.7cm) {};
\node[n,label={[l]right:{$(v4)$}}] (v4) at (0.4cm,-1.7cm) {};
\node[n,label={[l]right:{$(v5)$}}] (v5) at (1.2cm,-1.7cm) {};
\node[n,label={[l]right:{$(v6)$}}] (v6) at (0.8cm,-2.55cm) {};
\node[n,label={[l]below:{$(*)$}}] (etoile) at (0,-3.4cm) {};

\draw (v0) edge[-latex] node[left] {$a$} (v1)
(v1) edge[-latex] node[above left] {$1$} (v2)
(v1) edge[-latex] node[left] {$a$} (v3)
(v1) edge[-latex] node[right] {$1$} (v4)
(v1) edge[-latex] node[above right] {$1$} (v5)
(v4) edge[-latex] node[left] {$b$} (v6)
(v5) edge[-latex] node[right] {$b$} (v6)
(v2) edge[-latex,bend right=20] node[left] {$\ex$} (etoile)
(v3) edge[-latex,bend right=10] node[left] {$\ex$} (etoile)
(v6) edge[-latex,bend left=10] node[below right] {$\ex$} (etoile);
\end{scope}

\begin{scope}[xshift=8cm]
   \tikzstyle{level}=[level distance=0.85cm,sibling distance=0.85cm]
 \tikzstyle{level 1}=[level distance=0.85cm,sibling distance=0.85cm]

\node[n] {}
  child { node[n] {}
  edge from parent node[draw=none,left] {$\ex$}}
  child { node[n] {}
    child { node[n] {}
    edge from parent node[draw=none,left] {$\ex$}}
  edge from parent node[draw=none,left] {$a$}}
  child { node[n] {}
    child { node[n] {}
    edge from parent node[draw=none,right] {$\ex$}}
  edge from parent node[draw=none,right] {$b$}}
  child { node[n] {}
    child { node[n] {}
    edge from parent node[draw=none,right] {$\ex$}}
  edge from parent node[draw=none,right] {$b$}};
\end{scope}

\end{tikzpicture}

  \end{center}
  \caption{The term tree $a \cdot ((1 + a) + ((1 + 1) \cdot b))$ (on the left), the associated graph $G(t)$ (in the middle) and the synchronization tree $\tau(t)$ (on the right).}
\label{fig:example-morphism-1}
\end{figure}
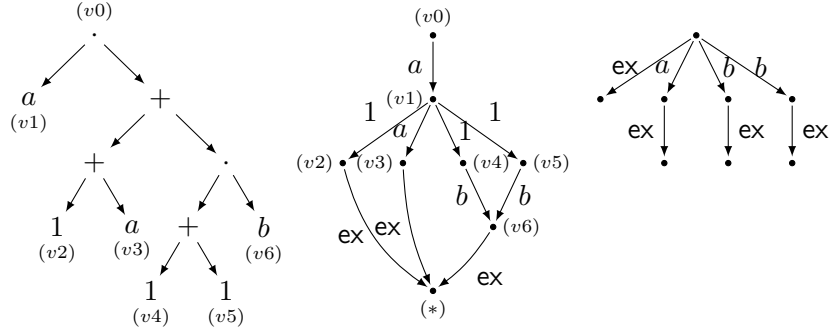

In the second example illustrated in Figure~\ref{fig:example-morphism-2}, consider the term tree $t$ defined by the scheme 
\begin{eqnarray*}
G &= &(a \cdot H) \cdot c + d\\
H &=& H \cdot 1\ .
\end{eqnarray*}
Now the root $v_{0}$ of $G(t)$ has two outgoing edges leading to different vertices $u_1$ and $u_2$, say. These edges are labelled $a$ and 
$d$, respectively. In addition, there is a sequence of vertices, 
call them $v_1,v_2,\ldots$, such that for each $i \geq 2$ there
is an edge from $v_{i+1}$ to $v_i$, which is labelled $1$, and there is 
a $c$-labelled edge from $v_2$ to $v_1$. Last, $*$ is a vertex,
and there exist edges from $u_2$ and $v_1$ to $*$ labelled $\ex$.  
$G(t)$ is infinite, but the tree $\tau(t)$ constructed from it 
is finite since $\tau(t) = (a\cdot 0)  + d$ is the synchronization 
tree denoted by $t$.

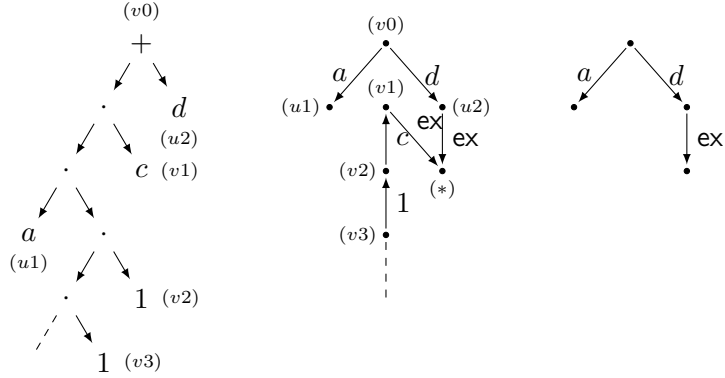
\begin{figure}
  \begin{center}
    \begin{tikzpicture}[transform shape,scale=1.0,edge from parent/.style={draw,-latex}]
 \tikzstyle{n}=[circle, draw=black, fill=black, inner sep = 0.25mm, outer sep= 0.5mm]
 \tikzstyle{m}=[inner sep = 0.25mm, outer sep= 0.5mm]
 \tikzstyle{e}=[ inner sep = 0mm, outer sep= 0mm]
 \tikzstyle{l}=[font=\tiny,label distance=-1mm]
 \tikzstyle{level}=[level distance=0.85cm,sibling distance=1cm]

\node[label={[l]above:{$(v0)$}}] {$+$}
  child { node {$\cdot$}
    child { node {$\cdot$}
      child { node[label={[l]below:{$(u1)$}}] {$a$}}
      child { node {$\cdot$}
        child { node {$\cdot$}
          child { node {} edge from parent[dashed,-]}
          child { node[label={[l]right:{$(v3)$}}] {$1$}}
        }
        child { node[label={[l]right:{$(v2)$}}] {$1$}}
      }
    }
    child { node[label={[l]right:{$(v1)$}}] {$c$}}
  }
  child { node[label={[l]below:{$(u2)$}}] {$d$}}
;

\begin{scope}[xshift=3.25cm]
\node[n,label={[l]above:{$(v0)$}}] (v0) at (0,0) {};
\node[n,label={[l]left:{$(u1)$}}] (u1) at (-0.75cm,-0.85cm) {};
\node[n,label={[l]right:{$(u2)$}}] (u2) at (0.75cm,-0.85cm) {};
\node[n,label={[l]above:{$(v1)$}}] (v1) at (0cm,-0.85cm) {};
\node[n,label={[l]left:{$(v2)$}}] (v2) at (0cm,-1.7cm) {};
\node[n,label={[l]left:{$(v3)$}}] (v3) at (0cm,-2.55cm) {};
\node[e] (v4) at (0cm,-3.4cm) {};
\node[n,label={[l]below:{$(*)$}}] (etoile) at (0.75cm,-1.7cm) {};

\draw (v0) edge[-latex] node[left] {$a$} (u1)
(v0) edge[-latex] node[right] {$d$} (u2)
(v2) edge[-latex] node[right] {$c$} (v1)
(v3) edge[-latex] node[right] {$1$} (v2)
(v4) edge[dashed] (v3)
(v1) edge[-latex] node[above right,xshift=-1mm] {$\ex$} (etoile)
(u2) edge[-latex] node[right] {$\ex$} (etoile);

\end{scope}

\begin{scope}[xshift=6.5cm]
   \tikzstyle{level}=[level distance=0.85cm,sibling distance=1.5cm]

\node[n] {}
  child { node[n] {}
  edge from parent node[draw=none,left] {$a$}}
  child { node[n] {}
    child { node[n] {}
    edge from parent node[draw=none,right] {$\ex$}}
  edge from parent node[draw=none,right] {$d$}};

\end{scope}

\end{tikzpicture}

  \end{center}
  \caption{A term tree (on the left) and the associated graph $G(t)$ (on the right).}
\label{fig:example-morphism-2}
\end{figure}

In the last example illustrated in Figure~\ref{fig:example-morphism-3}, consider the term $t = (1\cdot (0 \cdot b))\cdot 1$.
The vertices of $G(t)$ are the root $v_0$, vertices $v_1,v_2,v_3$ 
corresponding respectively to the leaves of $t$ with nonzero label, and the exit vertex $*$.  There are $3$ edges, an edge labelled $1$ from $v_0$ to $v_1$,
an edge labelled $1$ from $v_2$ to $v_3$,  and an edge labelled $\ex$ from $v_3$ to $*$.
Now $\tau(t)$ contains only the root and no edges, so that $\tau(t)$ 
is the synchronization tree $0$ denoted by $t$. 

\begin{figure}
  \begin{center}
    \begin{tikzpicture}[transform shape,scale=1.0,edge from parent/.style={draw,-latex}]
 \tikzstyle{n}=[circle, draw=black, fill=black, inner sep = 0.25mm, outer sep= 0.5mm]
 \tikzstyle{m}=[inner sep = 0.25mm, outer sep= 0.5mm]
 \tikzstyle{e}=[ inner sep = 0mm, outer sep= 0mm]
 \tikzstyle{l}=[font=\tiny,label distance=-1mm]
 \tikzstyle{level}=[level distance=0.85cm,sibling distance=1cm]

\node[label={[l]above:{$(v0)$}}] {$\cdot$}
  child { node[] {$\cdot$}
    child { node[label={[l]left:{$(v1)$}}] {$1$}}
    child { node[] {$\cdot$}
      child { node[] {$0$}}
      child { node[label={[l]below:{$(v2)$}}] {$b$}}
    }
  }
  child { node[label={[l]right:{$(v3)$}}] {$1$}}
;

\begin{scope}[xshift=6cm]
\node[n,label={[l]above:{$(v0)$}}] (v0) at (0,0) {};
\node[n,label={[l]left:{$(v1)$}}] (v1) at (-1.35cm,-0.85cm) {};
\node[n,label={[l]below:{$(v2)$}}] (v2) at (-0.45cm,-0.85cm) {};
\node[n,label={[l]below:{$(v3)$}}] (v3) at (0.45cm,-0.85cm) {};
\node[n,label={[l]right:{$(*)$}}] (etoile) at (1.35cm,-0.85cm) {};

\draw (v0) edge[-latex] node[above left] {$1$} (v1)
(v2) edge[-latex] node[above] {$b$} (v3)  
(v3) edge[-latex] node[above] {$\ex$} (etoile);
\end{scope}

\end{tikzpicture}

  \end{center}
  \caption{The term tree of $t=(1\cdot (0 \cdot b))\cdot 1$ (on the left) and the associated graph $G(t)$ (on the right).}
\label{fig:example-morphism-3}
\end{figure}
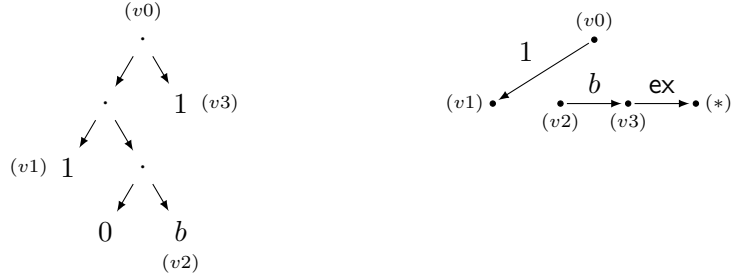

We still need to prove that the definition of $\tau(t)$ is correct.

\begin{prop}
\label{prop-tau}
\label{prop:constructing-delta-trees}
For every $\Delta$-term tree $t$, the image $t'$ of $t$ with respect to the 
essentially unique continuous $\Delta$-algebra morphism 
$T_\Delta^\omega \to \Sta$ is $\tau(t)$. 
\end{prop}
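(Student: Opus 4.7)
The plan is to identify $\tau$ as (the object part of) a continuous categorical $\Delta$-algebra morphism $T_\Delta^\omega \to \mathsf{ST}(A)$ and then invoke Proposition~\ref{prop-freeness} to conclude that $\tau(t)$ is isomorphic to $h(t)$, where $h$ denotes the essentially unique such morphism. Concretely, I would proceed by structural induction for finite term trees, followed by a continuity argument for infinite ones.

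For the base of the induction, I would verify directly from the definition of $G(t)$ that $\tau(0)$ is the one-vertex tree with no edges, $\tau(1)$ is the tree with a single $\ex$-labelled edge (after contracting the $1$-edge from $v_0$ to the sole leaf), and $\tau(a)$ for $a\in A$ is the two-edge tree with labels $a$ and $\ex$. For the inductive step on $+$, I would observe that $M(t_1 + t_2) = M(t_1) \cup M(t_2)$, $E(t_1 + t_2) = E(t_1) \cup E(t_2)$, and the relation $S_{t_1+t_2}$ is the disjoint union of $S_{t_1}$ and $S_{t_2}$, since no $\cdot$-vertex sits above the root $+$. Hence $G(t_1+t_2)$ is obtained from $G(t_1)$ and $G(t_2)$ by identifying their roots and their exit vertices; unfolding yields $\tau(t_1)+\tau(t_2)$ up to isomorphism. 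For the inductive step on $\cdot$, the key observation is that $M(t_1\cdot t_2) = M(t_1)$, that for each $u\in E(t_1)$ we have $S_{t_1\cdot t_2}(u) = M(t_2)$, and that the exit vertices of $t_1\cdot t_2$ are exactly those of $t_2$. Thus $G(t_1\cdot t_2)$ can be described as $G(t_1)$ with its exit vertex removed and each former $\ex$-predecessor wired (via $1$-edges encoding the intervening vertices) to a fresh copy of $G(t_2)$'s root. After unfolding and $1$-contraction this produces exactly $\tau(t_1)\cdot \tau(t_2)$, matching the sequential product in $\mathsf{ST}(A)$.

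For infinite term trees $t = \sup_n t_n$, where $t_n$ is the truncation at depth $n$, I would show that the embeddings $t_n \leq t_{n+1}$ induce synchronization tree morphisms $\tau(t_n)\to \tau(t_{n+1})$ whose colimit in $\mathsf{ST}(A)$ is $\tau(t)$. The crucial point is that a vertex of $G(t)$ reachable from $v_0$ by a finite directed path already lies in some $G(t_n)$ (after the relabelling of truncated leaves by $\bot$ is handled), so the unfoldings from $v_0$ stabilise, and likewise the $1$-contraction commutes with the colimit because a path of $1$-labelled edges followed by a letter in $A\cup\{\ex\}$ in $G(t)$ is witnessed by some $G(t_n)$. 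This establishes that $\tau$ preserves $\omega$-colimits; together with the base and inductive cases it shows $\tau$ is (essentially) a continuous $\Delta$-algebra morphism, and Proposition~\ref{prop-freeness} gives $\tau(t)\cong t'$.

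The main obstacle is the treatment of sequential composition in the presence of infinite $1$-chains, as illustrated by the example where $H=H\cdot 1$ produces an infinite ray of $1$-edges in $G(t)$ that, after contraction, collapses to the deadlocked behaviour $0$. A careful case analysis is required to show that $G(t)$ contains an $\ex$-labelled edge out of a vertex $u$ reachable from $v_0$ if and only if the corresponding trace in $t$ can actually reach a leaf labelled $1$ after finitely many steps; otherwise the corresponding leaf of the unfolding is a deadlock, which matches the behaviour of $\cdot$ on synchronization trees in which $0 \cdot s \cong 0$. Once this is pinned down, both the finite inductive step for $\cdot$ and the continuity argument go through cleanly.
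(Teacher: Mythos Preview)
Your proposal is correct and follows essentially the same approach as the paper: structural induction on finite $\Delta$-terms for the base and the $+$ and $\cdot$ cases, followed by a colimit argument for infinite terms using the depth-$n$ truncations. The paper does not explicitly package the argument as ``$\tau$ is a morphism, now invoke uniqueness''; instead it directly verifies $\tau(t)=t'$ using that $t' = \Colim\, t'_n$ and $\tau(t) = \Colim\, \tau(t_n)$, but the underlying computations are the same as yours.
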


\begin{proof}
Suppose first that $t$ is finite. We will prove the claim by induction on the 
structure of $t$. If $t = 0$, then $G(t)$ has two vertices, the root and the exit 
vertex, and no edges. If $t$ is a symbol in $A \cup\{1\}$, then 
$G(t)$ has three vertices, the root $v_0$, a vertex $v_1$ and the vertex $*$.
There is an edge from $v_0$ to $v_1$ and an edge from $v_1$ to $*$. 
The first edge is labelled $a$ if $t = a \in A$, and $1$ if $t = 1$.
The second edge is labelled $\ex$. In either case, $\tau(t)= t'$. 

In the induction step, first suppose that $t = t_1 + t_2$ for some terms $t_1,t_2$
denoting the synchronization trees $t_1'$ and $t_2'$, respectively. 
Then $G(t)$ is isomorphic to the disjoint union of $G(t_1)$ and $G(t_2)$ with roots 
and exit vertices merged.  Thus, $\tau(t) = \tau(t_1) + \tau(t_2) 
= t_1' + t_2' = t'$.

Suppose next that $t = t_1 \cdot t_2$ with $t_1$ denoting $t_1'$ and 
$t_2$ denoting $t_2'$. Then $G(t)$ can be constructed from $G(t_1)$ and $G(t_2)$ 
as follows. For all edges from the root $v_2$ of $G(t_2)$ to a vertex 
$v$ of $G(t_2)$ (which necessarily corresponds to a leaf  vertex of $t_2$ 
in $M(t_2)$),  and for all $u\in E(t_1)$,
add a new edge from $u$ to $v$ labelled by the symbol
which is the label of the edge from $v_2$ to $v$ in $G(t_2)$  (i.e.,  
the label of $v$  in $t_2$). 
Then remove the edge from $u$ to the exit vertex of $G(t_1)$.
Finally remove all edges originating in the root of $G(t_2)$. 
The vertices of $G(t)$ 
are the non-exit vertices of $G(t_1)$ and the non-root vertices of $G(t_2)$. 
It should be clear that $\tau(t)$ is the sequential product of 
$\tau(t_1) \cdot \tau(t_2)$ and thus $\tau(t) = t_1' \cdot t_2' = 
t'$ by the induction hypothesis.

Suppose now that $t$ is infinite. For each $n\geq 0$, let $t_n$ denote the 
approximation of $t$ obtained by relabelling each vertex of $t$ of depth $n$
by $0$ and removing all vertices of depth greater than $n$. For each 
leaf vertex $u$ of $t$ there is some $n_0$ such that $u$ is a vertex of $t_n$
with the same label for all $n \geq n_0$. Moreover, for any two leaf vertices 
$u,v$ of $t$ with a nonzero label, $v \in S_t(u)$ iff there is some $n_0$
such that $v \in S_{t_n}(u)$ for all $n \geq n_0$. Similarly, for 
each leaf $u$ of $t$ with a nonzero label, we have $u \in M(t)$ 
($u \in E(t)$) iff there is some $n_0$ such that $u \in M(t_n)$ ($u \in E(t_n)$, resp.) for all $n \geq n_0$. This implies that $G(t)$ is the union (colimit) 
of the $G(t_n)$ and then it follows that $\tau(t)$ is also the union (colimit) 
of the $\tau(t_n)$. We conclude that
$t' = \Colim\ t'_n = \Colim\ \tau(t_n) = \tau(t)$, where for each $n$, $t'_n$ is the 
synchronization tree denoted by $t_n$.
\end{proof}

Next we describe the essentially unique morphism of continuous categorical 
 $\Gamma$-algebras $T_\Gamma^\omega \to \Sta$. Fix a $\Gamma$-term tree $t\in T_\Gamma^\omega$.  We define a synchronization tree, denoted $H(t)$, which is essentially a representation of $t$ as a synchronization tree in which leaves labelled by $1$ (in $t$) are added a dangling outgoing edge labelled by $\ex$.

The vertices of the synchronization tree $H(t)$ are the vertices of 
$t$ together with a fresh vertex $u'$ for each leaf $u$ labelled by $1$.
If $u$ is labelled by $+$ in $t$ then there is in $H(t)$ an edge labelled 
by $+_{1}$ from $u$ to its first successor and an edge labelled 
by $+_{2}$ from $u$ to its second successor. If $u$ is labelled by $a \in A$ in $t$, then there is in $H(t)$ an edge labelled by $a$ from $u$ to its unique successor. Finally, 
for each leaf of $u$ labelled by $1$, there is an edge in $H(t)$ from $u$ to $u'$ labelled by $\ex$. The synchronization tree  defined by $t$ is the $\{+_{1},+_{2}\}$-contraction of $H(t)$. This construction is illustrated 
for the $\Gamma$-term $a.(c.0+d.1)+1$ in Figure~\ref{fig:example-morphism-4}.

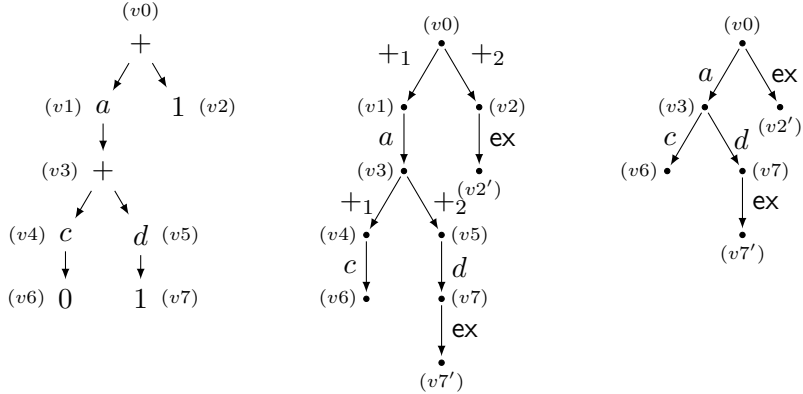
\begin{figure}
  \begin{center}
    \begin{tikzpicture}[transform shape,scale=1.0,edge from parent/.style={draw,-latex}]
 \tikzstyle{n}=[circle, draw=black, fill=black, inner sep = 0.25mm, outer sep= 0.5mm]
 \tikzstyle{m}=[inner sep = 0.25mm, outer sep= 0.5mm]
 \tikzstyle{e}=[ inner sep = 0mm, outer sep= 0mm]
 \tikzstyle{l}=[font=\tiny,label distance=-1mm]
 \tikzstyle{level}=[level distance=0.85cm,sibling distance=1cm]

\node[label={[l]above:{$(v0)$}}] {$+$}
  child { node[label={[l]left:{$(v1)$}}] {$a$}
    child { node[label={[l]left:{$(v3)$}}] {$+$}
      child { node[label={[l]left:{$(v4)$}}] {$c$}
        child { node[label={[l]left:{$(v6)$}}] {$0$}}
      }
      child { node[label={[l]right:{$(v5)$}}] {$d$}
        child { node[label={[l]right:{$(v7)$}}] {$1$}}
      }
    }
  }
  child { node[label={[l]right:{$(v2)$}}] {$1$}}
;

\node[n,label={[l]above:{$(v0)$}},xshift=4cm] {}
  child { node[n,label={[l]left:{$(v1)$}}] {}
    child { node[n,label={[l]left:{$(v3)$}}] {}
      child { node[n,label={[l]left:{$(v4)$}}] {}
        child { node[n,label={[l]left:{$(v6)$}}] {}
        edge from parent node[draw=none,left] {$c$}}
      edge from parent node[draw=none,left] {$+_1$}}
      child { node[n,label={[l]right:{$(v5)$}}] {}
        child { node[n,label={[l]right:{$(v7)$}}] {}
          child { node[n,label={[l]below:{$(v7')$}}] {}
          edge from parent node[draw=none,right] {$\ex$}}
        edge from parent node[draw=none,right] {$d$}}
      edge from parent node[draw=none,right] {$+_2$}}
    edge from parent node[draw=none,left] {$a$}}
  edge from parent node[draw=none,above left] {$+_1$}}
  child { node[n,label={[l]right:{$(v2)$}}] {}
    child { node[n,label={[l]below:{$(v2')$}}] {}
    edge from parent node[draw=none,right] {$\ex$}}
  edge from parent node[draw=none,above right] {$+_2$}}
;

\node[n,label={[l]above:{$(v0)$}},xshift=8cm] {}
  child { node[n,label={[l]left:{$(v3)$}}] {}
    child { node[n,label={[l]left:{$(v6)$}}] {}
    edge from parent node[draw=none,left] {$c$}}
    child { node[n,label={[l]right:{$(v7)$}}] {}
      child { node[n,label={[l]below:{$(v7')$}}] {}
      edge from parent node[draw=none,right] {$\ex$}}
    edge from parent node[draw=none,right] {$d$}}
  edge from parent node[draw=none,left] {$a$}}
  child { node[n,label={[l]below:{$(v2')$}}] {}
  edge from parent node[draw=none,right] {$\ex$}}
;

\end{tikzpicture}

  \end{center}
  \caption{The term tree of $t=a.(c.0+d.1)+1$ (on the left) and the synchronization tree $H(t)$ (in the middle) and its $\{+_{1},+_{2}\}$-contraction (on the right).}
\label{fig:example-morphism-4}
\end{figure}

Let $\tau'$ denote the function that maps $t$ to the $\{+_{1},+_{2}\}$-contraction of $H(t)$.

\begin{prop}
\label{prop-Gamma-hom}
For every $t \in T_\Gamma^\omega$, $\tau'(t)$ is the synchronization tree 
in $\Sta$ denoted by $t$. 
\end{prop}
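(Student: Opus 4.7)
I would mirror the strategy used in the proof of Proposition~\ref{prop-tau}, proceeding by structural induction on finite $\Gamma$-term trees and then extending to infinite terms by a colimit argument.

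First, for the base cases, one checks $t = 0$ and $t = 1$ directly. When $t=0$, $H(t)$ is the single-vertex tree, whose $\{+_1,+_2\}$-contraction is the tree $0$ in $\Sta$. When $t=1$, $H(t)$ has a root together with a fresh vertex and an $\ex$-labelled edge, which is precisely the tree $1$ in $\Sta$. Both match the required images under the essentially unique morphism, since a morphism of continuous categorical $\Gamma$-algebras must send the constants $0$ and $1$ in $T^\omega_\Gamma$ to the corresponding constants in $\Sta$ (up to isomorphism).

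For the inductive step I would verify that $\tau'$ commutes with the operations of $\Gamma$ up to isomorphism. If $t = t_1 + t_2$, then $H(t)$ is obtained from disjoint copies of $H(t_1)$ and $H(t_2)$ by adding a new root with $+_1$- and $+_2$-labelled edges to the roots of $H(t_1)$ and $H(t_2)$ respectively. Contracting the $\{+_1,+_2\}$-edges identifies this new root with both old roots, yielding exactly the $+$-operation on synchronization trees applied to $\tau'(t_1)$ and $\tau'(t_2)$. If $t = a.t_1$ for some $a \in A$, then $H(t)$ prepends an $a$-labelled edge to $H(t_1)$, and since $a$ is not in $\{+_1,+_2\}$, the contraction commutes with this prefixing, giving $a.\tau'(t_1) = a^{\Sta}(\tau'(t_1))$. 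Since the essentially unique morphism $T^\omega_\Gamma \to \Sta$ is determined up to natural isomorphism by preserving the $\Gamma$-operations (which include $0$, $1$, $+$ and each $a \in A$ as a unary operation), this suffices for the finite case.

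For infinite $t$, I would use the standard approximation: write $t$ as the colimit (supremum) of its finite truncations $t_n$, where $t_n$ is obtained from $t$ by relabelling every vertex at depth $n$ by $0$ and removing everything below. Because a categorical morphism preserves $\omega$-colimits, the image of $t$ under the essentially unique morphism is the colimit of the images of the $t_n$. I would then argue that $H(t)$ is the colimit (essentially the union, via the natural injective embeddings) of $H(t_n)$, and that $\{+_1,+_2\}$-contraction commutes with $\omega$-colimits of injective morphisms of the relevant shape: every vertex of $\tau'(t)$ appears already in some $\tau'(t_n)$, and every edge of $\tau'(t)$ corresponds to a finite $\{+_1,+_2\}^* A$-labelled path in $H(t)$, which lies entirely in some $H(t_n)$. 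Hence $\tau'(t) = \Colim_n \tau'(t_n)$, which by the induction step coincides with the colimit of the synchronization trees denoted by the $t_n$, i.e., with the synchronization tree denoted by $t$.

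The main obstacle is the infinite case, more specifically the commutation of the $\{+_1,+_2\}$-contraction with $\omega$-colimits. The subtlety is that a single edge in $\tau'(t)$ can arise from an arbitrarily long $+_1/+_2$ path in $H(t)$, so one must check that any such path in $H(t)$ already lives in some $H(t_n)$; this is where the finite depth of the path (hence of the originating node in $t$) is used. Once this compatibility is established, the inductive description of $\tau'$ on finite terms together with the continuity of morphisms yields the result.
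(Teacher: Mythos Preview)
Your proposal is correct and follows exactly the approach the paper intends: the paper omits the proof, stating only that it is ``essentially a simplification of that of Proposition~\ref{prop-tau}'', and your argument is precisely that simplification---structural induction on finite $\Gamma$-terms followed by passage to the colimit for infinite ones. Your handling of the infinite case, in particular the observation that every edge of $\tau'(t)$ arises from a finite $\{+_1,+_2\}^*(A\cup\{\ex\})$-path in $H(t)$ and hence already lives in some $H(t_n)$, is the right way to fill in the continuity step.
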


We omit the proof which is essentially a simplification of that of 
Proposition~\ref{prop-tau}.

\begin{prop}
The function which maps a synchronization tree $t \in \Sta$ to its 
path language in $P(A^*)$ is the (object part) of a categorical
$\Gamma$-algebra, as well as $\Delta$-algebra, morphism $\Sta \to P(A^*)$. 
\end{prop}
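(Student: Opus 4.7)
The plan is to exhibit the functor $\pi : \Sta \to P(A^*)$ whose object part sends a synchronization tree to its path language, and then verify that each operation in the signature $\Gamma$ (respectively $\Delta$) is preserved up to isomorphism. Since $P(A^*)$, viewed as a category, is a poset, isomorphism coincides with equality, so the diagrams will have to commute on the nose.

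First I would define $\pi$ on morphisms. Given a synchronization tree morphism $\phi : t \to t'$, any path from the root of $t$ to the source $u$ of an exit edge $e$ is sent by $\phi$ to a path from the root of $t'$ (since $\phi$ preserves the root) to $\phi(u)$ with the same sequence of labels; moreover $(\phi(u),\phi(v))$ is an edge of $t'$ labelled $\ex$, so $\phi(u)$ is itself the source of an exit edge. Hence $\pi(t) \subseteq \pi(t')$, which is the unique morphism in $P(A^*)$ from $\pi(t)$ to $\pi(t')$. Functoriality is automatic because $P(A^*)$ has at most one morphism between any two objects.

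Next I would check, object by object, that the operations are preserved. For the constants: the tree $0$ has a single vertex and no exit edge, so $\pi(0) = \emptyset$; the tree $1$ has a single edge labelled $\ex$ whose source is the root, so $\pi(1) = \{\epsilon\}$. For sum, the tree $t_1 + t_2$ identifies the roots and disjointly unions everything else, so a path from the common root to the source of an exit edge lies entirely inside $t_1$ or entirely inside $t_2$, yielding $\pi(t_1+t_2) = \pi(t_1) \cup \pi(t_2)$. For the $\Gamma$ prefix, every root-to-exit-source path in $a.t$ factors uniquely as the initial $a$-edge followed by a root-to-exit-source path in the attached copy of $t$, so $\pi(a.t) = \{a\}\cdot \pi(t)$; specialising to $t = 1$ gives $\pi(a) = \{a\}$ for the $\Delta$ constant. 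The main case is sequential product: in $t_1 \cdot t_2$ each exit edge of $t_1$ is replaced by a copy of $t_2$ rooted at the former exit source, and the exit edges of $t_1 \cdot t_2$ are exactly the exit edges of these copies of $t_2$. A root-to-exit-source path therefore decomposes uniquely as a root-to-exit-source path $w_1$ of $t_1$ followed by a root-to-exit-source path $w_2$ in one of the attached copies of $t_2$, yielding $\pi(t_1 \cdot t_2) = \pi(t_1)\cdot \pi(t_2)$.

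Putting these together, for every $\sigma$ in $\Gamma$ (respectively $\Delta$) we have $\pi(\sigma^{\Sta}(t_1,\ldots,t_n)) = \sigma^{P(A^*)}(\pi(t_1),\ldots,\pi(t_n))$, so the required naturality square commutes strictly, and hence up to (the identity) natural isomorphism. This gives $\pi$ as a categorical $\Gamma$-algebra morphism and also as a categorical $\Delta$-algebra morphism. The main delicate point is the sequential product, where one must be careful that the attachment of copies of $t_2$ at the exit vertices of $t_1$ provides a bijection between root-to-exit-source paths of $t_1\cdot t_2$ and pairs consisting of a root-to-exit-source path of $t_1$ and one of $t_2$; everything else is a routine check on the definitions.
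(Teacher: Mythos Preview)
Your proof is correct and follows the same approach as the paper: show that a tree morphism induces an inclusion of path languages, then verify each operation is preserved (the paper waves the latter off as clear; you spell it out carefully). One point worth noting: the paper's proof goes beyond the proposition as stated and also checks that $\pi$ preserves colimits of $\omega$-diagrams, i.e., that it is a morphism of \emph{continuous} categorical algebras, by arguing that every branch of the colimit tree ending in an $\ex$-edge already appears in some $t_n$, so $\pi(t) = \bigcup_n \pi(t_n)$. This stronger property is what is actually used in the Mezei--Wright applications in Section~\ref{Sect:comparison}, so you may wish to add it.
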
 

\begin{proof}
It is clear that the empty tree is mapped
to the empty language and the operations are preserved. When there is a morphism 
$t \to t'$ for synchronization trees $t,t' \in \Sta$, then the path language 
of $t$ is included in the path language of $t'$. Finally, suppose that $(\phi_n: t_n \to t_{n+1})_n$ is 
an $\omega$-diagram in $\Sta$ with colimit $(\psi_n : t_n \to t)_n$. 
Then for every branch of $t$ ending in an edge labelled $\ex$ 
there is some $n_0$ such that the branch is the image of a corresponding
branch of $t_{n_0}$ with respect to the morphism $\psi_{n_0} : t_{n_0} \to t$,
and then the same holds for each $n \geq n_0$. Using this fact, 
it follows easily that the path language 
of $t$ is the union of the path languages of the $t_n$, 
proving that colimits of $\omega$-diagrams are preserved. 
\end{proof}

\subsection{Basic Properties}\label{Sect:closure}

In this section, we give some basic properties of synchronization
trees defined by first-order recursion schemes.

First, we remark that $\Gamma$-regular and $\Gamma$-algebraic functors are closed under the sequential product. This closure property is immediate for $\Delta$-regular and $\Delta$-algebraic functors as the sequential product is an operation
of the continuous $\Delta$-algebra $\Sta$ (which is not the case for the continuous $\Gamma$-algebra $\Sta$).

\begin{prop}
If $f,f':\Sta^k \to \Sta$ are $\Gamma$-algebraic (resp. $\Gamma$-regular), then so is $f \cdot f'$. 
\end{prop}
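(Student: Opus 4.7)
The plan is to produce the new scheme by a syntactic transformation that internalizes the sequential product, exploiting the fact that in $\Sta$ the operation $\cdot w$ distributes through the $\Gamma$-operations: $(t_1+t_2)\cdot w = t_1\cdot w + t_2\cdot w$, $(a.t)\cdot w = a.(t\cdot w)$, $0\cdot w = 0$, $1\cdot w = w$ (and $\cdot$ is continuous in each argument, so it preserves colimits of $\omega$-diagrams). These are the only facts about $\cdot$ that I will need.

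Suppose $f = |E|_1^{\Sta}$ where $E$ is a $\Gamma$-algebraic scheme with functor variables $F_1,\ldots,F_n$ of ranks $k_1,\ldots,k_n$ and $k_1 = k$; similarly, $f'=|E'|_1^{\Sta}$ has variables $G_1,\ldots,G_m$ with $G_1$ of rank $k$. I would introduce fresh variables $\tilde F_i$ of rank $k_i+1$ with equations obtained from $F_i(v_1,\ldots,v_{k_i})=t_i$ by (i) replacing every occurrence of the constant $1$ in $t_i$ by the new parameter $w$, and (ii) replacing every sub-term $F_j(s_1,\ldots,s_{k_j})$ by $\tilde F_j(s_1,\ldots,s_{k_j},w)$. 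Then I combine these equations with those of $E'$ and add the top-level equation
\[
H(v_1,\ldots,v_k) \;=\; \tilde F_1\bigl(v_1,\ldots,v_k,\;G_1(v_1,\ldots,v_k)\bigr)
\]
with $H$ of rank $k$. The resulting scheme is $\Gamma$-algebraic, and I would declare $H$ its principal variable.

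The central claim, to be proved by the standard approximation argument, is that in the initial solution
\[
\tilde F_i^{\Sta}(x_1,\ldots,x_{k_i},w) \;=\; F_i^{\Sta}(x_1,\ldots,x_{k_i})\cdot w
\]
for all $x_j,w\in\Sta$. I would establish this by showing that the $n$-th stage approximants $\tilde g_{n,i}$ and $g_{n,i}$ of $|E|_i^{\Sta}$ and of $\tilde F_i^{\Sta}$ satisfy $\tilde g_{n,i}(\vec x,w) = g_{n,i}(\vec x)\cdot w$, by induction on $n$. The base case uses $0\cdot w = 0$; the step unfolds one application of each right-hand side and reduces, via a routine structural induction on $t_i$, to the four distributive identities above. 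Continuity of $\cdot$ lets me pass to the colimit, giving the claim; evaluating at $w = G_1^{\Sta}(\vec x)=f'(\vec x)$ then yields $H^{\Sta}(\vec x)=f(\vec x)\cdot f'(\vec x)$, so $f\cdot f'$ is $\Gamma$-algebraic.

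For the $\Gamma$-regular case the construction above is no longer regular, since it introduces the parameter $w$. Instead, when all ranks are $0$, I would directly define regular variables $H_1,\ldots,H_n$ by taking $H_i$ to be the right-hand side obtained from $t_i$ by the substitutions $1 \mapsto G_1$ and $F_j \mapsto H_j$, keep the equations of $E'$, and set $H_1$ as principal variable. The same inductive argument, now without a $w$ parameter (with $G_1^{\Sta}=f'$ playing its role), shows $H_i^{\Sta} = F_i^{\Sta}\cdot f'$ and hence $H_1^{\Sta}=f\cdot f'$. The main—but essentially routine—obstacle in both cases is the bookkeeping of the approximation-level identity under structural induction on the term; once this is in place, continuity of $\cdot$ delivers the full equality at the colimit.
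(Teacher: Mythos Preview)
Your approach and the paper's share the same idea---replace the constant $1$ by a call into the scheme $E'$---but the paper carries it out more simply: it keeps the functor variables $F_i$ of $E$ with their original ranks and substitutes $F_1'(x_1,\ldots,x_k)$ for every occurrence of $1$ in each right-hand side of $E$, then adjoins the equations of $E'$. No extra parameter $w$ is introduced, so a single construction covers both the algebraic and the regular case.

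There is, however, a genuine gap in your argument: the central claim
\[
\tilde F_i^{\Sta}(x_1,\ldots,x_{k_i},w)\;=\;F_i^{\Sta}(x_1,\ldots,x_{k_i})\cdot w
\]
is \emph{false} as stated. Your ``routine structural induction on $t_i$'' covers $0,1,a.(-),+$ via the four distributive identities, but not the bare-variable case $t_i=v_\ell$: neither transformation (i) nor (ii) touches $v_\ell$, so $\tilde t_i=v_\ell$, and the claim would force $x_\ell = x_\ell\cdot w$ for all $x_\ell,w$. Concretely, if $F_2(v)=v$ then $\tilde F_2(v,w)=v$, hence $\tilde F_2^{\Sta}(x,w)=x$, whereas $F_2^{\Sta}(x)\cdot w=x\cdot w$. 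The invariant that \emph{does} survive the induction is
\[
\tilde F_i^{\Sta}(x_1\cdot w,\ldots,x_{k_i}\cdot w,\,w)\;=\;F_i^{\Sta}(x_1,\ldots,x_{k_i})\cdot w,
\]
reflecting that arguments threaded through the recursion already carry the appended $w$. For $k=0$ this still yields $\tilde F_1^{\Sta}(w)=F_1^{\Sta}\cdot w$ (no arguments to worry about), so your construction is salvageable there with the corrected invariant; but for $k\geq 1$ your top-level equation $H(\vec v)=\tilde F_1(\vec v,G_1(\vec v))$ feeds in $x_\ell$ rather than $x_\ell\cdot f'(\vec x)$ and does not deliver $f\cdot f'$. (Indeed, with $f=f'=\mathrm{id}$ one would need the functor $v\mapsto v\cdot v$; a count of $\ex$-edges at the root for $v=1$ versus $v=1+1$ shows this is not induced by any $\Gamma$-term in $v$.)
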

\begin{proof}
Suppose that $f$ and $f'$ are the first components 
of the initial solutions over $\Sta$ 
of the $\Gamma$-recursion schemes $E$ and $E'$.  Without loss of
generality, we may assume that $E$ and $E'$ have disjoint sets of
functor variables.  Let $F_1(x_1,\ldots,x_{k})$ and
$F'_1(x_1,\ldots,x_{k})$ denote the left-hand sides of the first equations
of $E$ and $E'$. Then replace each occurrence of the symbol $1$ on the
right-hand-side term of each equation of $E$ with $F_1'(x_1,\ldots,x_k)$,
and consider the recursion scheme consisting of these equations
together with the equations of $E'$. The component of the initial
solution of this scheme which corresponds to $F_1$ is $f \cdot
f'$.   
\end{proof}

 \begin{exa}\label{Ex:seqcomp}
The tree $b^\omega$ determined by single infinite branch with edges labelled $b$ is $\Gamma$-regular since it is the initial
solution of the following $\Gamma$-regular recursion scheme
\begin{equation*}
X = b.X .  
\end{equation*}
We have already seen in Example~\ref{Ex:anyfinitedepth}
on page \pageref{Ex:anyfinitedepth} 
that the tree $\sum_{i \geq 1} a^{i}$ is $\Gamma$-algebraic. According to the above proposition, the tree $(\sum_{i\geq 1} a^i)\cdot b^\omega$ is also $\Gamma$-algebraic. Indeed, it is the initial
solution of the recursion scheme
\begin{eqnarray*}
F_1 & = & F_2(a.X) \\
F_2(v)  & = & v + F_2(a.v) \\
X & = & b.X .
\end{eqnarray*}
\end{exa}

All our classes of synchronization trees are also closed under the contraction operation for each non-empty
subset of $A$.

\begin{prop}
\label{prop:contr-alg}
  The classes of $\Gamma$-regular, $\Gamma$-algebraic, $\Delta$-regular and
  $\Delta$-algebraic synchronization trees in $\Sta$ are closed under contraction for any  non-empty subset of $A$.
\end{prop}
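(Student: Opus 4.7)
The plan is to realize the $B$-contraction as a morphism of continuous categorical algebras and then apply the Mezei-Wright theorem (Theorem~\ref{thm-mezei}). The main idea is to equip $\Sta$ over $A\setminus B$ with a non-standard $\Gamma$- (resp.\ $\Delta$-)algebra structure in which each $b\in B$ acts trivially, exhibit the $B$-contraction as a morphism into this algebra, and then read off the required recursion scheme for the contracted tree via a syntactic substitution in the original scheme. The argument will be uniform across the regular and algebraic cases, because the substitution preserves the set of functor variables and their arities.

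For the $\Gamma$ case, let $D$ denote $\Sta$ over $A\setminus B$ endowed with the following continuous categorical $\Gamma$-algebra structure: $+$, $0$, $1$ and $a.\_$ for $a\in A\setminus B$ are interpreted as usual, while for each $b\in B$ the functor $b^D.\_$ is the identity functor on $D$. Since the identity functor is continuous, $D$ is indeed a continuous categorical $\Gamma$-algebra. The $B$-contraction extends to a functor $c_B:\Sta\to D$ (on morphisms, by restriction to the relevant vertex subsets), and one checks directly from the definition of contraction the identities $c_B(0)=0$, $c_B(1)=1$, $c_B(t_1+t_2)=c_B(t_1)+c_B(t_2)$, $c_B(a.t)=a.c_B(t)$ for $a\in A\setminus B$, and $c_B(b.t)=c_B(t)=b^D.c_B(t)$ for $b\in B$. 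Applying Mezei-Wright, if $t$ is defined by a $\Gamma$-recursion scheme $E$, then $c_B(t)$ is the first component of the initial solution of $E$ in $D$. Because $b$ acts as the identity in $D$, this interpretation of $E$ coincides with the interpretation in the standard $\Gamma$-algebra $\Sta$ over $A\setminus B$ of the scheme $E'$ obtained from $E$ by exhaustively rewriting each subterm $b.s$ (with $b\in B$) to $s$ in every right-hand side. Since $E'$ has the same functor variables and arities as $E$, it is regular (resp.\ algebraic) whenever $E$ is.

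The $\Delta$ case is entirely analogous. Here each $b\in B$ is interpreted as the constant $1^{\Sta}$ in the non-standard $\Delta$-algebra structure on $\Sta$ over $A\setminus B$, with all other operations inherited from $\Sta$. The only non-obvious morphism identity to check is $c_B(b)=1$, which follows directly from contracting the single $b$-labelled edge in the tree denoted by the constant $b$ (alternatively, one can read it off from Proposition~\ref{prop-tau}). The corresponding syntactic transformation replaces each nullary occurrence of a symbol $b\in B$ in a right-hand side of $E$ by $1$, yielding a $\Delta$-scheme $E'$ over $A\setminus B$ of the same type as $E$.

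The main technical point, and the principal obstacle, is to verify the continuity of $c_B$, i.e.\ that $c_B$ preserves colimits of $\omega$-diagrams. This should be settled using the explicit description of such colimits in $\Sta$ recalled in Section~\ref{Sect:syntrees}: for an $\omega$-diagram $(\phi_n:t_n\to t_{n+1})_n$ with colimit $(\psi_n:t_n\to t)_n$, the vertices of $t$ are equivalence classes of pairs $(v,n)$, and one checks that an equivalence class $[(v,n)]$ lies in the vertex set of $c_B(t)$ iff some (equivalently, every sufficiently large) representative lies in the vertex set of the corresponding $c_B(t_m)$. An analogous statement for edges holds, using that any finite path in $t$ factors through a single $t_m$ for $m$ sufficiently large. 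This will exhibit $c_B(t)$ as the colimit of $(c_B(\phi_n))_n$. In the special case of injective morphisms—which is precisely what arises in the initial-solution construction described in Section~\ref{sec-algebraic objects}—the verification collapses to the straightforward observation that contraction distributes over unions of trees.
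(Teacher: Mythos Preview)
Your proof is correct and arrives at exactly the same syntactic transformation as the paper (erase $b.\_$ for $\Gamma$; replace the constant $b$ by $1$ for $\Delta$), but the route is genuinely different. The paper does not pass through Mezei--Wright at all: it works directly at the level of term trees, observing that the syntactic substitution on the scheme $E$ induces the corresponding ``contraction'' on the $\Gamma$- (resp.\ $\Delta$-) term tree defined by $E$, and then invokes the explicit descriptions of the morphisms $\tau'$ (Proposition~\ref{prop-Gamma-hom}) and $\tau$ (Proposition~\ref{prop-tau}) to conclude that this yields the $B$-contraction of the associated synchronization tree. No continuity of $c_B$ is needed because the argument never treats $c_B$ as a categorical morphism on $\Sta$.

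Your approach is cleaner conceptually---reinterpreting the letters in $B$ in a modified target algebra is the ``right'' categorical explanation---but it costs you the verification that $c_B$ is a morphism of continuous categorical algebras. Your sketch of continuity is essentially sound (in a tree every non-root vertex has a unique incoming edge, and morphisms preserve its label, so membership of $[(v,n)]$ in $c_B(t)$ is witnessed already at level $n$; finite paths in the colimit factor through some $t_m$), though note that the ``sufficiently large'' is in fact ``every $m\ge n$''. One small caveat: the version of Mezei--Wright stated as Theorem~\ref{thm-mezei} only asserts that $h(c)$ is algebraic, not that it is defined by the \emph{same} scheme $E$; you are using the (standard, but stronger) fact that a morphism carries the initial solution of $E$ in $C$ to the initial solution of $E$ in $D$, which is what actually makes the passage to $E'$ work. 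You should make that explicit or cite it.
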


\begin{proof}
Let $B$ be a non-empty subset of $A$.  Let $t$ be a synchronization tree defined by an algebraic  $\Gamma$-scheme $E$  and let $t'$ be the synchronization tree obtained  by contracting $t$ with respect to $B$. 

 A $\Gamma$-scheme $E'$ defining $t'$ is easily obtained from $E$ by replacing
 each equation of the form 
  $F(v_{1} \ldots v_{n}) = t$ by the equation  $F(v_{1} \ldots v_{n}) = \overline{t}$ 
  where $\overline{t}$ is inductively defined by $\overline{0}=0$,
 $\overline{1}=1$, $\overline{t_{1} + t_{2}}=\overline{t_{1}} + \overline{t_{2}}$, 
 $\overline{G (t_{1}, \ldots ,t_{k})}=G (\overline{t_{1}}, \ldots, \overline{t_{k}})$, $\overline{a(t)}=a(\overline{t})$ if $a \in A \setminus B$
 and $\overline{b(t)}=\overline{t}$ for $b \in B$. Note that $E'$ is $\Gamma$-regular, if so is $E$.

 The $\Gamma$-term defined by $E'$ is the contraction of the $\Gamma$-term
 defined by $E$ from its root with respect to $B$. It then follows, from the definition of the mapping $\tau'$ in \ref{sec:morphism} and Proposition~\ref{prop-Gamma-hom}, that the synchronization tree defined by $E'$ is the contraction of the synchronization
  tree defined by $E$ with respect to $B$. 

Let $t$ be a synchronization tree defined by an algebraic  $\Delta$-scheme $E$  and let $t'$ be the synchronization tree obtained  by contracting $t$ with respect to $B$. Consider the $\Delta$-scheme $E'$ obtained by replacing
in $E$ all occurrences of a constant in $B$ by the constant $1$.  (Note that $E'$ is $\Delta$-regular, if so is $E$.) The $\Delta$-term defined by $E'$ is the $\Delta$-term defined by $E$ in which
each occurrence of a constant in $B$ is replaced by the constant $1$. It follows
from the definition of the mapping $\tau$ in \ref{sec:morphism} and Proposition~\ref{prop-tau}, that the synchronization tree defined by $E'$ is the contraction of the synchronization tree defined by $E$ with respect to $B$. 
\end{proof}

Finally, the $\Gamma$-regular synchronizations trees can be characterized by a syntactic subfamily of the $\Delta$-regular recursion schemes.  This
characterization is similar in spirit to the characterization of regular
languages by right-linear context-free grammars. A $\Delta$-regular
scheme is said to be \emph{right-linear} if the right-hand side of
each equation is of the form
 \[t_0 + \sum_{i \in [n]}t_i\cdot G_i\]
 up to commutativity and associativity of sum, where the $G_i$, $i \in
 [n]$, are (constant) functor variables and the $t_j$, $j \in
 \{0,\ldots,n\}$ are terms over the signature $\Delta$ not containing
 variables. (The empty sum stands for $0$.)  It is rather standard to
 prove the following fact:

\begin{prop}
\label{prop:right-linear-implies-regular}
A synchronization tree in $\Sta$ is $\Gamma$-regular iff it can be defined
by a right-linear $\Delta$-regular scheme.
\end{prop}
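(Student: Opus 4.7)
The plan is to prove each direction by a direct syntactic transformation between the two types of schemes.

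For the ($\Leftarrow$) direction, I will define for each variable-free $\Delta$-term $t$ a $\Gamma$-term $\gamma(t)$ denoting the same synchronization tree. The recursive definition is $\gamma(0)=0$, $\gamma(1)=1$, $\gamma(a)=a.1$ for $a\in A$, $\gamma(t_1+t_2)=\gamma(t_1)+\gamma(t_2)$, and $\gamma(t_1 \cdot t_2) = \gamma(t_1)[1\mapsto\gamma(t_2)]$, where $[1\mapsto u]$ denotes substitution of $u$ for every occurrence of the constant $1$. An induction on $t$ shows that $\gamma(t)$ denotes the same tree as $t$; the essential semantic point, read off from Propositions~\ref{prop-tau} and~\ref{prop-Gamma-hom}, is that the exit vertices of a $\Delta$-term tree correspond precisely to the occurrences of the constant $1$ in its $\Gamma$-representation, so that the substitution realizes the sequential product. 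Then each equation $G = t_0 + \sum_i t_i \cdot G_i$ of the right-linear $\Delta$-regular scheme is replaced by the $\Gamma$-equation $G = \gamma(t_0) + \sum_i \gamma(t_i)[1 \mapsto G_i]$, yielding a $\Gamma$-regular scheme with the same initial solution.

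For the ($\Rightarrow$) direction, starting from a $\Gamma$-regular scheme I perform two transformations. First I normalize each right-hand side so that every occurrence of a subterm $a.s$ has $s$ a single functor variable: for each offending subterm $a.s$, introduce a fresh nullary functor variable $H$, add the equation $H = s$, and replace the occurrence by $a.H$. This terminates because the total number of such offending occurrences, summed over all equations, strictly decreases at each step, and the initial solution is preserved because $H$ is defined to equal $s$. After normalization every right-hand side is a finite sum of atoms in $\{0, 1, G, a.G\}$. I then translate atoms via $0 \mapsto 0$, $1 \mapsto 1$, $G \mapsto 1 \cdot G$, $a.G \mapsto a \cdot G$. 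The translation is semantically the identity because the $\Delta$-constant $a$ denotes the tree $a.1$, so $a \cdot G$ coincides with $a.G$ by definition of the sequential product, and $1$ is a unit for $\cdot$. Grouping the resulting summands yields the right-linear form $t_0 + \sum_i t_i \cdot G_i$ with each $t_i \in \{1\} \cup A$ and $t_0$ a sum of constants $0$ and $1$.

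The main technical point is the substitution lemma underlying the ($\Leftarrow$) direction, namely that $\gamma(t_1)[1 \mapsto \gamma(t_2)]$ denotes $t_1 \cdot t_2$. This relies on the combinatorial descriptions of Propositions~\ref{prop-tau} and~\ref{prop-Gamma-hom}, which together ensure that the $\ex$-edges of a synchronization tree denoted by a $\Gamma$-term originate exactly from the occurrences of the constant $1$ in the term, so that syntactic substitution precisely effects the replacement of those edges required by the sequential product. The ($\Rightarrow$) direction is more routine, amounting to a termination argument for the iterative normalization together with a standard fixpoint-preservation argument when auxiliary equations are introduced.
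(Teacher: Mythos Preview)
Your proposal is correct and follows essentially the same approach as the paper. For the ($\Leftarrow$) direction, your translation $\gamma(t)[1\mapsto G]$ coincides with the paper's $t'(G)$ (defined by the same induction, with the variable $G$ playing the role of your marker $1$), and the resulting $\Gamma$-regular scheme is the same. For the ($\Rightarrow$) direction the paper is terser: it simply replaces each prefixing $a.s$ by $a\cdot s$ and implicitly relies on associativity of $\cdot$ (and, where needed, distributivity over $+$) to reach the right-linear form, whereas you achieve the same end by introducing auxiliary nullary variables so that every prefix sits directly over a functor variable; both routes are routine and yield equivalent schemes.
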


\begin{proof}
It is clear how to transform a $\Gamma$-regular recursion scheme into a
right-linear $\Delta$-scheme by turning each prefixing operation into 
a sequential product. 

We show how to transform a $\Delta$-term $t\cdot G$ with $t$ containing no 
functor variables into a corresponding $\Gamma$-term $t'(G)$, possibly containing $G$. 
We proceed by induction on the structure of $t$. When $t = 0$, let 
$t'(G) = 0$, and when $t = 1$ let $t'(G) = G$. When $t = a$ where $a \in A$,
define $t'(G) = a.G$. Suppose now that $t = t_1 + t_2$. Then let 
$t' = t_1'(G)+ t_2'(G)$. Finally, consider the case when $t = t_1\cdot t_2$.
In this case define $t'(G)$ as the term obtained by substituting 
$t_2'(G)$ for each occurrence of $G$ in $t_1'(G)$. 

A routine calculation shows that for each evaluation of $G$ in $\Sta$, the terms
$t\cdot G$ and $t'(G)$ yield the same synchronization tree. {For an example of such calculation, we refer the reader to the proof of Theorem~\ref{Thm:Delta2Gamma} on page \pageref{Thm:Delta2Gamma}.}
Using this fact, we may transform a right-linear $\Delta$-scheme into
a regular $\Gamma$-scheme by changing the right-hand side 
$t_0 + \sum_{i \in [n]}t_i\cdot G_i$ of each equation to $t_0'(1) + 
\sum_{i \in [n]}t_i'( G_i)$, 
where $t_0'(1)$ is the term obtained 
by substituting $1$ for $G$ in $t_0'(G)$.
\end{proof}

\section{Comparison between the $\Gamma$-algebra and the $\Delta$-algebra}\label{Sect:comparison}

In this section, we interpret recursion schemes over the 
continuous categorical algebra $\Sta$, viewed either as 
a $\Gamma$-algebra or a $\Delta$-algebra. We compare the resulting classes of synchronization trees with respect to language equivalence, bisimulation equivalence and isomorphism equivalence.

First for language equivalence, we show in Section~\ref{ssec:up-to-languages} that the following hierarchy holds.

\begin{equation}
\label{eq:hierarchy-languages}
\underbrace{\Gamma\textrm{-regular}}_{\textrm{regular languages}}
 \;\;\subsetneq\;\; 
\underbrace{\Delta\textrm{-regular} \;\;=\;\; \Gamma\textrm{-algebraic}}_{\textrm{context-free languages}}
 \;\;\subsetneq\;\; \underbrace{\Delta\textrm{-algebraic}}_{\textrm{indexed languages}} 
\end{equation}

Up to bisimulation or isomorphism, we show in Section~\ref{ssec:up-to-bisimulation-iso} that the following hierarchy holds.

\begin{equation}
\label{eq:hierarchy-bisim}
 \Gamma \textrm{-regular}    \subsetneq    \Delta\textrm{-regular}   \subsetneq   \Gamma\textrm{-algebraic}   \subsetneq   \Delta\textrm{-algebraic}                
\end{equation}

We conclude the section by a comparison with the classes of synchronization trees defined by BPA and BPP.

\subsection{Comparison up to language equivalence}
\label{ssec:up-to-languages}

As already mentioned in the introduction, the path languages of the different
classes can be characterized as follows.
\begin{prop}
  The following properties holds.
  \begin{enumerate}
  \item The path languages of the $\Gamma$-regular trees are the regular languages.
  \item The path languages of the $\Delta$-regular trees and of the $\Gamma$-algebraic trees are the context-free languages.
  \item The path languages of the $\Delta$-algebraic languages are 
the indexed languages.
  \end{enumerate}
\end{prop}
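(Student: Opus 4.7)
The plan is to transfer the characterization from synchronization trees to languages by invoking the Mezei-Wright theorem (Theorem~\ref{thm-mezei}) together with the preceding proposition, which establishes that the path-language map $\Sta \to P(A^*)$ is a morphism of continuous categorical algebras for both the $\Gamma$- and the $\Delta$-signatures. By Mezei-Wright, the path languages of the $\Gamma$-regular (resp.\ $\Gamma$-algebraic, $\Delta$-regular, $\Delta$-algebraic) synchronization trees are precisely the $\Gamma$-regular (resp.\ $\Gamma$-algebraic, $\Delta$-regular, $\Delta$-algebraic) elements of $P(A^*)$ with respect to the appropriate algebra structure. It therefore suffices to characterize these classes of elements inside $P(A^*)$ for each signature.

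For item~(1), a $\Gamma$-regular scheme interpreted in $P(A^*)$ becomes a system of equations $X_i = t_i$ in which each $t_i$ is built from $+$, $0$, $1$, the prefixings $a.(-) = \{a\}\cdot (-)$, and the constant variables $X_j$. Such a system is precisely a right-linear grammar in the language semiring, whose least solution is a regular language, and conversely a finite automaton for a regular language can be turned into such a system in the usual way. For the first half of item~(2), a $\Delta$-regular scheme in $P(A^*)$ is a polynomial system in the semiring $P(A^*)$, i.e.\ a context-free grammar in standard form; its initial fixed point is the corresponding context-free language, and every context-free language arises from such a grammar, hence from such a scheme.

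For the second half of item~(2), the inclusion of context-free languages in $\Gamma$-algebraic languages is obtained by composing the previous observation with Theorem~\ref{Thm:Delta2Gamma}, which converts each $\Delta$-regular scheme into a $\Gamma$-algebraic one defining the same tree, and hence the same path language. The reverse inclusion uses the classical Mezei-Wright correspondence: every $\Gamma$-algebraic element of $P(A^*)$ is the image of an algebraic $\Gamma$-term tree under the unique morphism $T_\Gamma^\omega \to P(A^*)$, and the defining scheme, read in $P(A^*)$, can be converted into an ordinary context-free grammar by introducing a nonterminal for each pair (functor variable, argument pattern) and exploiting the fact that prefixing distributes over union and is continuous. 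Item~(3) follows in exactly the same spirit: by Mezei-Wright a $\Delta$-algebraic language in $P(A^*)$ is the image of an algebraic $\Delta$-term tree under $T_\Delta^\omega \to P(A^*)$, and the corresponding recursion scheme---with full concatenation and higher-order functor variables interpreted in $P(A^*)$---is exactly a macro grammar in the sense of Fischer~\cite{Fischer}, whose generated languages coincide with the indexed languages of Aho~\cite{Aho68}.

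The only genuinely delicate step is the reverse inclusion in the algebraic cases of items~(2) and~(3), where one must argue that the higher-order functor variables allowed in an algebraic recursion scheme cannot produce languages outside the classical class associated with the signature. This is the content of the standard identification, via Mezei-Wright, of $\Gamma$-algebraic (resp.\ $\Delta$-algebraic) recursion schemes over the language algebra with context-free (resp.\ macro) grammars, and follows from the distributive and continuity properties of prefixing and concatenation with respect to union and least fixed points.
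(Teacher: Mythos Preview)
Your overall strategy via Mezei--Wright matches the paper's, and items~(1), (3), and the inclusion of context-free languages among the $\Gamma$-algebraic path languages are handled correctly.

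The gap is in the reverse inclusion of item~(2): showing that the path language of every $\Gamma$-algebraic tree is context-free. Your proposed conversion---``introducing a nonterminal for each pair (functor variable, argument pattern)''---is not made precise, and it is not clear what an ``argument pattern'' should be or why only finitely many arise. In a $\Gamma$-algebraic scheme the arguments fed to functor variables are arbitrary $\Gamma\cup\Phi$-terms, and as the scheme unfolds these terms grow without bound; even after exploiting the identity $a.(L_1+L_2)=a.L_1+a.L_2$ in $P(A^*)$, each argument is of the form $L_0\cup L_1v_1\cup\cdots\cup L_kv_k$ with the finite prefix sets $L_i$ varying unboundedly. The distributivity and continuity you invoke are true but do not by themselves collapse this to a context-free grammar, and your final paragraph essentially restates the claim rather than proving it.

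The paper closes this gap by a different, concrete route. Using Mezei--Wright it pulls back to the initial algebra: the path language $L$ is the image under $h:T_\Gamma^\omega\to P(A^*)$ of the algebraic $\Gamma$-term tree $t$ defined by the scheme. Concretely, $L$ is obtained from the set of branch words of $t$ ending at a leaf labelled $1$ by applying the erasing morphism that deletes the letters $+_1,+_2$. At this point the paper invokes Courcelle's theorem that the branch language of an algebraic term tree is a deterministic context-free language, and concludes since homomorphic images of context-free languages are context-free. This appeal to Courcelle's characterization is the missing ingredient in your argument.
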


\begin{proof}
By the Mezei-Wright theorem, the path languages of the $\Gamma$-regular trees and the 
$\Delta$-regular trees in $\Sta$ are just the $\Gamma$-regular and $\Delta$-regular elements 
(objects) of $P(A^*)$, seen as a continuous $\Gamma$-algebra or $\Delta$-algebra. By classic results {(e.g. \cite[Theorem 1.21, page 116]{MandrioliG88})}, these in turn are the regular and context-free languages  Similarly, the indexed languages (or OI-macro languages) are exactly the $\Delta$-algebraic elements of $P(A^*)$, 
cf. \cite{Fischer}, i.e., the path languages of the $\Delta$-algebraic trees by 
the Mezei-Wright theorem. 
By Theorem~\ref{Thm:Delta2Gamma}, every $\Delta$-regular tree is $\Gamma$-algebraic. 
Thus, to complete the proof, it remains to show that the path language of a 
$\Gamma$-algebraic tree is context-free.

Suppose that $L$ is the path language of a $\Gamma$-algebraic tree in $\Sta$, 
and let $E$ denote a $\Gamma$-algebraic scheme defining it. Then consider the 
term tree $t\in T_\Gamma^\omega$ defined by $E$. By the Mezei-Wright theorem, 
$L$ is the image of $t$ with respect to the unique continuous $\Gamma$-algebra 
morphism $h: T_\Gamma^\omega \to P(A^*)$.

Suppose that $v$ is a leaf of $t$ labelled $1$, and consider the branch 
of $t$ from the root to $v$. The label of this branch is a word $w_v$ over the 
alphabet $A \cup \{+_1,+_2\}$. Let us consider the image $\pi(w_v)$ of 
$w_v$ under the {erasing morphism} $\pi: (A \cup \{+_1,+_2\})^* \to A^*$ 
that removes the letters $+_1$ and $+_2$. Then $L = h(t) \subseteq A^*$ is the set of all such words $\pi(w_v)$ obtained by considering all leaves 
$v$ of $t$ labelled $1$. 

It follows from Courcelle's characterization of the algebraic 
term trees by deterministic context-free languages \cite[Theorem 5.5.1,page 157]{Courcelle83} (see also \cite{Courcelle78a,Courcelle78b})
that the set of all words $w_v$, where $v$ is a 
leaf of $t$ labelled $1$, is a deterministic context-free language. Since the 
image of a (deterministic) context-free language with respect to a homomorphism
is context-free, we conclude that $L$ is a context-free language.
\end{proof}

\subsection{Comparison up to bisimulation and isomorphism}
\label{ssec:up-to-bisimulation-iso}

The aim of this section is to establish the strict inclusions stated
in Equation~\eqref{eq:hierarchy-bisim}.

As noted in Remark~\ref{rem:delta-implies-gamma} on page~\pageref{rem:delta-implies-gamma}, a $\Gamma$-term can be transformed into an equivalent $\Delta$-term
by replacing for each letter $a \in A$, all occurrences of 
a subterm $a(t)$ by  $a \cdot t$. 
In particular, every $\Gamma$-regular tree is $\Delta$-regular and every $\Gamma$-algebraic tree is $\Delta$-algebraic. This establishes the first and
third inclusions of \eqref{eq:hierarchy-bisim}. These inclusions are strict up to bisimulation and up to isomorphism as they are already strict with respect to language equivalence (cf. \eqref{eq:hierarchy-languages}).

It only remains to establish the second strict inclusion of Equation~\eqref{eq:hierarchy-bisim}. First we  establish the inclusion up to isomorphism in Theorem~\ref{Thm:Delta2Gamma}. In Corollary~\ref{cor:charac-delta-regular-in-gamma-algebraic}, we characterize a syntactical subfamily of the $\Gamma$-algebraic schemes which captures exactly the $\Delta$-regular schemes.

\begin{thm}\label{Thm:Delta2Gamma} 
Every $\Delta$-regular tree is $\Gamma$-algebraic. 
\end{thm}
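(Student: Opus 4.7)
The plan is to translate any $\Delta$-regular scheme into an equivalent $\Gamma$-algebraic scheme whose functor variables are all unary, using a single unary parameter to encode ``what comes after'' a process and thereby simulate the sequential product within the $\Gamma$-signature. Let the given $\Delta$-regular scheme $E$ have equations $X_i = t_i$ for $i\in [n]$, where each $t_i$ is a $\Delta$-term over $\{X_1,\ldots,X_n\}$ with no object-variables. I would introduce unary $\Gamma$-functor variables $F_1,\ldots,F_n$ whose intended semantics is $F_i(v)\cong |X_i|^{\Sta}\cdot v$.

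The key construction is a syntactic translation $T(\_,v)$ that turns a $\Delta$-term $t$ into a $\Gamma$-term $T(t,v)$ (with $v$ free), defined by induction on $t$:
\begin{align*}
T(0,v) &= 0, & T(1,v) &= v, & T(a,v) &= a.v\ (a\in A),\\
T(X_j,v) &= F_j(v), & T(t_1+t_2,v) &= T(t_1,v)+T(t_2,v), & T(t_1\cdot t_2,v) &= T(t_1,T(t_2,v)).
\end{align*}
The resulting $\Gamma$-algebraic scheme $E'$ consists of the equations $F_i(v) = T(t_i,v)$ for $i\in [n]$, together with a constant equation $F_0 = F_1(1)$ whose initial solution is the intended tree, since $|X_1|^{\Sta}\cdot 1\cong |X_1|^{\Sta}$.

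Next I would verify the semantic invariant that drives the argument: if every $F_i$ is interpreted as the continuous functor $v\mapsto |X_i|^{\Sta}\cdot v$, then $T(t,v)$ denotes $|t|^{\Sta}\cdot v$. This is a routine structural induction: the atomic cases use $0\cdot v\cong 0$ and $1\cdot v\cong v$, the sum case uses right-distributivity $(t_1+t_2)\cdot v\cong (t_1\cdot v)+(t_2\cdot v)$, and the case $t_1\cdot t_2$ uses associativity of $\cdot$, all stated in the earlier Remark. This already shows that the assignment $F_i(v) := |X_i|^{\Sta}\cdot v$ is a fixed point of $E'$ in the appropriate continuous functor category.

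The main obstacle is promoting this ``fixed point'' statement to the \emph{initial} fixed point statement, so that the initial solution $|E'|^{\Sta}$ really reproduces $|X_1|^{\Sta}$ (and not merely some larger unfolding). The cleanest way is to compare the $\omega$-chain approximants: let $g_k^{(i)}$ and $h_k^{(i)}$ denote the $k$-th iterate of the associated continuous functors on $E$ and $E'$, respectively, starting from the initial object. I would prove by simultaneous induction on $k$ that $h_k^{(i)}(v)\cong g_k^{(i)}\cdot v$ naturally in $v$, by invoking the same syntactic induction on $T$ combined with continuity of $\cdot$ (which propagates each stage-$k$ isomorphism). Passing to the colimit then yields $|F_i|^{\Sta}(v)\cong |X_i|^{\Sta}\cdot v$, and instantiating at $v=1$ gives $|F_0|^{\Sta}\cong |X_1|^{\Sta}$. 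A slicker alternative would use the Mezei--Wright theorem (Theorem~\ref{thm-mezei}): lift $T$ to a morphism of continuous categorical algebras at the level of term trees, and check that it commutes with the essentially unique morphism into $\Sta$, reducing the equivalence to an identity between the term trees defined by $E$ and $E'$.
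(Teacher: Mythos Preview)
Your proposal is correct and follows essentially the same approach as the paper: the translation $T(\_,v)$ is identical to the paper's $t\mapsto t'$, the semantic invariant you state is the paper's Claim~1 (phrased there as $\overline{t^{\Sta}(r_1,\ldots,r_n)} = t'^{\Sta}(\overline{r}_1,\ldots,\overline{r}_n)$ with $\overline{r}=r\cdot(-)$), and your induction on the $\omega$-iterates together with the instantiation at $v=1$ matches the paper's argument that $g_i=\overline{s_i}$ followed by the Beki\'c identity. The only point the paper makes more explicit is the parallel induction on the \emph{natural transformations} between successive iterates (its Claim~3), which you gesture at with ``naturally in $v$''; this is needed to identify the colimits in the categorical setting, but is routine once the object-level claim is in hand.
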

\begin{proof}
Consider a regular $\Delta$-recursion scheme $E$,
\begin{eqnarray*}
F_1 &=& t_1\\
&\vdots &\\
F_n &=& t_n , 
\end{eqnarray*}
which defines the $\Delta$-regular tree $s \in \Sta$.

Let $\Phi = \{F_1,\ldots,F_n\}$ and 
$\Psi = \{G_0,G_1,\ldots,G_n\}$, where 
each $F_i$ is of rank $0$, $G_0$ is of rank $0$, 
and each $G_i$ with $i \geq 1$ is of rank $1$.

Given a variable-free $\Delta \cup \Phi$-term $t$,
we define its translation $t'$ to be a $\Gamma\cup (\Psi - \{G_0\})$-term 
in the variable $v_1$. 
\begin{itemize}
\item If $t = F_i$, for some $i \in [n]$,  then $t' = G_i(v_1)$.
\item If $t = 0$ then $t' = 0$. 
\item If $t = 1$ then $t' = v_1$. 
\item If $t = a$ then $t' = a.v_1$.
\item If $t = t_1 + t_2$ then $t' = t_1' + t_2'$. 
\item If $t= t_1\cdot t_2$ then $t' = t_1'(t_2')$, the term obtained 
by substituting $t_2'$ for each occurrence of $v_1$ in $t_1'$. 
\end{itemize}
Note that 
\[t^{\Sta} : \Sta^n \to \Sta\]
and 
\[t'^{\Sta} : [\Sta \to \Sta]^n \to [\Sta \to \Sta]\]
The two functors are related.

For a tree $r\in \Sta$, let $\overline{r}$ denote the functor 
`left composition with $r$', $r \cdot (-)$ in $[\Sta \to \Sta]$.

\emph{Claim 1.} For each variable-free $\Delta\cup \Phi$-term $t$ and its 
translation $t'$, and for each sequence of trees $r_1,\ldots,r_n$ 
in $\Sta$, it holds that 
\begin{eqnarray*}
\overline{t^{\Sta}(r_1,\ldots,r_n)} 
&=& 
t'^{\Sta}(\overline{r}_1,\ldots,\overline{r}_n).
\end{eqnarray*}

Indeed, when $t= F_i$, for some $i \in [n]$, then $t'$ is $G_i(v_1)$ and both sides are equal to 
the functor $\overline{r}_i$. If $t= 0$, then both sides are equal to the constant
functor $\Sta \to \Sta$ determined by the tree $0^{\Sta}$, and when $t = 1$, 
both sides are equal to the identity functor $\Sta \to \Sta$. Indeed, 
seen $t^{\Sta}(r_1,\ldots,r_n) = 1^{\Sta}$, left composition with $t^{\Sta}(r_1,\ldots,r_n)$
is the identity functor as is $t'^{\Sta}(\overline{r}_1,\ldots,\overline{r}_n) = v_1^{\Sta}(\overline{r}_1,\ldots,\overline{r}_n)$. 
Suppose now that $t = a$ for some $a$.
Then both sides are equal to the functor $\overline{a}$, left composition with $a$.
Next let $t = t_1 + t_2$, and suppose that the claim holds for $t_1$ and $t_2$.  Then 
\begin{eqnarray*}
\overline{t^{\Sta}(r_1,\ldots,r_n)} 
&=& \overline{t_1^{\Sta}(r_1,\ldots,r_n)} + \overline{t_2^{\Sta}(r_1,\ldots,r_n)} \\
&=& t_1'^{\Sta}(\overline{r}_1,\ldots,\overline{r}_n) + t_2'^{\Sta}(\overline{r}_1,\ldots,\overline{r}_n)\\
&=& t'^{\Sta}(\overline{r}_1,\ldots,\overline{r}_n).
\end{eqnarray*} 
Last, suppose that $t= t_1\cdot t_2$, and that the claim holds for both terms $t_1$ and $t_2$. 
Then 
\[\overline{t^{\Sta}(r_1,\ldots,r_n)} = \overline{t_1^{\Sta}(r_1,\ldots,r_n)}\circ \overline{t_2^{\Sta}(r_1,\ldots,r_n)}\]
is the composition of the functors $\overline{t_1^{\Sta}(r_1,\ldots,r_n)}$ and $\overline{t_2^{\Sta}(r_1,\ldots,r_n)}$
(where the second functor is applied first), as is the functor 
\begin{eqnarray*}
t'^{\Sta}(\overline{r}_1,\ldots,\overline{r}_n) 
&=& t_1'^{\Sta}(\overline{r}_1,\ldots,\overline{r}_n)
\circ t_2'^{\Sta}(\overline{r}_1,\ldots,\overline{r}_n)\\
&=& \overline{t_1^{\Sta}(r_1,\ldots,r_n)}\circ \overline{t_2^{\Sta}(r_1,\ldots,r_n)}.
\end{eqnarray*}

\emph{Claim 2.} For each variable-free $\Delta\cup \Phi$-term $t$ and its 
translation $t'$, and for each sequence of trees $r_1,\ldots,r_n$ 
in $\Sta$, it holds that 
\begin{eqnarray*}
t^{\Sta}(r_1,\ldots,r_n) 
&=& 
(t'^{\Sta}(\overline{r}_1,\ldots,\overline{r}_n))(1^{\Sta})
\end{eqnarray*}

Indeed, by Claim 1, we have 
\[
t^{\Sta}(r_1,\ldots,r_n) = (\overline{t^{\Sta}(r_1,\ldots,r_n)})(1^{\Sta}) = 
(t'^{\Sta}(\overline{r}_1,\ldots,\overline{r}_n))(1^{\Sta}).
\]
Let $E'$ denote the $\Gamma$-algebraic scheme
\begin{eqnarray*}
G_0 &=& G_1(1)\\
G_1(v_1) &=& t_1'\\
&\vdots &\\
G_n(v_1) &=& t_n'
\end{eqnarray*}
We  claim that $E'$ is equivalent to $E$, i.e., 
$E'$ also defines $s$. To this end, let $G$ denote the 
recursion scheme consisting of the last $n$ equations of $E'$. 
In order to compare the schemes $E$ and $G$, define 
\begin{eqnarray*}
  s_0 &=&  (s_{0,1},\ldots,s_{0,n}) = (0^{\Sta},\ldots,0^{\Sta})\\
  s_{i+1} &=& (s_{i+1,1},\ldots,s_{i+1,n}) = (t_1^{\Sta}(s_i),\ldots, t_n^{\Sta}(s_i))\\
  g_0 &=& (g_{0,1},\ldots,g_{0,n}) = (0^{[\Sta \to \Sta]},\ldots,0^{[\Sta \to \Sta]})\\
  g_{i+1} &=& (g_{i+1,1},\ldots,g_{i+1,n}) = (t_1'^{\Sta}(g_i),\ldots, t_n'^{\Sta}(g_i)) 
\end{eqnarray*} 
For each $i$, define $\overline{s}_i = (\overline{s_{i,1}},\ldots,\overline{s_{i,n}})$. 
We prove by induction on $i$ that $g_i = \overline{s_i}$. 

This is clear when $i = 0$, since for each $j \in [n]$, $g_{0,j} = 0^{[\Sta \to \Sta]} =
\overline{0^{\Sta}} = \overline{s_{0,j}}$. To prove the induction step, 
suppose that we have established our claim for some $i\geq 0$.
Then for all $j\in [n]$, 
\begin{eqnarray*}
g_{i+1,j} &=& t_j'^{\Sta}(g_i)\\
&=& t_j'^{\Sta}(\overline{s_i})\\
&=& \overline{t_j^{\Sta}(s_i)},\quad {\rm by}\ {\rm Claim}\ {\rm 1,}\\
&=& \overline{s_{i+1,j}}
\end{eqnarray*} 
For each $j\in [n]$, let $\phi_{0,j}$ denote the unique morphism
$0^{\Sta} \to s_{1,j}$. 
Then define 
\begin{eqnarray*}
  \phi_0 &=& (\phi_{0,1},\ldots,\phi_{0,n}) \\
  \phi_{i+1} &=& (\phi_{i+1,1},\ldots,\phi_{i+1,n}) = (t_1^{\Sta}(\phi_i),\ldots,t_n^{\Sta}(\phi_i))
\end{eqnarray*}   
Next,  let $\psi_{0,j}$ denote the unique natural transformation 
$0^{[\Sta \to \Sta]} \to g_{1,j}$, for each $j\in [n]$.
Define
\begin{eqnarray*}
\psi_0 &=& (\psi_{0,1},\ldots,\psi_{0,n})\\
\psi_{i+1} &=& (\psi_{i+1,1},\ldots,\psi_{i+1,n}) =
 (t_1'^{[\Sta \to \Sta]}(\psi_i),\ldots, t_n'^{[\Sta \to \Sta]}(\psi_i))
\end{eqnarray*}
Thus, each $\psi_{i,j}$ is a natural transformation from 
$g_{i,j} = \overline{s_{i,j}}$ to $g_{i+1,j} = \overline{s_{i+1,j}}$,
and each $\phi_{i,j}$ is a morphism $s_{i,j} \to s_{i+1,j}$. 
Define $\overline{\phi_{i,j}}$ to be the natural transformation 
$\overline{s_{i,j}} \to \overline{s_{i+1,j}}$ such that 
for any tree $f$, the corresponding component of $\overline{\phi_{i,j}}$ is 
$\phi_{i,j} \cdot f$. Let $\overline{\phi_i} = (\overline{\phi_{i,1}},\ldots,
\overline{\phi_{i,n}})$. 

\emph{Claim 3.} For each $i$, it holds that 
$\psi_i = \overline{\phi_i}$. 

The proof is similar to the above argument. 
Using this claim, it follows that $\phi_{i,j} = \psi_{i,j}(1^{\Sta})$
for each $i,j$, since 
\[\psi_{i,j}(1^{\Sta}) = \overline{\phi_{i,j}}(1^{\Sta}) = \phi_{i,j}.\]

It is now easy to complete the proof. By the Beki\'c identity,
\begin{eqnarray*} 
|E'^{\Sta}| 
&=& 
|G^{\Sta}|(1^{\Sta})\\
&=& 
\Colim((\psi_{i,1}(1^{\Sta}) : g_{i,1}(1) \to g_{i+1,1}(1^{\Sta}))_{i\geq 0})\\
&=& 
\Colim((\phi_{i,1} : s_{i,1} \to s_{i+1,1})_{i \geq 0})\\
&=& 
|E|^{\Sta},
\end{eqnarray*} 
up to isomorphism.
\end{proof}

\begin{exa}\label{Ex:convexpl}
Suppose that $E$ is given by the single equation 
\begin{eqnarray*}
F &=&  1 + a\cdot F \cdot b 
\end{eqnarray*}
Then $E'$ is 
\begin{eqnarray*}
G_0 &=& G(1)\\
G(v) &=& v + a.(G(b.v))
\end{eqnarray*}
Both of them define the tree depicted on Figure~\ref{Fig:anbn}. 
\end{exa}

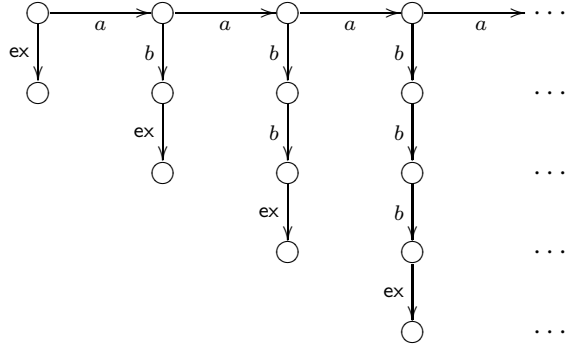
\begin{figure} 
\newcommand{\state}{*+[o][F-]{\makebox(3,3){}}}
\[\xymatrix@C=8ex@R=4ex{
\state{} \ar[d]_{\ex} \ar[r]_a & \state{} \ar[d]_b \ar[r]_a & \state{} \ar[d]_b \ar[r]_a   & \state{} \ar[d]_b \ar[r]_a & \cdots  \\
\state{}  & \state{}  \ar[d]_{\ex}  & \state{} \ar[d]_b  &  \state{}\ar[d]_b  & \cdots \\
          & \state{} & \state{} \ar[d]_{\ex}   & \state{} \ar[d]_b & \cdots \\
          &          & \state{}                & \state{} \ar[d]_{\ex}  & \cdots \\ 
          &          &                         & \state{} & \cdots \\
}
\]
\caption{The $\Delta$-regular and $\Gamma$-algebraic tree from Example~\ref{Ex:convexpl}}
\label{Fig:anbn}
\end{figure}

  The translation given in the proof of Theorem~\ref{Thm:Delta2Gamma}
allows us to characterize $\Delta$-regular trees by a syntactic restriction 
on $\Gamma$-algebraic schemes.

 \begin{cor}
\label{cor:charac-delta-regular-in-gamma-algebraic}
A tree is $\Delta$-regular if and only if it is defined by an algebraic recursion scheme $G$ over $\Gamma$ of the
  form
  \begin{eqnarray*}
    G_0 &=& G_1(1)\\
    G_1(v_1) &=& p_1\\
    &\vdots &\\
    G_n(v_1) &=& p_n
  \end{eqnarray*} 
  where $G_0$ has rank $0$, each $G_i$ with $i \geq 1$ has rank $1$,
  and where none of the terms $p_i$ has an occurrence of the constant
  $1$. 
\end{cor}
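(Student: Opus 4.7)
\medskip

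The plan is to prove both directions; the forward direction is essentially immediate from the proof of Theorem~\ref{Thm:Delta2Gamma}, while the backward direction requires constructing an inverse translation from a $\Gamma$-scheme of the indicated form back to a $\Delta$-regular scheme.

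For the forward direction, I would simply observe that the scheme $E'$ produced from an arbitrary $\Delta$-regular scheme $E$ in the proof of Theorem~\ref{Thm:Delta2Gamma} already has exactly the required form. The translation $t \mapsto t'$ defined there replaces the constant $1$ by the variable $v_1$ (and introduces $v_1$ in place of the ``nullary'' letters $a$ as $a.v_1$, while terms $t_1 \cdot t_2$ are handled by substitution). A straightforward structural induction confirms that the resulting translated terms $t_i'$ contain no occurrence of the constant $1$, only of $v_1$.

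For the backward direction, I would proceed by inverting this translation. Given a $\Gamma$-algebraic scheme $G$ of the stated form, introduce fresh rank-$0$ functor variables $F_1, \ldots, F_n$ and define, for each variable-free term $p$ over $(\Gamma \setminus \{1\}) \cup \{v_1\} \cup \{G_1, \ldots, G_n\}$, a $\Delta$-term $\tilde{p}$ over $\Delta \cup \{F_1, \ldots, F_n\}$ by induction: $\tilde{v_1} = 1$, $\tilde{0} = 0$, $\tilde{a.t} = a \cdot \tilde{t}$, $\widetilde{t_1 + t_2} = \tilde{t_1} + \tilde{t_2}$, and $\widetilde{G_j(t)} = F_j \cdot \tilde{t}$. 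Consider then the $\Delta$-regular scheme $E$ consisting of the equations $F_j = \tilde{p_j}$ for $j \in [n]$, and claim that the first component $F_1^{\Sta}$ of its initial solution equals the tree defined by the original $\Gamma$-scheme, which is $G_1^{\Sta}(1^{\Sta})$.

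The key lemma, proved by structural induction on $p$ and playing the role of Claim~1 in the proof of Theorem~\ref{Thm:Delta2Gamma}, is the following: for every term $p$ as above, for any trees $h_1, \ldots, h_n \in \Sta$, and writing $\hat{h}$ for the functor $r \mapsto h \cdot r$, one has
\[
p^{\Sta}(\hat{h_1}, \ldots, \hat{h_n})(r) \;=\; \tilde{p}^{\Sta}(h_1, \ldots, h_n) \cdot r
\]
up to isomorphism, for every tree $r$. The base cases $p = v_1$ and $p = 0$ are immediate; the case $p = a.q$ uses that $a.x = a \cdot x$ and associativity; the case $p = q_1 + q_2$ uses right-distributivity $(x_1 + x_2) \cdot r = x_1 \cdot r + x_2 \cdot r$ (noted in the first remark of Section~\ref{Sect:syntrees}); and the case $p = G_j(q)$ uses associativity together with the inductive hypothesis. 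The assumption that $1$ does not occur in $p$ is precisely what makes the translation well-defined and the inductive step available (if $1$ appeared, the corresponding case $\tilde{1} \cdot r = 1 \cdot r = r$ would be correct only when $r = 1$, not in general). Applying the lemma to the approximants of the two initial solutions, as in the end of the proof of Theorem~\ref{Thm:Delta2Gamma}, and passing to colimits via Bekić's identity, one obtains $F_1^{\Sta} = G_1^{\Sta}(1^{\Sta})$ up to isomorphism, proving that the tree is $\Delta$-regular.

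The main obstacle is the careful bookkeeping in the lemma, especially the $G_j$-case, and ensuring that the isomorphism is compatible with the connecting morphisms in the $\omega$-chains of approximants so that the colimit identification is legitimate; however, this is entirely analogous to the verification performed in the proof of Theorem~\ref{Thm:Delta2Gamma}, and no genuinely new technical difficulty arises.
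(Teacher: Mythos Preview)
Your proposal is correct and follows the same overall strategy as the paper: the forward direction is immediate from the translation in Theorem~\ref{Thm:Delta2Gamma}, and for the backward direction you define the same inverse translation (your $\tilde{p}$ is the paper's $\widehat{p}$) and build the $\Delta$-regular scheme $E$ from it.

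The one difference worth noting is in how the correctness of the backward translation is established. You prove a direct semantic lemma $p^{\Sta}(\hat{h}_1,\ldots,\hat{h}_n)(r) = \tilde{p}^{\Sta}(h_1,\ldots,h_n)\cdot r$ and then redo the approximant/colimit argument. The paper instead proves the purely \emph{syntactic} identity $(\widehat{p})' = p$, i.e.\ that $\widehat{\cdot}$ is a right inverse to the forward translation $(\cdot)'$ of Theorem~\ref{Thm:Delta2Gamma}; once this holds, the scheme $G$ is literally the scheme $E'$ produced from $E$ by that theorem, so the semantic equivalence and the colimit bookkeeping come for free from the proof already done there. The paper's route is thus a bit more economical---it avoids re-proving an analogue of Claim~1---while yours is self-contained and makes the role of the ``no occurrence of $1$'' hypothesis explicit in the semantic step. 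Both are valid.
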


\begin{proof} In the light of the proof of Theorem~\ref{Thm:Delta2Gamma},
it is enough to show that any such scheme can be obtained from some
  regular recursion scheme $E$ over $\Delta$.

  Let $\Psi = \{G_1,\ldots,G_n\}$ and $\Phi = \{F_1,\ldots,F_n\}$,
  where each $F_i$ has rank $0$.  We give a transformation of a
  $\Gamma \cup \Psi$-term $p$ in the variable $v_1$, which contains no
  occurrence of the constant $1$, into a variable-free $\Delta \cup
  \Phi$-term $\widehat{p}$ such that $(\widehat{p})' = p$, where
  $(\widehat{p})'$ is defined in the proof of
  Theorem~\ref{Thm:Delta2Gamma}.
  \begin{enumerate}
  \item If $p = 0$ then $\widehat{p} = 0$.
  \item If $p = v_1$ then $\widehat{p} = 1$.
  \item If $p = p_1 + p_2$ then $\widehat{p} = \widehat{p_1} +
    \widehat{p_2}$.
  \item If $p = a.p_1$ then $p' = a \cdot \widehat{p_1}$.
  \item If $p = G_i(p_1)$ then $\widehat{p} = F_i \cdot
    \widehat{p_1}$.
  \end{enumerate}

  {\it Claim.} Let $p$ be a $\Gamma \cup \Psi$-term in the variable
  $v_1$ containing no occurrence of the constant $1$. Then
  $(\widehat{p})' = p$.

  We prove this claim by induction on the structure of $p$. When $p =
  0$, $\widehat{p} = 0$ and $(\widehat{p})' = 0$, and when $p = v_1$,
  $\widehat{p} = 1$ and $(\widehat{p})' = v_1$.  Suppose that $p$ is
  of the form $p_1 + p_2$ and that the claim holds for $p_1$ and
  $p_2$. In this case $\widehat{p} = \widehat{p_1} + \widehat{p_2}$
  and $(\widehat{p})' = (\widehat{p_1})' + (\widehat{p_2})' = p_1 +
  p_2 = p$, by the induction hypothesis.  Next let $p = a.p_1$ where
  $a \in A$ and suppose that the claim holds for $p_1$.  Then
  $\widehat{p} = a \cdot \widehat{p_1}$ and $(\widehat{p})' =
  (a.v_1)((\widehat{p_1})') = a.(\widehat{p_1})' = a.p_1 = p$. Last,
  suppose that $p = G_i(p_1)$ for some $i \in [n]$ and $p_1$
  satisfying the claim. Then we have $\widehat{p} = F_i\cdot
  \widehat{p_1}$ and $(\widehat{p})' = G_i((\widehat{p_1})') =
  G_i(p_1) = p$.  This completes the proof of the claim.

  Now consider the regular $\Delta$-scheme $E$
  \begin{eqnarray*}
    F_1 &=& \widehat{p_1}\\
    &\vdots &\\ 
    F_n &=& \widehat{p_n}
  \end{eqnarray*} 
  By the proof of Theorem~\ref{Thm:Delta2Gamma}, $G$ corresponds to
  $E$. Thus, $E$ and $G$ define the same tree.   
\end{proof}

\begin{rem}
The restriction on the use of the constant $1$ in Corollary~\ref{cor:charac-delta-regular-in-gamma-algebraic} is necessary.
Indeed the proof of Proposition~\ref{prop:gamma-algebraic-not-delta-regular} to follow gives an example of $\Gamma$-algebraic scheme of rank 1 generating a tree that is not $\Delta$-regular even up to bisimulation equivalence.
\end{rem}

To establish Equation~\eqref{eq:hierarchy-bisim}, it remains to show the strictness of the second inclusion up to bisimulation equivalence.

\begin{prop}
\label{prop:gamma-algebraic-not-delta-regular}
There exists a $\Gamma$-algebraic synchronization tree that is not 
bisimilar to any $\Delta$-regular tree. 
\end{prop}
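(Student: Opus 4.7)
The witness I would exhibit is the $\Gamma$-algebraic synchronization tree $T$ given by the recursion scheme displayed just before this proposition in the discussion of Figure~\ref{fig:non-BPA2}, namely
\[
G = F(1,1), \qquad F(v_1,v_2) = v_1 + c.v_2 + a.F(b.F(v_1,v_2),\, b.v_2).
\]
Recall that $T$ is, up to isomorphism, the unfolding of the LTS of Figure~\ref{fig:non-BPA2}. It is $\Gamma$-algebraic by construction, so the only thing to prove is that no $\Delta$-regular tree is bisimilar to $T$.

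My strategy is to argue by contradiction. I would first make precise the folklore observation that $\Delta$-regular synchronization trees coincide, up to bisimilarity, with the unfoldings of labelled transition systems induced by Basic Process Algebra (BPA) specifications (augmented with the termination constant $1$ and the deadlock constant $0$). Given a $\Delta$-regular recursion scheme with equations $F_i = t_i$, the $F_i$ are reinterpreted as BPA process constants and each $t_i$ as a BPA expression built using sum, sequential product, and the constants $a \in A$, $0$, and $1$; the standard structural operational semantics then yields a finitely-branching labelled transition system whose unfolding is isomorphic to the synchronization tree defined by the scheme. Thus, if $T$ were bisimilar to some $\Delta$-regular tree, then the LTS of Figure~\ref{fig:non-BPA2} would admit a BPA specification modulo bisimilarity, contradicting the result of Moller recalled in the paper, namely~\cite[page~206, Example~(c)]{Moller96}.

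The step I expect to be most delicate is justifying that Moller's impossibility, which is stated for pure BPA, remains valid for BPA enriched with the termination and deadlock constants $0$ and $1$. I plan to handle this by observing that $1$ is a unit for sequential composition and $0$ a left absorbing element, so they can be eliminated from any $\Delta$-regular specification without altering the bisimilarity class of the associated tree on the visible actions $a,b,c$, apart from the $\ex$-labelled edges, which merely add trivially-bisimilar leaves at exit vertices. Alternatively, a direct inspection of Moller's norm-and-decomposition argument shows that it carries through verbatim in the presence of these constants, since they contribute only trivial equivalence classes under sequential decomposition of BPA terms. Once this technical point is verified, the contradiction is immediate and the proposition follows.
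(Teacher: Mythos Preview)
Your proposal has a genuine gap in the step you yourself flag as delicate: extending Moller's impossibility from pure BPA to $\Delta$-regular schemes (BPA with $0$ and $1$). Your elimination argument does not go through. The identities you invoke only simplify occurrences of the form $1\cdot t$, $t\cdot 1$, $0\cdot t$, $t+0$; they do nothing about $1$ appearing as a summand (which contributes an $\ex$-edge at a non-leaf vertex) or about right-multiplication by $0$ (which kills all exit edges of a subtree). So one cannot in general rewrite a $\Delta$-regular scheme into a pure BPA scheme defining a bisimilar tree. Your fallback---that Moller's norm-and-decomposition argument ``carries through verbatim''---is not justified either: with $1$ available as a summand, products like $(1+a)\cdot X$ unfold to $X + a\cdot X$, so the sequential decomposition of a subtree is no longer constrained the way it is in pure BPA, and the unique-factorisation reasoning that underlies Moller's argument needs to be reworked, not merely cited.

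The paper avoids this difficulty altogether. It chooses a different witness tree $T$ with the special feature that \emph{every} vertex is the source of an $\ex$-edge, and then argues directly at the level of $\Delta$-regular schemes (with $0$ and $1$). The key is a decomposition lemma (Lemma~\ref{lem-lem2}): whenever a subtree of $T$ in a certain family is bisimilar to a nontrivial product $s_1\cdot s_2$, the left factor $s_1$ must be bisimilar to the single action $a$. This forces any $\Delta$-regular scheme defining $T$ up to bisimilarity to be equivalent to a right-linear one, hence (by Proposition~\ref{prop:right-linear-implies-regular}) $\Gamma$-regular, contradicting the fact that $T$ has infinitely many non-bisimilar subtrees. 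The choice of a tree where $\ex$-edges are pervasive is what makes the product-decomposition so rigid, and is precisely what lets the paper handle the constants $0$ and $1$ without appealing to external results.
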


\begin{proof}
Let $A = \{a,b\}$, and consider the synchronization tree 
$T\in \Sta$, defined by the $\Gamma$-algebraic recursion scheme:
\begin{eqnarray*}
S      & =&  F (1 + b.1)   \\
F(v_1) & =&   v_1 + a.(F (1 + b. v_1))\ .
\end{eqnarray*}

The tree $T$, depicted in Figure~\ref{fig:tree-strict}, has a single infinite branch whose edges are labelled 
$a$, and the out-degree of each vertex on this branch is $3$,
since each such vertex is the source of an edge labelled $a$, 
an edge labelled $b$, and an edge labelled $\ex$. Since 
$T$ is deterministic, its vertices may be identified 
with the words in the prefix closed language 
\[\{a^{n}b^{m}, a^{n} b^{m} \ex \mid n \geq 0 \tand m \leq n+1 \}.\] 
A key feature is that every vertex is the source of an edge labelled 
$\ex$.

\begin{figure}
  \begin{center}
    \begin{tikzpicture}[transform shape,scale=1.0,edge from parent/.style={draw,-latex}]
 \tikzstyle{n}=[circle, draw=black, fill=black, inner sep = 0.25mm, outer sep= 0.5mm]
 \tikzstyle{e}=[ inner sep = 0mm, outer sep= 0mm]
 \tikzstyle{l}=[font=\tiny,label distance=-1mm]
 \tikzstyle{level}=[level distance=0.95cm,sibling distance=1.75cm]

\node[n,label={[l]above:{$(\varepsilon)$}}] {}
  child { node[n,label={[l]above left:{$(a)$}}] {}
    child { node[n,label={[l]above left:{$(aa)$}}] {}
      child { node[e] {}
         edge from parent[dashed,-] }
      child { node[n,label={[l]left:{$(aab)$}}] {}
        child { node[e] {}
        edge from parent[draw=none] }
        child { node[n,label={[l]left:{$(aabb)$}}] {}
          child { node[e] {}
          edge from parent[draw=none] }
          child { node[n,label={[l]left:{$(aabbb)$}}] {}
            child[sibling distance=0.75cm] { node[n] {}
            edge from parent node[draw=none,right] {$\ex$}}
          edge from parent node[draw=none,left] {$b$}}
          child[sibling distance=0.75cm] { node[n] {}
          edge from parent node[draw=none,right] {$\ex$}}
        edge from parent node[draw=none,left] {$b$}}
        child[sibling distance=0.75cm] { node[n] {}
        edge from parent node[draw=none,right] {$\ex$}}
      edge from parent node[draw=none,left] {$b$}}
      child[sibling distance=0.75cm] { node[n] {}
      edge from parent node[draw=none,right] {$\ex$}}
    edge from parent node[draw=none,above left] {$a$}}
    child { node[n,label={[l]left:{$(ab)$}}] {}
      child { node[e] {}
              edge from parent[draw=none] }
      child { node[n,label={[l]left:{$(abb)$}}] {}
        child[sibling distance=0.75cm] { node[n] {}
        edge from parent node[draw=none,right] {$\ex$}}
      edge from parent node[draw=none,left] {$b$}}
      child[sibling distance=0.75cm] { node[n] {}
      edge from parent node[draw=none,right] {$\ex$}}
    edge from parent node[draw=none,left] {$b$}}
    child[sibling distance=0.75cm]  {node[n] {}
    edge from parent node[draw=none,right] {$\ex$}}
  edge from parent node[draw=none,above left] {$a$}}
  child { node[n,label={[l]left:{$(b)$}}] {}
    child[sibling distance=0.75cm] { node[n] {}
    edge from parent node[draw=none,right] {$\ex$}}
  edge from parent node[draw=none,left] {$b$}}
  child[sibling distance=0.75cm] { node[n] {}
  edge from parent node[draw=none,right] {$\ex$}}
;

\end{tikzpicture}
  \end{center}
  \caption{The tree $T$.}
\label{fig:tree-strict}
\end{figure}
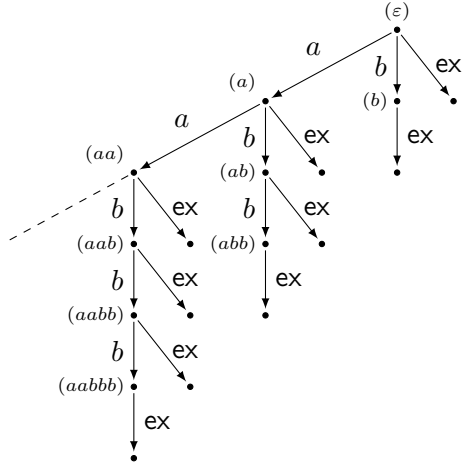

We are going to show that every $\Delta$-regular scheme $E$ defining a
synchronization tree bisimilar to $T$ is equivalent to a right-linear
one, modulo bisimilarity.  By
Proposition~\ref{prop:right-linear-implies-regular}, this would imply
that $T$ is $\Gamma$-regular, which yields a contradiction.  Indeed,
the tree $T$ has a countably infinite set of subtrees that are
pairwise non-bisimilar.

Without loss of generality, we may assume that the equations of $E$ 
are of one of three forms:
\begin{enumerate}
\item $G = G_{1} + G_2$,
\item $G = G_{1} \cdot G_{2}$, and 
\item $G = c$ for $c \in \{a,b\}\cup \{ 0, 1\}$\ .
\end{enumerate}

In the following, we are interested in a particular family 
$\mathcal{F}$ of `subtrees" of $T$, containing, for all $k \geq 0$, 
the subtree $T_k$ rooted at $a^k$, and the trees obtained from $T_k$  
by removing the exit edge originating in the root together with its target, or 
the edge labelled $b$ originating in the root together with all vertices and edges 
accessible from the end vertex of that edge, or both. We denote these 
synchronization trees by $T_k(1),T_k(b)$ and $T_k(1,b)$, respectively.

\begin{lem}
\label{lem-lem2}
Suppose that a tree $s\in \mathcal{F}$ is bisimilar to a tree $s_1 \cdot s_2$,
where neither $s_1$ nor $s_2$ is bisimilar to $1$. 
Then for some $k$, $s = T_k(1,b)$, $s_1$ is bisimilar to $a = a.1$, and $s_2$ is 
bisimilar to $T_{k+1}$.  
\end{lem}  

\begin{proof}
Suppose that $s$ is bisimilar to $s_1 \cdot s_2$ and 
neither $s_1$ nor $s_2$ is bisimilar to the tree $1$. Note that $s_2$ is not $0$. 
Clearly, each vertex of $s_1$, except possibly the root, 
must be the source of an exit edge. Suppose that $s$ 
contains and edge labelled $a$ from $x_1$ to $x_2$ 
such both $x_1$ and $x_2$ are sources of an exit edge. 
Then in the tree $s_1\cdot s_2$, they have successor vertices 
$y_1$ and $y_2$ such that the subtrees rooted at $y_1$ and $y_2$ 
are isomorphic (and thus bisimilar) and contain at least one edge.
But $T$ does not have such vertices connected by an 
edge labelled $a$ and therefore neither does $s$.
For this reason, $s_1$ cannot have two 
consecutive edges labelled $a$ either. This in turn yields that 
$s_2$ has at least one edge labelled $a$ and therefore 
$s_1$ cannot have an edge labelled $b$. We conclude that
$s_1$ is bisimilar to the tree $a = a.1$ and then 
$s_2$ is bisimilar to $T_{k+1}$ for some $k$.  
\end{proof}

Now, by Lemma~\ref{lem-lem2}, we may transform $E$ into a right-linear scheme 
defining $T$ up to bisimilarity. First mark all those variables $G$ 
such that the corresponding component in the initial solution of $E$ 
over $\Sta$ has an infinite branch. The first variable is clearly marked. 
Suppose that $G$ is marked. If the equation for $G$ is 
$G = G_1 + G_2$, then $G_1$ or $G_2$ is marked. If one of them 
is not marked, then it can be replaced by a variable-free term.
If the equation for $G$ is $G = G_1 \cdot G_2$ and 
the component in the initial solution of $E$ 
over $\Sta$ corresponding to one of the $G_i$
is bisimilar to $1$, then we may simply remove it.
Otherwise we apply  Lemma~\ref{lem-lem2} and replace 
$G_1$ by $a$ and mark $G_2$ if it is not yet marked. 
 Eventually, we keep only the marked functor 
variables and obtain a right-linear scheme defining $T$ 
up to bisimilarity.
\end{proof}

\subsection{Comparison with BPA and BPP} 

The $\Delta$-regular trees that can be defined using
regular $\Delta$-recursion schemes that do not contain occurrences of
the constants $0$ and $1$ correspond to unfoldings of the labelled
transition systems denoted by terms in Basic Process Algebra (BPA)
with recursion, see, for
instance,~\cite{BaetenBR2009,Baetenetal,BergstraK84}. Indeed, the
signature of BPA contains one constant symbol $a$ for each action as
well as the binary $+$ and $\cdot$ operation symbols, denoting
nondeterministic choice and sequential composition,
respectively. In the remainder of this paper, we write
BPA for `BPA with recursion'. 

Alternatively, following~\cite{Moller96}, one may view
BPA as the class of labelled transition systems associated with
context-free grammars in Greibach normal form in which only leftmost
derivations are permitted. 

The class of Basic Parallel Processes (BPP) is a parallel counterpart
of BPA introduced by Christensen~\cite{Christensen83}. BPP consists of
the labelled transition systems associated with context-free grammars
in Greibach normal form in which arbitrary derivations are allowed. We
refer the interested readers to~\cite{Moller96} for the details of the
formal definitions, which are not needed to appreciate the results to
follow, and further pointers to the literature.

\begin{prop}\label{Prop:moreexpthanBPA}\quad
\begin{enumerate}
\item Every synchronization tree that is the unfolding of a BPA
process is $\Gamma$-algebraic.
\item There is a $\Gamma$-algebraic synchronization tree that is
  neither definable in BPA modulo bisimilarity nor in BPP modulo
  language equivalence.
\end{enumerate}
\end{prop}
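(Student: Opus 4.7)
The plan handles the two parts separately.

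For Part~1, the proof reduces to Theorem~\ref{Thm:Delta2Gamma}. The text immediately preceding the proposition already observes that a BPA specification with recursion is nothing but a regular $\Delta$-recursion scheme built without occurrences of $0$ or $1$: the BPA action constants $a \in A$ match the $\Delta$-constants of the same name, and the binary operators $+$ and $\cdot$ of BPA match those of $\Delta$. The LTS unfolding of the BPA process coincides with the synchronization tree produced by the initial solution of the corresponding $\Delta$-scheme in $\Sta$, once successful termination of an action is encoded by the $\ex$-labelled edge leaving the leaf (this is exactly the effect of the morphism $\tau$ described in Section~\ref{sec:morphism}). Hence every BPA unfolding is $\Delta$-regular, and Theorem~\ref{Thm:Delta2Gamma} gives that it is $\Gamma$-algebraic.

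For Part~2 the plan is to use a single tree as witness for both non-expressibility results. The natural candidate is the synchronization tree $T$ associated with the LTS on Figure~\ref{fig:non-BPA2}. Its $\Gamma$-algebraicity is already exhibited by the two-variable recursion scheme $G = F(1,1)$, $F(v_1,v_2) = v_1 + c.v_2 + a.F(b.F(v_1,v_2),b.v_2)$ displayed just after the figure. That $T$ is not bisimilar to the unfolding of any BPA process is cited from Moller's example in \cite[page~206, Example~(c)]{Moller96}; the argument there rests on the unique-decomposition theorem for normed BPA modulo bisimilarity, and the counter-like structure of the LTS (in which the number of $b$'s still to be performed before termination is an unbounded piece of information) is incompatible with any such decomposition over a finite BPA variable base.

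It remains to show that the path language $L$ of $T$ is not a BPP language. Writing the paths to the unique exit edge explicitly, $L$ consists of the words $w \cdot c \cdot b^{\#_a(w)-\#_b(w)}$, where $w \in \{a,b\}^*$ is prefix-balanced (every prefix has at least as many $a$'s as $b$'s). I propose to invoke a pumping/iteration lemma for BPP languages in the spirit of \cite{Christensen83}, which forces any sufficiently long word of a BPP language to admit a pumpable decomposition whose iterated factors originate in independent parallel threads and can therefore be commuted with neighbouring factors. Applied to words $a^N c\, b^N \in L$ with $N$ large, such a commutation either produces a second occurrence of the letter $c$ in a word of $L$, contradicting the uniqueness of $c$, or it shuffles $a$'s or $b$'s across the barrier imposed by $c$, breaking the exact balance $|w_2| = \#_a(w_1) - \#_b(w_1)$ that every word of $L$ must satisfy. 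Either outcome gives the required contradiction.

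The main obstacle is this last step. Part~1 is essentially a one-line reduction once the syntactic identification between BPA specifications and $0,1$-free regular $\Delta$-schemes is made explicit, and the non-BPA part of Part~2 can be quoted from the literature. The separation of $L$ from BPP path languages, however, genuinely requires a formal-language argument about commutativity or pumping in BPP that is not developed earlier in the paper and must be carried out by hand, or else replaced by an appeal to a known structural invariant of BPP languages that rules out the simultaneous counter-with-marker behaviour encoded in $L$.
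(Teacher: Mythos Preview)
Your treatment of Part~1 matches the paper's: both simply observe that BPA specifications are $0,1$-free regular $\Delta$-schemes and invoke Theorem~\ref{Thm:Delta2Gamma}.

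For Part~2 you diverge from the paper in the choice of witness, and this is where your main obstacle comes from. You take the LTS of Figure~\ref{fig:non-BPA2} (Moller's Example~(c)), for which only the BPA-inexpressibility is quoted in the text, and then set out to prove BPP-inexpressibility of its path language by hand via a commutation/pumping argument. The paper instead uses a \emph{different} LTS, the one in Figure~\ref{fig:algebraic-not-BPA}, together with the scheme
\[
F_1 = b + c + a.F_2(b^2,c^2),\qquad F_2(v_1,v_2) = v_1 + v_2 + a.F_2(b.v_1,c.v_2).
\]
For this tree, Moller's Example~(f) in~\cite{Moller96} already establishes \emph{both} negative results: non-expressibility in BPA modulo bisimilarity and non-expressibility in BPP modulo language equivalence. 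So the paper's proof of Part~2 is again a pure citation, with no formal-language work needed.

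Your route is not wrong in spirit, but it is strictly harder, and the hard step is left as a sketch: you would need a precise iteration lemma for BPP trace languages and a careful case analysis to rule out your language, and you would first need to verify that Example~(c) really is outside BPP up to language equivalence, which is not asserted anywhere in the paper or in the cited passage of~\cite{Moller96}. Switching the witness to Example~(f) removes the obstacle entirely.
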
 
\begin{proof}
The former claim follows easily from
Theorem~\ref{Thm:Delta2Gamma}. In order to prove the latter statement,
consider the LTS depicted on Figure~\ref{fig:algebraic-not-BPA}. This
LTS is not expressible in BPA modulo modulo bisimilarity and is not
expressible in BPP modulo language equivalence (if the states $q$ are
$r$ are the only final states in the LTS)---see~\cite[page~206,
  Example (f)]{Moller96}. On the other hand, the synchronization tree
associated with that LTS is $\Gamma$-algebraic because it is the
unique solution of the recursion scheme below.
\begin{eqnarray*}
F_1 & = & b + c + a . F_2(b^2,c^2) \\
F_2(v_1,v_2) & = & v_1 + v_2 + a . F_2(b . v_1, c . v_2) 
\end{eqnarray*}
\end{proof}

So non-regular $\Gamma$-algebraic recursion schemes are more
expressive than BPA modulo bisimilarity and can express
synchronization trees that cannot be defined in BPP up to language
equivalence, and therefore up to bisimilarity. 

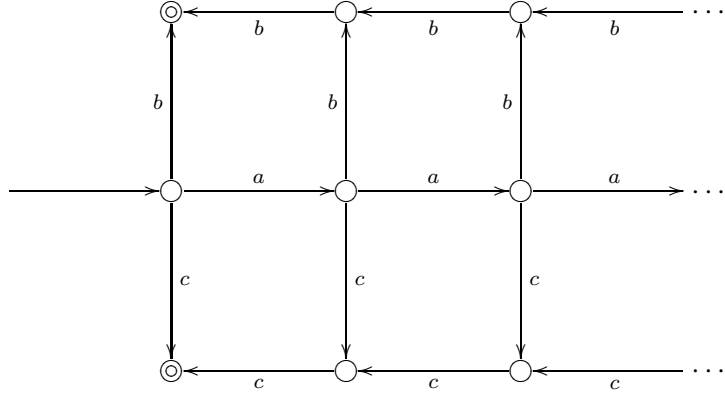
\begin{figure} 
\newcommand{\state}{*+[o][F-]{\makebox(3,3){}}}
\newcommand{\istate}{*+[o][F-]{\makebox(3,3){}}}
\newcommand{\fstate}{*+[o][F=]{\makebox(3,3){}}}
\[\xymatrix@C=12ex@R=12ex{
& \fstate{} & \state{} \ar[l]^b  & \state{} \ar[l]^b & \ar[l]^b \cdots  \\
\ar[r] & \istate{} \ar[u]^b \ar[r]^a \ar[d]^c & \state{}\ar[u]^b \ar[r]^a \ar[d]^c &  \state{}\ar[u]^b \ar[r]^a \ar[d]^c & \cdots \\
& \fstate{} & \state{} \ar[l]^c  & \state{} \ar[l]^c & \ar[l]^c \cdots  
}
\]
\caption{An LTS whose unfolding is an algebraic synchronization tree}
\label{fig:algebraic-not-BPA}
\end{figure}

\section{Comparison with the Caucal hierarchy}\label{Sect:caucal}

In this section, we compare the expressiveness of recursion schemes to
that of the low classes in the Caucal
hierarchy~\cite{Caucal}. Section~\ref{Sect:definition-caucal} gives a
general overview of the Caucal
hierarchy. Section~\ref{Sect:first-levels} presents in more details
the properties of the graphs and trees sitting in the first
levels of the hierarchy. Section~\ref{Sect:treesinCH} shows that the classes of
synchronization trees we introduced belong to the Caucal
hierarchy. Finally Section~\ref{sec:charac-gamma-algebraic} characterises the
$\Gamma$-algebraic synchronization trees as contractions 
of the synchronization trees in the class $\Tree_{2}$.

\subsection{The Caucal hierarchy}
\label{Sect:definition-caucal}

The Caucal hierarchy (also known as the pushdown hierarchy) is 
a hierarchy of classes of edge-labelled graphs. Following~\cite{CarayolWohrle}, the Caucal hierarchy is
\[
\Tree_{0} \subseteq \Graph_{0} \subseteq \Tree_{1} \subseteq \Graph_{1} \subseteq \cdots
\]
where  $\Tree_0$ and $\Graph_0$ denote the classes of finite, edge-labelled trees and graphs, respectively. Moreover, for each $n\geq 0$, $\Tree_{n+1}$
stands for the class of trees isomorphic to unfoldings of graphs in $\Graph_n$, and the graphs in $\Graph_{n+1}$ are those that can be obtained from the trees in $\Tree_{n+1}$ by applying a monadic interpretation (or
transduction)~\cite{Courcelle94}. As shown by Caucal in \cite{Caucal03}, every graph in the Caucal hierarchy has a decidable {theory for monadic second order logic.}

\begin{rem}
\label{rem:rooted-graphs}
For a graph $G \in \Graph_{n}$ and a vertex $v$ of $G$, the graph $H$ obtained 
by restricting $G$ to the set of vertices reachable from $v$ is also a graph in $\Graph_{n}$ \cite{CarayolWohrle}. In particular, we can always assume that
a tree in $\Tree_{n+1}$ is obtained by unfolding a graph in $\Graph_{n}$ from one of its root.
\end{rem}

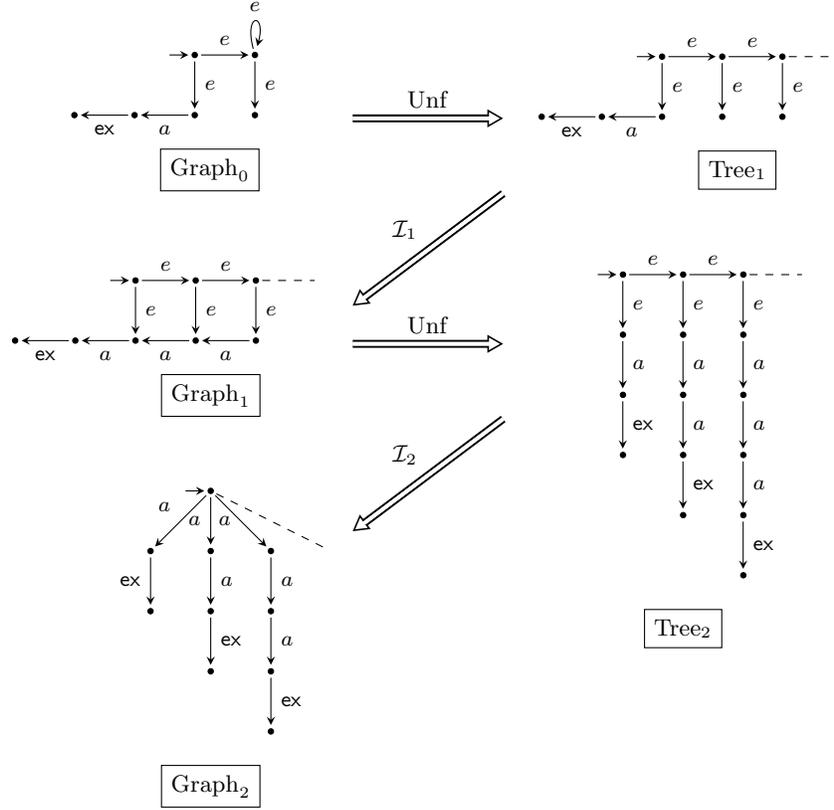
\begin{figure}
  \begin{center}
 \tikzstyle{n}=[circle, draw=black, fill=black, inner sep = 0.25mm, outer sep= 0.5mm]
 \tikzstyle{e}=[ inner sep = 0mm, outer sep= 0mm]
 \tikzstyle{l}=[font=\tiny,label distance=-1mm]
 \tikzstyle{o}=[draw,rectangle]

\tikzstyle{vecArrow} = [thick, decoration={markings,mark=at position
   1 with {\arrow[semithick]{open triangle 60}}},
   double distance=1.4pt, shorten >= 5.5pt,
   preaction = {decorate},
   postaction = {draw,line width=1.4pt, white,shorten >= 4.5pt}]

\begin{tikzpicture}[scale=1,>=stealth,font=\scriptsize]

 \begin{scope}
     \matrix[matrix of nodes,column sep={0.8cm,between origins},row
     sep={0.8cm,between origins}]{
       &    &    |[n,initial,initial text=,initial distance=0.25cm](h0)| & |[n](g0)|  \\
       |[n](h3)| &   |[n](h2)|   &    |[n](h1)| & |[n](g1)| \\
        & |[o,xshift=1cm]| {\footnotesize$\mathrm{Graph}_{0}$} & & \\
     }; 
\draw (h0) edge[->] node[right] {$e$} (h1) (h1) edge[->]
     node[below] {$a$} (h2) (h2) edge[->] node[below] {$\ex$} (h3)
     (h0) edge[->] node[above] {$e$} (g0) (g0) edge[->] node[right]
     {$e$} (g1) (g0) edge[loop above] node[above] {$e$} (g0);
 \end{scope}

 \begin{scope}[xshift=7cm]
   \matrix[matrix of nodes,column sep={0.8cm,between origins},row sep={0.8cm,between origins}]{ & & |[n,initial,initial text=,initial
     distance=0.25cm](h1)| & |[n](h2)|
     & |[n](h3)| & |[e](h4)|   \\
     |[n](h5)| &   |[n](h6)|   &    |[n](h7)| & |[n](h8)| & |[n](h9)| & \\
      &  &  |[o,xshift=1cm]| {\footnotesize$\mathrm{Tree}_{1}$} &  \\
   }; \draw (h1) edge[->] node[above] {$e$} (h2) (h2) edge[->]
   node[above] {$e$} (h3) (h1) edge[->] node[right] {$e$} (h7) (h2)
   edge[->] node[right] {$e$} (h8) (h3) edge[->] node[right] {$e$}
   (h9) (h7) edge[->] node[below] {$a$} (h6) (h6) edge[->] node[below]
   {$\ex$} (h5) (h3) edge[dashed] (h4);
 \end{scope}

 \begin{scope}[yshift=-3cm]
   \matrix[matrix of nodes,column sep={0.8cm,between origins},row
   sep={0.8cm,between origins}]{ & & |[n,initial,initial text=,initial
     distance=0.25cm](h1)| & |[n](h2)|
     & |[n](h3)| & |[e](h4)|   \\
     |[n](h5)| &   |[n](h6)|   &    |[n](h7)| & |[n](h8)| & |[n](h9)| & \\
      &  &  & |[o,xshift=0.2cm]| {\footnotesize$\mathrm{Graph}_{1}$}  & & \\
   }; \draw (h1) edge[->] node[above] {$e$} (h2) (h2) edge[->]
   node[above] {$e$} (h3) (h1) edge[->] node[right] {$e$} (h7) (h2)
   edge[->] node[right] {$e$} (h8) (h3) edge[->] node[right] {$e$}
   (h9) (h7) edge[->] node[below] {$a$} (h6) (h6) edge[->] node[below]
   {$\ex$} (h5) (h3) edge[dashed] (h4) (h8) edge[->] node[below] {$a$}
   (h7) (h9) edge[->] node[below] {$a$} (h8);
 \end{scope}

 \begin{scope}[xshift=7cm,yshift=-4.5cm]
   \matrix[matrix of nodes,column sep={0.8cm,between origins},row
   sep={0.8cm,between origins}]{
     |[n,initial,initial text=,initial distance=0.25cm](h1)| & |[n](h2)| & |[n](h3)| & |[e](h4)|   \\
     |[n](g1)| & |[n](g2)| & |[n](g3)| & \\
     |[n](i1)| & |[n](i2)| & |[n](i3)| & \\
     |[n](j1)| & |[n](j2)| & |[n](j3)| & \\
     & |[n](k2)| & |[n](k3)| & \\
     &           & |[n](l3)| & \\
      &    |[o]| {\footnotesize$\mathrm{Tree}_{2}$} & &  \\
   }; \draw (h1) edge[->] node[above] {$e$} (h2) (h2) edge[->]
   node[above] {$e$} (h3) (h1) edge[->] node[right] {$e$} (g1) (h2)
   edge[->] node[right] {$e$} (g2) (h3) edge[->] node[right] {$e$}
   (g3) (g1) edge[->] node[right] {$a$} (i1) (g2) edge[->] node[right]
   {$a$} (i2) (g3) edge[->] node[right] {$a$} (i3) (i1) edge[->]
   node[right] {$\ex$} (j1) (i2) edge[->] node[right] {$a$} (j2) (i3)
   edge[->] node[right] {$a$} (j3) (j2) edge[->] node[right] {$\ex$}
   (k2) (j3) edge[->] node[right] {$a$} (k3) (k3) edge[->] node[right]
   {$\ex$} (l3) (h3) edge[dashed] (h4);
 \end{scope}

 \begin{scope}[yshift=-7cm,xshift=1cm]
   \matrix[matrix of nodes,column sep={0.8cm,between origins},row
   sep={0.8cm,between origins}]{
     &  |[n,initial,initial text=,initial distance=0.25cm](h1)| &  & \\
     |[n](g1)|  & |[n](g2)| & |[n](g3)| & |[e](d)|\\
     |[n](i1)| & |[n](i2)| & |[n](i3)| &  \\
     & |[n](j2)| & |[n](j3)| & \\
     &           & |[n](k3)| & \\
      &    |[o]| {\footnotesize$\mathrm{Graph}_{2}$} & &  \\
   };

   \draw (h1) edge[->] node[above left] {$a$} (g1) (h1) edge[->]
   node[left] {$a$} (g2) (h1) edge[->] node[ left] {$a$} (g3)
 
   (g1) edge[->] node[left] {$\ex$} (i1) (g2) edge[->] node[right]
   {$a$} (i2) (g3) edge[->] node[right] {$a$} (i3)

   (i2) edge[->] node[right] {$\ex$} (j2) (i3) edge[->] node[right]
   {$a$} (j3)

   (j3) edge[->] node[right] {$\ex$} (k3)

   (h1) edge[dashed] (d);
 \end{scope}

\draw (2.5cm,-0cm) edge[vecArrow] node[above] {\footnotesize $\mathrm{Unf}$} (4.5cm,-0cm);
\draw (4.5cm,-1cm) edge[vecArrow] node[above left] {\footnotesize $\mathcal{I}_{1}$} (2.5cm,-2.5cm);
\draw (2.5cm,-3cm) edge[vecArrow] node[above] {\footnotesize $\mathrm{Unf}$} (4.5cm,-3cm);
\draw (4.5cm,-4cm) edge[vecArrow] node[above left] {\footnotesize $\mathcal{I}_{2}$} (2.5cm,-5.5cm);
  \end{tikzpicture}
\end{center}

  \caption{A construction of the synchronization tree $\sum_{i \geq 1} a^{i}$ in the Caucal hierarchy.}
  \label{fig:caucal}
\end{figure}

\begin{exa}
Figure~\ref{fig:caucal} shows a sequence of transformations constructing the synchronization tree $\sum_{i \geq 1} a^{i}$ starting from a finite graph. 
{
The vertices from which the graphs are unfolded are signalled by an incoming arrow.
The MSO-interpretation $\mathcal{I}_{1}$ adds an $a$-labelled edge from a vertex $u$
to a vertex $v$ if $u$ has no outgoing edges and there exist two vertices $s$ and $t$ with 
$s \era{e} t$, $s \era{e} v$, $t \era{e} u$. The MSO-interpretation $\mathcal{I}_{2}$ is the $\{e\}$-contraction operation.}
\end{exa}

\subsection{First levels of the hierarchy}
\label{Sect:first-levels}

The class $\Tree_{1}$ contains the regular trees of finite outdegree (i.e.,
the trees of finite degree having finitely many non-isomorphic subtrees). 
It is well known that $\Graph_1$ is the set of all prefix-recognizable graphs~\cite{Caucal03}. 

A \emph{prefix recognizable relation} over a finite alphabet $C$ is a
finite union of relations of the form\footnote{ $U \cdot (V\times W)$ denotes the relation $\{ (uv,uw) \mid u \in U, v\in V \;\textrm{and}\; w \in W\}$.} $U \cdot (V\times W)$, for some
nonempty regular languages $U,V,W\subseteq C^*$. A \emph{prefix
  recognizable graph} over an alphabet $B$ is an edge-labelled graph
that is isomorphic to a graph of the form $(V,
(\stackrel{b}{\longrightarrow})_{b \in B})$, where for some alphabet
$C$, $V$ is a regular subset of $C^*$ and the edge relations
$\stackrel{b}{\longrightarrow}$ are prefix recognizable relations over
$C$. (Of course, the alphabet $C$ may be fixed to be a 2-element
alphabet.)

\begin{prop}
\label{prop:level-one}
For a labelled graph $G$, the following statements are equivalent:
\begin{itemize}
\item $G$ belongs to $\Graph_{1}$,
\item $G$ is isomorphic to a prefix-recognizable graph \cite{Caucal},
\item $G$ can be MSO-interpreted in the full binary tree $\Delta_{2}$ \cite{Blumensath}.
\end{itemize}
\end{prop}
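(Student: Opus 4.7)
The plan is to establish the three conditions as equivalent by proving the cycle: $\Graph_1 \Rightarrow$ MSO-interpretable in $\Delta_2 \Rightarrow$ prefix-recognizable $\Rightarrow \Graph_1$. Each step builds on a classical result, and the bulk of the work consists of composing these results carefully and verifying that the constructions produce the desired graphs up to isomorphism.

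For the first implication, a graph in $\Graph_1$ is by definition obtained from some $T \in \Tree_1$ by an MSO-interpretation, where $T$ is the unfolding of a finite graph $H \in \Graph_0$ from a root vertex (using Remark~\ref{rem:rooted-graphs}). The first step is then to exhibit an MSO-interpretation of $T$ in $\Delta_2$. To do so, I would fix a binary encoding of the edge alphabet of $H$ so that paths in $H$ from the root correspond to words over $\{0,1\}^*$, then use MSO formulas to pick out the vertices encoding well-formed paths and to relate two such paths whenever they differ by a single edge of a specified label of $H$. Since MSO-interpretations (and transductions) are closed under composition, the composite interpretation witnesses that the original graph in $\Graph_1$ is MSO-interpretable in $\Delta_2$.

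For the second implication, I would invoke Blumensath's theorem directly. The core content is Rabin's theorem: MSO-definable subsets and relations of $\Delta_2$ are precisely the regular tree languages, and after restricting to paths from the root of $\Delta_2$, MSO-definable relations between vertices correspond exactly to prefix-recognizable relations over $\{0,1\}$. From the interpreting formulas $(\delta(x), (\varphi_b(x,y))_{b \in B})$ one then reads off a regular vertex language and, for each $b$, a finite union of products $U \cdot (V \times W)$ describing the $b$-labelled edges, up to renaming the alphabet.

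For the third implication, I would reuse Caucal's original construction. Given a prefix-recognizable graph over vertices $V \subseteq C^*$ with edges that are finite unions of relations of the form $U \cdot (V_b \times W_b)$, observe that the full $|C|$-ary tree is the unfolding of the single-vertex graph with one $c$-self-loop for each $c \in C$; this finite graph lies in $\Graph_0$, so its unfolding lies in $\Tree_1$. Since all of $U, V_b, W_b$ are regular and hence MSO-definable on this tree, an MSO-interpretation produces the prefix-recognizable graph, which therefore belongs to $\Graph_1$.

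The main obstacle is really the second implication, which is the only one not reducible to bookkeeping: it relies on the deep MSO-theoretic analysis of $\Delta_2$ underlying Blumensath's characterization. Since this is already in the literature, however, my proof can cite it as a black box, and the remaining two implications are concrete constructions that only require verifying the MSO-definability of the languages involved and the associativity of composing interpretations.
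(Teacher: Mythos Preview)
The paper does not prove this proposition at all: it is stated as a known characterization, with the second and third items explicitly attributed to Caucal and Blumensath via citations, and no argument is supplied. So there is no ``paper's own proof'' against which to compare your attempt.

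That said, your outline is a reasonable reconstruction of how the literature establishes these equivalences, and you correctly identify that the substantive step is the passage from MSO-interpretability in $\Delta_2$ to prefix-recognizability, which you appropriately delegate to Blumensath. One point worth tightening: in your first implication you write that a graph in $\Graph_1$ is obtained from a tree in $\Tree_1$ by an MSO-interpretation, but the paper's definition allows an MSO-\emph{transduction} (a $K$-copying followed by an interpretation). You do mention transductions parenthetically, but you should make explicit why this does not break the argument: a $K$-copying of a structure that is MSO-interpretable in $\Delta_2$ is again MSO-interpretable in $\Delta_2$, since one can reserve a fixed-length prefix of the binary address to index the copy. With that detail spelled out, your cycle of implications is sound and matches the standard route through the cited references.
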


\begin{prop}
\label{prop:det-closure}
Let $G$ be a graph in $\Graph_{1}$ labelled in $A$, and let $r$ be a root of $G$. The graph $G$ is isomorphic to the $B$-contraction of a deterministic graph $H \in \Graph_{1}$ labelled in $A \uplus B$ from one of its root $r'$.

Moreover $H$ can be chosen such that for any $a \in A$ and any two vertices $u$  and $v$ belonging to the $B$-contraction from $r'$, there exists at most one path 
from $u$ to $v$ 
labelled by a word in $B^{*}a$. 
\end{prop}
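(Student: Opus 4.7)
The strategy is to use the characterization of $\Graph_1$ as (reachable parts of) $\epsilon$-contractions of configuration graphs of pushdown systems. By Proposition~\ref{prop:level-one}, $G$ is isomorphic to a prefix-recognizable graph, and by the standard pushdown characterization of $\Graph_1$ (see \cite{Caucal,CarayolWohrle}), the reachable part of $G$ from $r$ is isomorphic, from some initial configuration $r'$, to the $\epsilon$-contraction of the configuration graph of a pushdown system $\mathcal{P}$ whose transitions carry labels in $A \cup \{\epsilon\}$.

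The crucial step is to choose $\mathcal{P}$ so that (i) no two transitions leaving the same configuration share the same label, and (ii) distinct macro-edges of $G$ are simulated by disjoint chains of intermediate $\epsilon$-moves, so that the $\epsilon$-transitions form a forest rooted at the configurations that correspond to vertices of $G$. Given a decomposition of each edge relation of $G$ as a finite union $\bigcup_i U_i \cdot (V_i \times W_i)$ of elementary prefix-recognizable relations, one can achieve this by a product construction in which the state of $\mathcal{P}$ tracks fixed deterministic automata for the $U_i$, $V_i$, $W_i$; internal DFA steps are realized by $\epsilon$-moves, and each such move is assigned a label drawn from a finite set $B$ indexing the elementary stack-manipulation types, chosen so as to disambiguate alternative transitions out of a common configuration.

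Let $H$ be the resulting configuration graph, with each $\epsilon$-transition relabelled by its corresponding symbol in $B$, restricted to configurations reachable from $r'$. Since configuration graphs of pushdown systems lie in $\Graph_1$, we have $H \in \Graph_1$; condition (i) yields determinism of $H$ as a labelled graph over $A \uplus B$; and by construction the $B$-contraction of $H$ from $r'$ is isomorphic to $G$. For the uniqueness claim, the forest structure of the $B$-edges secured by (ii) guarantees that any two vertices of the $B$-contraction are joined by at most one $B^*$-path in $H$; combined with per-label determinism this gives at most one $B^* a$-path from $u$ to $v$ for each $a \in A$. The main obstacle is realising conditions (i) and (ii) simultaneously through the product construction while keeping the state space finite; once this is done, the remainder of the proof is routine verification of the contraction and the uniqueness property.
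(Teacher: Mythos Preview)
Your plan is in the right spirit and close to the paper's argument: both simulate each elementary relation $U_i\cdot(V_i\times W_i)$ by running fixed DFAs for (the reverse of) $V_i$, for $U_i$, and for $W_i$, realising the DFA steps as auxiliary $B$-labelled moves. Whether one phrases this as a pushdown system whose $\epsilon$-moves are later tagged, or directly as a prefix-recognizable graph (as the paper does), is immaterial; the resulting $H$ is the same up to presentation.

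Where you diverge from the paper is in the treatment of the uniqueness clause, and this is where your sketch has a real gap. You aim for condition~(ii), ``disjoint chains'' so that the $B$-edges form a forest rooted at the $G$-vertices. The paper's $H$ does \emph{not} have this property: two different source vertices $x=uv$ and $x'=uv'$ that share the prefix $u$ pass through the \emph{same} intermediate configuration $u\,q'_{a,i}(a,i)$ during the push phase, so the $B$-subgraph is not a forest. Moreover, a forest from each $x$ together with determinism would still not suffice, since several leaves of that tree could carry an $a$-edge into the same $y$. The paper obtains uniqueness by a different, and simpler, device: it first normalises the prefix-recognizable presentation so that the elementary relations $U_{a,i}\cdot(V_{a,i}\times W_{a,i})$ are pairwise disjoint and satisfy $\First(V_{a,i})\cap\First(W_{a,i})\subseteq\{\varepsilon\}$ (Proposition~2.1 of~\cite{Carayol05}). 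This guarantees that each pair $(x,y)\in R_a$ has a \emph{unique} witnessing triple $(i,u,v,w)$ with $x=uv$, $y=uw$; the $B^*a$-path from $x$ to $y$ then records exactly $i$, $v$ and $w$, hence is unique. So the missing idea in your plan is this normalisation step; once you have it, condition~(ii) as you stated it is neither needed nor true for the natural construction.
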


\begin{proof}
Let $G=(V,(R_{a})_{a \in A})$ be a prefix-recognizable graph labelled
in $A$. We may assume, without loss of generality, that the vertices
in $V$ are words over the alphabet $C=\{0,1\}$. Moreover we may assume that the relations $R_{a}$, $a \in A$, can be expressed as the disjoint union of  relations $U_{a,1} \cdot
(V_{a,1} \times W_{a,1}), \ldots, U_{a,n_{a}} \cdot (V_{a,n_{a}}
\times W_{a,n_{a}})$. In addition, we may require that
 $\First(V_{a,i}) \cap \First(W_{a,i})
\subseteq \{\,\varepsilon\,\}$, for all $i \in [n_{a}]$ 
where $\First(L)=\{ a \in C \mid au \in L \;\textrm{for some $u\in C^{*}$}\}
\cup \{ \varepsilon \mid \varepsilon \in L\}$ \cite[Proposition 2.1]{Carayol05}. These assumptions guarantee that if $(x,y)$ belongs to $R_{a}$,
then there exists a unique $i \in [n_{a}]$ and a unique decomposition 
$x=uv$ and $y=uw$ such that $u \in U_{a,i}, v \in V_{a,i}$ and $w \in W_{a,i}$.

Before proceeding with the construction, we need to introduce some
notations.  Let $a \in A$ and let $i \in [n_{a}]$. We take
$\mathcal{A}_{a,i}=(Q_{a,i},q_{i,a},F_{a,i},\delta_{a,i})$ to be a
complete DFA accepting the reverse of $V_{a,i}$ and
$\mathcal{B}_{a,i}=(Q_{a,i}',q_{i,a}',F_{a,i}',\delta_{a,i}')$ to be a
complete DFA accepting $W_{a,i}$. We assume that the sets of states of
these automata are pairwise disjoint and we take $Q=\bigcup_{a \in A,
i \in [n_{a}]} Q_{a,i}$ and $Q'=\bigcup_{a \in A, i \in [n_{a}]}
Q_{a,i}'$.

We are now going to define a prefix-recognizable graph
$H=(V',(R_{a}')_{a \in A \cup B})$ satisfying the properties stated
above.

The set of labels $B$ is $B=\{e_{0},e_{1},e_{2}\} \cup \{ e_{a,i} \mid
a \in A \tand i \in [n_{a}] \}$. The vertices of $H$ are words over
the alphabet $\{0,1\} \cup Q \cup Q' \cup \{ (a,i) \mid a \in A
\;\text{and}\; i \in [n_{a}] \} \cup \{ \star \}$.

Intuitively, the graph $H$ is constructed in such a way that a path
labelled by a word in $B^{*} a$ from a vertex $x \in V$ to a vertex $y
\in V$ simulates the relation $R_{a,i}$ for  $a \in A$ and $i \in
[n_{a}]$. 
For a fixed $a \in A$ and $i \in [n_{a}]$, the simulation is
done using two sets 
 of vertices $V_{a,i}= \{0,1\}^{*}
Q_{a,i} (a,i)$ and $V_{a,i}'= \{0,1\}^{*} Q'_{a,i} (a,i)$. 
Starting from a vertex $x \in V$, the vertices in $V_{a,i}$ are used
to remove a suffix $v$ of $x$ belonging to $V_{a,i}$ and the vertices
in $V'_{a,i}$ are used to add a suffix $w$ in $W_{a,i}$. When moving
from the vertices in $V_{a,i}$ to the vertices in $V_{a,i}'$, we check
that the remaining prefix $u$ belongs to $U_{a,i}$. This guarantees
that $x$ and $y$ can be respectively written as $uv$ and $uw$
with $u \in U_{a,i}, v \in V_{a,i}$ and $w \in W_{a,i}$.

Formally, from a vertex $x \in V$, there is 
an edge labelled $e_{a,i}$ to the vertex $x q_{a,i} (a,i)$ (cf. (1) below).  For all
words $u,v \in \{0,1\}^{*}$ and for all $q \in Q_{a,i}$,  $uv q (a,i) \erb{e_{0}^{|v|}}{H} u \delta_{a,i}(q,u) (a,i)$ (cf. (2) below). If $u \in \{0,1\}^{*}$ belongs to $U_{a,i}$ and $q \in F_{a,i}$,
$u q (a,i) \erb{e_{1}}{H} u q_{a,i}' (a,i)$ (cf. (3) below). For all
words $u,v \in \{0,1\}^{*}$ and for all $q \in Q'_{a,i}$,  $u q (a,i) \erb{e_{0}^{*}}{H} uv \delta_{a,i}'(q,v) (a,i)$ (cf. (4) below). Finally, for $u \in \{0,1\}^{*}$ and $q \in F_{a,i}'$,
we have $u q (a,i) \erb{e_{2}}{H} u \star (a,i) \erb{a}{H} u$ 
(cf. (5) and (6) below).

The set of vertices $V'$ is taken to be $V \cup \bigcup_{a \in A}
\Dom(R_{a}') \cup \Ima(R_{a}')$. The edges of $H$ are defined, for all
$a \in A$, $i \in [n_{a}]$, $q \in Q_{a,i}, q' \in Q_{a,i}'$, $u \in
\{0,1\}^{*}$ and $b \in \{0,1\}^{*}$, by:
\[
\begin{array}{lcclclr}
u  & \erb{e_{a,i}}{H} & \;\;& u \,q_{a,i} (a,i) & & \text{for $u \in V$}  & (1)\\[1em]
u\,b q (a,i) & \erb{e_{0}}{H} &&  u \,\delta_{a,i}(q,b)   && \text{for $b \in \{0,1\}$} & (2) \\[1em]
u\,q (a,i)  &\erb{e_{1}}{H}& & u \,q_{a,i}' (a,i)  & & \text{if}\; q \in F_{i,a} \tand u \in U & (3) \\[1em]
u\, q' (a,i) & \erb{e_{0}}{H} & & ub \,\delta_{a,i}'(q,b) (a,i)  & & \text{for $b \in \{0,1\}$} & (4) \\[1em] 
u \,q' (a,i)  &\erb{e_{2}}{H}& & u \, \star (a,i)  & & \text{if}\; q' \in F_{a,i}'  & (5)\\[1em]
u \,\star (a,i)  &\erb{a}{H} & & u & &\text{for $u \in V$}   & (6)\\
\end{array}
\]

The graph $H$ is a deterministic prefix-recognizable graph. 
We have already seen that:

\begin{claim}
  For all $x,y \in \{0,1\}^{*}$,  if $x \erb{a}{G} y$ then 
there exists a path in $H$ labelled in $B^{*} a$ from $x$ to $y$.
\end{claim}

It remains to show the other direction.

\begin{claim}
  For all $x,y \in \{0,1\}^{*}$, if there exists a path from $x$
to $y$ in $H$ labelled in $B^{*}a$ then this path is unique and $x \erb{a}{G} y$.
\end{claim}

\begin{proof}
Let $x,y \in \{0,1\}^{*}$ be such that there exists a path from $x$ to $y$ in $H$ labelled in $B^{*}a$. By the construction of $H$,  there exist $i \in [n_{a}]$,
and $u,v$ and $w \in \{0,1\}^{*}$ with $x=uv$ and $y=uv$ such that the path is of the form
\[
\begin{array}{l}
x \erb{e_{a,i}}{H} x q_{a,i} \erb{e_{0}^{|u|}}{H} u \delta(q_{a,i},u) (a,i) \era{e_{1}}  \\
\quad u q_{a,i}' (a,i) 
\erb{e_0^{|w|}}{H} 
uw \delta_{a,i}'(q_{a,i}',w) \erb{e_{2}} uw \star (a,i) \erb{a}{H} uw \\
\end{array}
\]
with $\delta(q_{a,i},u) \in F_{a,i}$, 
$\delta(q'_{a,i},w) \in F'_{a,i}$ and $u \in U_{a,i}$. It
follows that $(x,y)$ belongs to $R_{a,i}$. As remarked before, the index $i$ is unique
and so is the decomposition into $u,v$ and $w$. Hence the uniqueness
of the path follows.    \end{proof}

From the above two claims, it follows that the $B$-contraction of $H$ is equal to $G$. 
  \end{proof}

As first remarked in \cite{Stirling00}, the graphs in $\Graph_{1}$ can be obtained by a form of contraction\footnote{In \cite{Stirling00}, the notion of $B$-contraction keeps the vertices that are the source of an edge in $A$ and adds
an edge labelled by $a \in A$ between two such vertices whenever there is a path labelled in $a B^{*}$.} of the configurations graph of pushdown automata \cite{CarayolWohrle}.

Let us now consider the class $\Tree_{2}$. The main property of the class $\Tree_{2}$ is that it contains the $\Sigma$-term trees defined by any
$\Sigma$-algebraic recursion scheme for any signature $\Sigma$.

As $\Tree_{2}$ only contains labelled trees, we need to fix a
representation of $\Sigma$-term trees as labelled trees. 
We follow the
one from \cite{Caucal} which is slightly different from the one
presented in Section~\ref{sec:morphism} (see also
Remark~\ref{rem:ht-caucal}).  A $\Sigma$-term tree $t$ is represented
by a tree $t'$ labelled by $\Sigma \cup \{\underline{1}, \ldots
,\underline{m} \}$ where $m$ is the maximum rank of a symbol of
$\Sigma$.

The vertices of $t'$ are prefixes of  sequences of the form $u_{0} \underline{d_{0}} \cdots u_{n-1} \underline{d_{n-1}} u_{n}$ where $u_{0}$ is the root of $t$, $u_{n}$ is a leaf of $t$ and for all
$i \in [n]$, $u_{i}$ is the $d_{i-1}$-th successor of $u_{i-1}$. Let $u$ be a vertex of $t$ labelled by $a \in \Sigma$, if  $w$ and 
$w u$  are vertices of $t'$ then there is an edge
from $w$ to $w u$ labelled by $a$. Let $i \in [m]$, if $w$ and $w \underline{i}$ are vertices of $t'$ then there is an edge from $w$ to $w \underline{i}$ labelled by $\underline{i}$. This representation is illustrated in Figure~\ref{fig:representation-caucal} for the $\Delta$-term $(a+0) \cdot c$.

\begin{figure}
  \begin{center}
    \begin{tikzpicture}[transform shape,scale=1.0,edge from parent/.style={draw,-latex}]
 \tikzstyle{n}=[circle, draw=black, fill=black, inner sep = 0.25mm, outer sep= 0.5mm]
 \tikzstyle{e}=[ inner sep = 0mm, outer sep= 0mm]
 \tikzstyle{l}=[font=\tiny,label distance=-1mm]
 \tikzstyle{level}=[level distance=0.85cm,sibling distance=1cm]
 \tikzstyle{level 1}=[level distance=0.85cm,sibling distance=1.25cm]

\node[label={[l]above:{$(v0)$}}] {$\cdot$}
  child { node[label={[l]left:{$(v1)$}}] {$+$}
    child { node[label={[l]left:{$(v3)$}}] {$a$}}
    child { node[label={[l]right:{$(v4)$}}] {$0$}}
  }
  child { node[label={[l]right:{$(v2)$}}] {$c$}}
;

\node[n,label={[l]above:{$(\varepsilon)$}},xshift=4cm] {}
  child { node[n,label={[l]right:{$(v_0)$}}] {}
    child { node[n,label={[l]left:{$(v_0\underline{1})$}}] {}
      child { node[n,label={[l]left:{$(v_0\underline{1}v_1)$}}] {}
        child { node[n,label={[l]left:{$(v_0\underline{1}v_1\underline{1})$}}] {}
          child { node[n,label={[l]left:{$(v_0\underline{1}v_1\underline{1}v_3)$}}] {}
          edge from parent node[draw=none,left] {$a$}}
        edge from parent node[draw=none,left] {$\underline{1}$}}
        child { node[n,label={[l]right:{$(v_0\underline{1}v_1\underline{2})$}}] {}
          child { node[n,label={[l]right:{$(v_0\underline{1}v_1\underline{1}v_4)$}}] {}
          edge from parent node[draw=none,right] {$0$}}
        edge from parent node[draw=none,right] {$\underline{2}$}}
      edge from parent node[draw=none,left] {$+$}}
    edge from parent node[draw=none,left] {$\underline{1}$}}
    child { node[n,label={[l]right:{$(v_0\underline{2})$}}] {}
      child { node[n,label={[l]right:{$(v_0\underline{2}v_2)$}}] {}
      edge from parent node[draw=none,right] {$c$}}
    edge from parent node[draw=none,right] {$\underline{2}$}}
  edge from parent node[draw=none,left] {$\cdot$}}
;

\end{tikzpicture}
  \end{center}
  \caption{The representation of the $\Delta$-term $(a+0) \cdot c$ as a labelled tree following \cite{Caucal}}
  \label{fig:representation-caucal}
\end{figure}
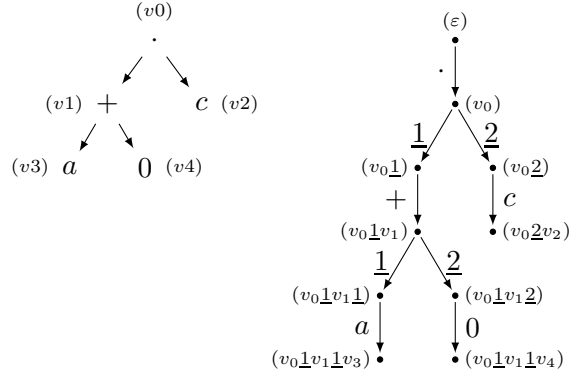

\begin{thm}[\cite{Carayolthesis,Caucal}]
\label{theorem:caucal}
Let $\Sigma$ be a signature. A $\Sigma$-term tree is defined by some
$\Sigma$-algebraic recursion scheme if and only if the tree representing it belongs to $\Tree_{2}$.
\end{thm}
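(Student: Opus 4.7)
The plan is to prove both directions by routing through pushdown systems, using the well-known correspondences between first-order algebraic recursion schemes and pushdown tree automata on one side, and between $\Graph_1$ and ($\varepsilon$-closures of) configuration graphs of pushdown systems on the other.

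For the forward direction, given an algebraic $\Sigma$-recursion scheme $E$ with equations $F_i(v_1,\ldots,v_{k_i}) = t_i$, I would compile $E$ into a pushdown system $\mathcal{P}$ whose stack alphabet encodes (i) occurrences of subterms in the right-hand sides $t_i$ and (ii) ``return addresses'' for the actual arguments supplied at each call site, and whose control states track the current task (traversing a subterm, looking up an argument, or invoking a functor variable). Two kinds of transitions arise: silent $\varepsilon$-transitions that perform one step of functor unfolding or argument lookup, and visible transitions labelled by $\Sigma \cup \{\underline{1},\ldots,\underline{m}\}$ that realize one edge of the representing tree of a $\Sigma$-term tree. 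The configuration graph of $\mathcal{P}$ is a prefix-recognizable graph, hence in $\Graph_1$, and its $\{\varepsilon\}$-contraction from the initial configuration is a graph still in $\Graph_1$ by Proposition~\ref{prop:contr-alg}'s analogue for graphs and by Remark~\ref{rem:rooted-graphs}; the unfolding of that contracted graph from the initial configuration is isomorphic to the representing tree of $|E|$ and therefore lies in $\Tree_2$.

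For the backward direction, let $T \in \Tree_2$ be the representing tree of some $\Sigma$-term tree $t$. Then $T = \Unf(G,r)$ for some $G \in \Graph_1$ and some vertex $r$ of $G$. By Proposition~\ref{prop:det-closure} we may assume $G$ is obtained as the $B$-contraction of a \emph{deterministic} prefix-recognizable graph $H$ in which any two $B$-contracted vertices are connected by at most one path labelled in $B^*a$. By the standard characterization of prefix-recognizable graphs as $\varepsilon$-contractions of configuration graphs of pushdown systems, $H$ (restricted to vertices reachable from $r$) is isomorphic to such a configuration graph of some deterministic pushdown system $\mathcal{P}$. From $\mathcal{P}$ I would extract an algebraic $\Sigma$-recursion scheme in the classical pushdown-to-context-free-tree-grammar style: introduce one functor variable $F_{p,Z}$ for each relevant pair of a control state $p$ and a stack-top symbol $Z$, of rank equal to the number of possible ``return'' behaviours inherited from the stack below, and translate each push/pop/visible transition of $\mathcal{P}$ into a corresponding clause on the right-hand side of the equation for $F_{p,Z}$, using $v_1,\ldots,v_{k_i}$ to encode the return continuations. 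The rigid well-formedness of $T$ (each vertex labelled $\sigma \in \Sigma_n$ has exactly $n$ $\underline{i}$-successors) ensures that these equations produce a proper $\Sigma$-term tree, and a direct comparison between the initial solution of the scheme and $\Unf(G,r)$ yields $|E| = t$.

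The main obstacle is preserving the rigid syntactic shape of the Caucal representation of $\Sigma$-term trees through both translations: in the forward direction the pushdown system must expose exactly one visible $\sigma$-edge followed by the correct number of $\underline{i}$-edges per node of the unfolding; in the backward direction one must show that the pushdown system extracted from $G$ respects the arities dictated by $\Sigma$, which follows from the fact that $T$ is the representation of a $\Sigma$-term tree but requires a careful case analysis on the behaviour of the stack. Once the pushdown-system translation is set up correctly, the rest is standard bookkeeping; this is essentially the argument given in \cite{Carayolthesis,Caucal}, to which I would ultimately refer for the full technical details.
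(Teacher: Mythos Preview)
The paper does not prove this theorem at all: it is stated as a result imported from \cite{Carayolthesis,Caucal}, with no argument given in the present paper. So there is no ``paper's own proof'' to compare against; your sketch is an attempt to reconstruct the proof from the cited sources, and you acknowledge as much in your last sentence.

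That said, your outline is broadly the standard one and is in the spirit of the cited references: compile an algebraic scheme into a pushdown system whose $\varepsilon$-contracted configuration graph lies in $\Graph_1$ and unfolds to the representing tree (forward direction), and conversely extract from a deterministic pushdown presentation of a graph in $\Graph_1$ a first-order scheme in the Courcelle/Guessarian style (backward direction). Two small remarks. First, your appeal to ``Proposition~\ref{prop:contr-alg}'s analogue for graphs'' is misplaced: that proposition concerns closure of classes of synchronization trees under contraction, whereas what you need is simply that $\Graph_1$ is closed under MSO-interpretations (of which $B$-contraction is an instance), which is immediate from the definition of the hierarchy. Second, in the backward direction you should be explicit that the functor variables must also encode, for each pushed stack symbol, the tuple of possible pop-states, so that the rank of $F_{p,Z}$ matches the arity needed to thread the continuations correctly; this is the usual ``triple construction'' and is where the bookkeeping actually lives.
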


\begin{rem} 
\label{rem:ht-caucal}
It immediately follows that the synchronization tree  $H(t)$ associated to a $\Gamma$-term tree $t$ defined in Section~\ref{sec:morphism} belongs to 
$\Tree_{2}$ (resp. $\Tree_{1}$) whenever $t$ is defined by a $\Gamma$-algebraic (resp. $\Gamma$-regular) recursion scheme. Indeed the transformation going from the presentation of \cite{Caucal} to ours is an MSO-interpretation that commutes with the unfolding operation.
\end{rem}

We conclude with some properties of the trees of $\Tree_{2}$ and their contractions.

\begin{prop}
\label{prop:trees-in-trees2}
The following properties hold:
\begin{enumerate}
\item The contraction of a tree in $\Tree_{2}$ is bisimilar to some tree in
$\Tree_{2}$.
\item Every tree in $\Tree_{2}$ can be obtained as the contraction of a deterministic tree in $\Tree_{2}$. 

\end{enumerate}
\end{prop}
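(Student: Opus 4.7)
The plan is to transfer the contraction from the level of trees to the level of graphs. Starting from $t \in \Tree_{2}$, by Remark~\ref{rem:rooted-graphs} we may write $t = \Unf(G,r)$ for some $G \in \Graph_{1}$ and some vertex $r$ of $G$. Given the set $B$ of labels to contract, observe that the $B$-contraction of $G$ from $r$ is definable by an MSO-interpretation of $G$: the vertex set is an MSO-definable subset of the vertices of $G$, and for each $a \in A$ the new edge relation is defined by the formula expressing the existence of a $B^{*}a$-labelled path. By Proposition~\ref{prop:level-one}, $\Graph_{1}$ coincides with the graphs MSO-inter\-preta\-ble in the full binary tree and is therefore closed under MSO-interpretations, so the resulting graph $G'$ still belongs to $\Graph_{1}$. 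Consequently $t' \eqdef \Unf(G',r)$ lies in $\Tree_{2}$, and the relation pairing each vertex of $t'$ (a path in $G'$) with every vertex of the $B$-contraction of $t$ that realizes the same sequence of $B^{*}(A\cup\{\ex\})$-labelled moves in $G$ is readily seen to be a bisimulation.

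\textbf{Plan for part (2).} The plan is to invoke Proposition~\ref{prop:det-closure} and then unfold. Again write $t = \Unf(G,r)$ with $G \in \Graph_{1}$. By Proposition~\ref{prop:det-closure}, there exists a deterministic $H \in \Graph_{1}$ labelled in $A \uplus B$ and a root $r'$ of $H$ such that the $B$-contraction of $H$ from $r'$ is isomorphic to $G$, with the \emph{additional} property that between any two vertices of the contraction there is at most one $B^{*}a$-labelled path in $H$ for each $a \in A$. Set $s \eqdef \Unf(H,r')$; since the unfolding of a deterministic graph is deterministic, $s$ is a deterministic tree, and $s \in \Tree_{2}$ by definition. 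It remains to verify that the $B$-contraction of $s$ is isomorphic to $t$, which will complete the proof.

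\textbf{Where the work lies.} The technical heart of both arguments is the comparison between unfolding-then-contracting and contracting-then-unfolding. For part (1) these two operations generally disagree up to isomorphism (a single edge of $G'$ may arise from several $B^{*}a$-paths in $G$, each producing a distinct vertex in the $B$-contraction of the unfolding), so one is forced to settle for a bisimulation, which is why the statement is only up to bisimilarity. For part (2) the uniqueness clause of Proposition~\ref{prop:det-closure} is precisely what is needed to upgrade bisimilarity to isomorphism: it guarantees that each edge of $G$ lifts to a \emph{unique} $B^{*}a$-path in $H$, so paths in $G$ starting at $r$ correspond bijectively to maximal non-$B$ prefixes of paths in $H$ starting at $r'$, which is exactly the identification needed to match $\Unf(G,r)$ with the $B$-contraction of $\Unf(H,r')$ on the nose. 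Routine but slightly tedious verification of this bijection and of its compatibility with labels is the only step one has to carry out in detail.
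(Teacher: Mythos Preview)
Your proposal is correct and follows essentially the same approach as the paper: for part~(1) you contract the underlying graph in $\Graph_{1}$ (using closure under MSO-interpretations) and then unfold, obtaining bisimilarity because contraction and unfolding do not commute on the nose; for part~(2) you invoke Proposition~\ref{prop:det-closure} and use precisely its uniqueness clause to make contraction and unfolding commute up to isomorphism. The paper's proof is identical in structure, and your identification of where the work lies matches the paper's own emphasis.
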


\begin{proof}
As illustrated in Figure~\ref{fig:counter-example}, the contraction
operation does not in general commute with the unfolding
operation. However, it is easy to show that the contraction of the
unfolding from a root $r$ and the unfolding of the contraction from
the same root are bisimilar. 

\begin{figure}
  \begin{center}
\begin{tikzpicture}[transform shape,scale=1.0,edge from parent/.style={draw,-latex}]
 \tikzstyle{n}=[circle, draw=black, fill=black, inner sep = 0.25mm, outer sep= 0.5mm]
 \tikzstyle{e}=[ inner sep = 0mm, outer sep= 0mm]
 \tikzstyle{l}=[font=\tiny,label distance=-1mm]

      \begin{scope}
        \matrix[matrix of nodes,column sep={0.8cm,between origins},row sep={0.8cm,between origins},anchor=north]{
          &  |[n,label={[l]above:{$(r)$}}](r)|  &     \\
         |[n](s)|  &     &  |[n](t)| \\
          & |[n](u)| &  \\}; 
        
        \draw (r) edge[-latex] node[above left] {$e$} (s)
              (r) edge[-latex] node[above right] {$e$}(t)
              (s) edge[-latex] node[below left] {$a$}(u)
              (t) edge[-latex] node[below right] {$a$}(u)
              (u) edge[-latex] node[right] {$e$} (r);
      \end{scope}

      \begin{scope}[xshift=3cm]
         \tikzstyle{level}=[level distance=0.85cm,sibling distance=1cm]
        \node[n,label={[l]above:{$(r)$}}] {}
        child {
          node[n] {}
          child {
            node[n] {}
             child {
               node[e] {}
               edge from parent[dashed,-]}
            edge from parent node[left] {$a$}}
          edge from parent node[left] {$a$}};
      \end{scope}
      \begin{scope}[xshift=7cm]
         \tikzstyle{level}=[level distance=0.85cm,sibling distance=0.85cm]
         \tikzstyle{level 1}=[level distance=0.85cm,sibling distance=1.75cm]
        \node[n,label={[l]above:{$(r)$}}] {}
  child { node[n] {}
    child { node[n] {}
      child { node[e] {} edge from parent[dashed,-]}
      child { node[e] {} edge from parent[draw=none]}
    edge from parent node[draw=none,above left] {$a$}}
    child { node[n] {}
      child { node[e] {} edge from parent[draw=none]}
      child { node[e] {} edge from parent[dashed,-]}
    edge from parent node[draw=none,above right] {$a$}}
  edge from parent node[draw=none,above left] {$a$}}
  child { node[n] {}
    child { node[n] {}
      child { node[e] {} edge from parent[dashed,-]}
      child { node[e] {} edge from parent[draw=none]}
    edge from parent node[draw=none,above left] {$a$}}
    child { node[n] {}
      child { node[e] {} edge from parent[draw=none]}
      child { node[e] {} edge from parent[dashed,-]}
    edge from parent node[draw=none,above right] {$a$}}
  edge from parent node[draw=none,above right] {$a$}}
;
      \end{scope}
    \end{tikzpicture}
  \end{center}

\caption{A graph $G$ (on the left), the unfolding from $r$ of its $\{e\}$-contraction from $r$ (in the middle) and the $\{e\}$-contraction of its unfolding from $r$ (on the right).}\label{fig:counter-example}
\end{figure}
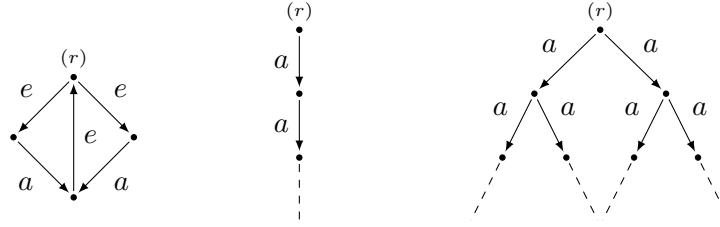

{
For the first property, let $t$ be a tree in $\Tree_{2}$ which is obtained by 
unfolding a graph $G$ in $\Graph_{1}$ from one of its roots $r$. Let $t'$ be the $B$-contraction of $t$ from its root. The $B$-contraction $G'$ of $G$ from $r$ also belongs to $\Graph_{1}$ (as the $B$-contraction is a particular case of MSO-interpretation). From the previous remark, we have that $t'$ is bisimilar to $\Unf(G',r)$ which belongs to $\Tree_2$ as $G$ belongs to $\Graph_1$.}

For the second property, let $t$ be a tree in $\Tree_{2}$ which is obtained by unfolding a graph $G$ in $\Graph_{1}$ from one of its root $r$. Furthermore assume that both $t$ and $G$ are labelled by $A$. By Proposition~\ref{prop:det-closure}, $G$ can be obtained by $B$-contraction of a deterministic graph $H \in \Graph_{1}$ from one of its roots $r$.

For the $B$-contraction to commute with the unfolding, it is enough that the graph $H$ satisfies, for any label $a \in A$ and any two vertices $u$ and $v$ of $H$ belonging to the $B$-contraction, that there exists at most one path labelled in $B^{*}a$ from $u$ to $v$. This is the case for the graph $H$ obtained from Proposition~\ref{prop:det-closure}. Hence $t$ is isomorphic to
the $B$-contraction of  $\Unf(H,r')$, which is a deterministic tree in $\Tree_{2}$.\end{proof}

\subsection{Synchronization trees in the Caucal hierarchy}
\label{Sect:treesinCH}

The $\Gamma$-regular synchronization trees, being the regular trees with potentially infinite degree, belong to $\Graph_{1}$. As $\Tree_{1}$
only contains the regular trees of finite degree, it does not contain 
the class of $\Gamma$-regular synchronization trees. In this section, we show that the classes of  $\Delta$-regular and $\Gamma$-algebraic synchronization trees are included in
$\Graph_{2}$ but not in $\Tree_{2}$. Similarly, we establish that the class of $\Delta$-algebraic trees is included in $\Graph_{3}$ but not in $\Tree_{3}$.

\begin{prop}
\label{prop:carac}
The $\Gamma$-algebraic (and hence the $\Delta$-regular) synchronization
trees are contractions of trees in $\Tree_{2}$ and hence belong to $\Graph_{2}$.
\end{prop}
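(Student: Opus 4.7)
The plan is to chain together three facts already established in the paper: (i) the $\Gamma$-term tree defined by a $\Gamma$-algebraic recursion scheme sits in $\Tree_2$ via Caucal's theorem, (ii) the synchronization tree $H(t)$ encoding of that term tree still lies in $\Tree_2$, and (iii) the essentially unique morphism $T_\Gamma^\omega \to \Sta$ is implemented by taking the $\{+_1,+_2\}$-contraction of $H(t)$.

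More concretely, let $s \in \Sta$ be a $\Gamma$-algebraic synchronization tree. By definition it is (isomorphic to) the first component of the initial solution of some $\Gamma$-algebraic recursion scheme $E$ interpreted in $\Sta$. Interpreting the same scheme $E$ over the initial continuous categorical $\Gamma$-algebra $T_\Gamma^\omega$ yields a $\Gamma$-term tree $t \in T_\Gamma^\omega$, and by the Mezei--Wright theorem (Theorem~\ref{thm-mezei}) together with Proposition~\ref{prop-Gamma-hom}, $s$ is isomorphic to $\tau'(t)$, i.e.\ to the $\{+_1,+_2\}$-contraction of the synchronization tree $H(t)$. By Theorem~\ref{theorem:caucal}, the $\Gamma$-term tree $t$, viewed as a labelled tree in the Caucal representation, belongs to $\Tree_2$; and by Remark~\ref{rem:ht-caucal}, the alternative representation $H(t)$ used here also belongs to $\Tree_2$, since passing between the two representations is accomplished by an MSO-interpretation that commutes with unfolding.

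It remains to deduce that $s$ belongs to $\Graph_2$. The $B$-contraction operation is an MSO-interpretation (the vertex-selection formula $\delta(x)$ picks the vertices that are targets of an edge labelled in $A \cup \{\ex\}$ together with the root, and for each label $a \in A \cup \{\ex\}$ the edge formula $\varphi_a(x,y)$ expresses existence of a path from $x$ to $y$ labelled in $B^* a$, which is an MSO-definable condition using the standard `transitive closure' formula exhibited in Section~\ref{Sect:edge-labelled-graphs}). Applying such an MSO-interpretation to the tree $H(t) \in \Tree_2$ therefore lands in $\Graph_2$ by the very definition of the Caucal hierarchy, so $s \in \Graph_2$.

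No real obstacle is expected: the heavy lifting is done by Theorem~\ref{theorem:caucal} and Remark~\ref{rem:ht-caucal}. The only point requiring a brief verification is that the $\{+_1,+_2\}$-contraction really is expressible as an MSO-interpretation, which reduces to the standard observation that reachability via paths whose labels lie in a regular language over the edge alphabet is MSO-definable on edge-labelled graphs. The statement about $\Delta$-regular trees then follows as an immediate corollary from Theorem~\ref{Thm:Delta2Gamma}, since every $\Delta$-regular tree is already $\Gamma$-algebraic.
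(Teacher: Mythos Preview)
Your proposal is correct and follows essentially the same route as the paper: realize the synchronization tree as the $\{+_1,+_2\}$-contraction of $H(t)$ via Proposition~\ref{prop-Gamma-hom}, invoke Theorem~\ref{theorem:caucal} and Remark~\ref{rem:ht-caucal} to place $H(t)$ in $\Tree_2$, and then observe that contraction is an MSO-interpretation so the result lies in $\Graph_2$. Your write-up is simply more explicit about the Mezei--Wright step and the MSO-definability of contraction, but the argument is the same.
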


\begin{proof}
Let $E$ be a $\Gamma$-algebraic recursion scheme defining a $\Gamma$-term tree
$t$ and a synchronization tree $t'$. From Proposition~\ref{prop-Gamma-hom}, $t'$ is obtained by contraction of the synchronization tree $H(t)$ with respect to $\{+_1,+_2\}$.  From \cite[Theorem 3.5]{Caucal} (cf. Remark~\ref{rem:ht-caucal}), we know that $H(t)$ belongs to $\Tree_{2}$, and, as
the contraction operation is a particular case of MSO-interpretation, $t'$ belongs to $\Graph_{2}$.
  \end{proof}

\begin{prop}
There is a $\Delta$-regular synchronization tree which is not in 
$\Tree_2$. 
\end{prop}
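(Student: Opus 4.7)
My plan is to exhibit a specific $\Delta$-regular synchronization tree $T$ and argue that no prefix-recognizable graph can unfold isomorphically onto it. The separation exploits the fact that Proposition~\ref{prop:carac} only shows containment in $\Graph_2$ via contraction: contraction can identify infinitely many vertices and can collapse edge-chains, producing a branching pattern at a single vertex that no direct unfolding of a $\Graph_1$ graph can match up to isomorphism.

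For the candidate, I would start from a scheme of the form
\[
  Y = Y + b, \qquad X = X + a \cdot Y,
\]
whose synchronization tree $T$ has root $r$ with $\omega$ many $a$-children $v_0,v_1,v_2,\ldots$, where the subtree rooted at $v_k$ has exactly $k$ bisimilarity-inequivalent features stratified in a particular way (e.g., $k$ pairwise non-isomorphic subtrees attached via $b$, each of different depth). The precise scheme I would tune so that, for each $k$, the multiset of isomorphism types of grandchildren of $r$ lying below $v_k$ realizes an infinite family of `profiles' whose variation with $k$ is not capturable by any single prefix-recognizable edge relation. Note that $T$ is $\Delta$-regular by construction and, by Proposition~\ref{prop:carac}, lies in $\Graph_2$; this makes it a legitimate candidate for the strict inclusion.

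The proof of non-membership in $\Tree_2$ would proceed by contradiction. Assume $T = \Unf(G,r_0)$ for some $G \in \Graph_1$, and use Proposition~\ref{prop:level-one} to realize $G$ as an MSO-interpretation in the full binary tree. The $\omega$ outgoing $a$-edges of $r_0$ must then arise from a finite union of relations $U_i\cdot(V_i\times W_i)$; in particular, the targets of these edges split into finitely many `prefix-parametrized' families $\{u_i w : w \in W_i\}$ for regular $W_i$. For each such family, the isomorphism type of the unfolding from $u_i w$ depends on $w$ in a way that is MSO-definable in the binary tree, and therefore the partition of these targets by isomorphism type is regular. The contradiction comes from showing that the target-bisim-type profile demanded by $T$'s root children cannot be a regular partition of any such $W_i$: the distinct finite out-degree profiles $k = 0,1,2,\ldots$ appearing at the $v_k$'s, together with the pairwise non-isomorphic `decorations' below them, force a counting pattern of bisim types across the regular sets $W_i$ that is inconsistent with the bounded-type-variation imposed by MSO-interpretation in $\Delta_2$.

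The main obstacle will be pinning down the precise structural invariant of $\Tree_2$ trees that $T$ violates. I expect this to require a lemma asserting that, in an unfolding $\Unf(G,r_0)$ with $G\in\Graph_1$, for any vertex $v$ with $\omega$ children of a fixed label $a$, the function sending a child to the isomorphism type of its subtree factors through a regular equivalence on a regular parameter space; the example $T$ must be engineered precisely to violate this factorization, which is the delicate part of the argument.
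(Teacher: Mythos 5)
There is a genuine gap here: what you have written is a plan rather than a proof, and both of its load-bearing components are missing. First, the candidate tree is never actually constructed. The scheme you write down, $Y = Y+b$, $X = X + a\cdot Y$, yields a root with $\omega$ many $a$-children whose subtrees are all \emph{isomorphic} (each is the tree $\sum_\omega b$ with exit edges appended), so it has none of the stratified, pairwise non-isomorphic profiles your argument needs; you defer the real construction to later tuning. Second, the structural invariant of $\Tree_2$ that the tree is supposed to violate is never stated precisely, and the version you sketch is doubtful: you claim that the partition of the targets of the root's $a$-edges ``by isomorphism type is regular'' because isomorphism type ``is MSO-definable in the binary tree.'' But the isomorphism type of an unfolding is not an MSO-definable invariant — a single MSO formula can only distinguish finitely many classes, whereas already the $\Tree_2$ tree in which the $i$-th child of the root carries the chain $a^{i}$ has infinitely many pairwise non-isomorphic subtrees below its root. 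So the ``bounded-type-variation'' lemma you hope for cannot hold in the form stated, and it is not clear what corrected form would both be true and be violated by a $\Delta$-regular tree. Without that lemma the contradiction never materializes.

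For comparison, the paper's proof avoids all of this by using a purely quantitative invariant. It takes the $\Delta$-regular scheme $G = 1 + a\cdot G\cdot(1+1)$, whose tree has a single infinite $a$-branch $u_0,u_1,\ldots$ with the out-degree of $u_i$ equal to $2^i+1$, and proves a simple lemma: in a prefix-recognizable graph all of whose vertices have finite out-degree, the out-degree along any infinite path is bounded by $C(i+1)$ for some constant $C$ (because word lengths grow at most linearly along a path and the out-degree of a vertex $u$ is at most $(|u|+1)W$ for a fixed $W$). Exponential versus linear growth gives the contradiction immediately. If you want to salvage your approach, you would do better to look for a counting invariant of this kind rather than an isomorphism-type classification.
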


\begin{proof}
Consider the following $\Delta$-regular recursion scheme:
\begin{eqnarray*}
G &=& 1 + a\cdot G \cdot (1 + 1). 
\end{eqnarray*}
The synchronization tree $t$ defined by it has a single infinite branch $u_0,u_1,\ldots$ 
(with edges labelled a). The out-degree of each vertex $u_i$ is 
$2^i +1$ since it is the source of $2^i$ edges labelled $\ex$.  

Assume, towards a contradiction, that $t$ belongs to $\Tree_2$. 
Hence $t$ is isomorphic to the unfolding of some graph $H$ in $\Graph_1$
from a vertex $v_0$. By Remark~\ref{rem:rooted-graphs}, we can assume 
that all vertices are reachable from $v_0$ and hence that they all have finite 
out-degree. This implies that $H$ contains a path $v_0,v_1\ldots$ along which the out-degree of the vertices grows exponentially (i.e. the out-degree of $v_i$ is $2^i +1$).
The following lemma shows that this is not possible in $\Graph_{1}$.

\begin{lem}
  Let $G$ be a graph in $\Graph_{1}$ whose vertices have finite
  out-degree.  Let $(u_i)_{i \geq 0}$ be an infinite path in $G$,
  possibly with repetitions.  Then there exists a constant $C\geq 0$
  such that for all $i \geq 0$, the out-degree of $u_i$ is at most
  $C(i+1)$.
\end{lem} 

\begin{proof}
  By Proposition~\ref{prop:level-one}, we can assume without loss of
  generality that $G$ is a prefix-recognizable graph. Let
  $U_1(V_1\times W_1),\ldots,U_n(V_n \times W_n)$ be the relations
  involved in the definition of $G$. As the vertices of $G$ have
  finite out-degree, without loss of generality we may assume that all
  the $W_j$ are finite sets. Let $d$ denote the length of the longest
  word appearing in the $W_j$, and let $W$ denote the number of words
  appearing in the $W_j$. Then clearly $|u_{i+1}| \leq |u_i| + d$ for
  all $i\geq 0$, where for a word $u$ we denote its length by $|u|$.
  Thus, introducing the notation $d' = \max\{d,|u_0|\}$, we have
  $|u_i| \leq d'(i+1)$ for all $i \geq 0$. As the out-degree of a
  vertex $u$ of $G$ is at most $(|u|+1)W$, we conclude that the
  out-degree of $u_i$ is at most $C(i+1)$ for all $i \geq 0$ with $C =
  (d'+1)W$.    \end{proof}
\end{proof}

\begin{prop}\label{Prop:Delta-alg-in-Graph3}
All the $\Delta$-algebraic synchronization trees are in $\Graph_{3}$
and therefore
 have a decidable MSO-theory.
\end{prop}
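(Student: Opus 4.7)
The plan is to leverage the combinatorial description of the canonical morphism $T_\Delta^\omega \to \Sta$ given by $\tau$ in Proposition~\ref{prop-tau}, and to implement each step of that description via operations that stay inside the Caucal hierarchy. Concretely, given a $\Delta$-algebraic synchronization tree $t$, let $\hat t \in T_\Delta^\omega$ be the $\Delta$-term tree defined by the same recursion scheme, so that $t = \tau(\hat t)$. By Theorem~\ref{theorem:caucal}, $\hat t$ belongs to $\Tree_2$, and hence to $\Graph_2$.

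Next I would show that the graph $G(\hat t)$ described before Proposition~\ref{prop-tau} can be produced from $\hat t$ by an MSO-transduction. The key observation is that the predicates $M(\hat t)$, $E(\hat t)$, and the binary relation $v \in S_{\hat t}(u)$ are all MSO-definable over the $\Delta$-term tree (their definitions are conditions about the labels and child-positions along the unique root-to-leaf paths, which are easily expressible in MSO over a tree). A single copying step to introduce the exit vertex $*$, followed by an MSO-interpretation selecting the vertex set $\{v_0\} \cup \{u : u \in M(\hat t) \cup E(\hat t) \text{ or } u \in S_{\hat t}(w) \text{ for some } w\} \cup \{*\}$ and adding the appropriate edges, produces $G(\hat t)$. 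Since $\Graph_2$ is closed under MSO-transductions, $G(\hat t) \in \Graph_2$.

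Now unfold $G(\hat t)$ from the root vertex $v_0$; by the very definition of the hierarchy this yields a tree in $\Tree_3 \subseteq \Graph_3$. Finally, apply the $\{1\}$-contraction to this unfolding. Contraction is a special case of MSO-interpretation (the new vertex set and new edge relation are both MSO-definable, as already noted in Section~\ref{Sect:graph-transformations} and in the proof of Proposition~\ref{prop:carac}), so the result remains in $\Graph_3$. By Proposition~\ref{prop-tau}, this result is exactly $\tau(\hat t) = t$. Hence every $\Delta$-algebraic synchronization tree lies in $\Graph_3$, and decidability of its MSO-theory follows from Caucal's theorem that every graph in the hierarchy has a decidable MSO-theory.

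The main thing that needs care is verifying that $G(\hat t)$ is MSO-definable in $\hat t$: the properties $u \in M(\hat t)$, $u \in E(\hat t)$ and $v \in S_{\hat t}(u)$ are stated in terms of the unique path from the root to $u$ (resp. $v$) and conditions on successor positions of $\cdot$-labelled vertices along it. Since in a tree the ancestor relation, first/second-child relation and universal path quantifications are all MSO-expressible, these properties translate routinely into MSO formulas; the rest of the argument is then just bookkeeping about how $\Tree_n$ and $\Graph_n$ are closed under unfolding and MSO-transductions, facts already used repeatedly in this section.
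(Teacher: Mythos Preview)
Your proposal is correct and follows essentially the same route as the paper: pass to the $\Delta$-term tree in $\Tree_2$ via Theorem~\ref{theorem:caucal}, realize $G(\hat t)$ inside $\Graph_2$ by MSO means, unfold into $\Tree_3$, and finish with the $\{1\}$-contraction as an MSO-interpretation. Your version is in fact slightly more careful than the paper's, which simply says $G(t)$ ``can be interpreted'' in the term tree; you correctly note that a transduction (one copying step) is needed to introduce the fresh root $v_0$ and exit vertex $*$.
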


\begin{proof}
Let $E$ be an algebraic scheme over $\Delta$. Let $t$ be the $\Delta$-term 
defined by $E$, and $t'$ be the synchronization tree defined by $E$.
By the Mezei-Wright theorem and Proposition~\ref{prop-tau}, we have that 
$t' = \tau(t)$. More precisely, the tree $t'$ is obtained by unfolding the graph
$G(t)$ (as defined in Section~\ref{sec:morphism}) from the vertex $v_{0}$ and 
then applying a contraction with respect to $\{1\}$. The graph $G(t)$ can be interpreted in the tree representing $t$, which belongs to $\Tree_{2}$ (by Theorem~\ref{theorem:caucal}), and hence $G(t)$ belongs to $\Graph_{2}$. The
unfolding of $G(t)$ from $v_{0}$ belongs to $\Tree_{3}$ and its contraction $t'$ to  $\Graph_{3}$ (as contractions are particular cases of MSO-interpretations).
\end{proof}

\begin{prop}
There exists a $\Delta$-algebraic synchronization tree which does not belong to
$\Tree_{3}$.
\end{prop}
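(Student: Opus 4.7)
My plan is to adapt, one level higher in the Caucal hierarchy, the strategy of the preceding proposition. Specifically, I would exhibit a $\Delta$-algebraic scheme whose initial solution is a synchronization tree $t$ with a single infinite $a$-branch along which the out-degree grows doubly exponentially, and then argue that no tree in $\Tree_3$ can exhibit such a growth rate.

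Consider the $\Delta$-algebraic scheme
\begin{eqnarray*}
G &=& F(1+1),\\
F(v) &=& v + a\cdot F(v\cdot v).
\end{eqnarray*}
The synchronization tree $t$ it defines satisfies $t = (1+1) + a\cdot F^{\Sta}((1+1)\cdot(1+1))$ up to isomorphism. Setting $w_0 = 1+1$ and $w_{n+1} = w_n\cdot w_n$, an easy induction on $n$ shows that $w_n$ denotes the flat synchronization tree with exactly $2^{2^n}$ edges labelled $\ex$ leaving its root, and that the $n$-th vertex $u_n$ on the unique infinite branch of $t$ is the root of the subtree denoted by $F(w_n)$. Consequently, the out-degree of $u_n$ in $t$ is $2^{2^n} + 1$.

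Assume, towards a contradiction, that $t \in \Tree_3$. By Remark~\ref{rem:rooted-graphs}, we may write $t$ as the unfolding of some graph $H \in \Graph_2$ from a root from which every vertex is reachable. Since unfolding preserves out-degrees and every vertex of $t$ has finite out-degree, so does every vertex of $H$. Projecting the infinite $a$-branch of $t$ into $H$ yields an infinite path $(h_i)_{i\geq 0}$ in $H$, possibly with repetitions, along which the out-degree of $h_i$ coincides with that of $u_i$, namely $2^{2^i} + 1$.

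The principal obstacle is to establish a level-two analogue of the linear-growth lemma for $\Graph_1$ employed in the proof of the preceding proposition: in any $H \in \Graph_2$ all of whose vertices have finite out-degree, the out-degree along every infinite path grows at most single-exponentially, i.e., there exist constants $C, c \geq 0$ depending only on $H$ such that the out-degree of the $i$-th vertex along any such path is bounded by $C\cdot 2^{ci}$. Granting this lemma, the doubly exponential function $2^{2^i}$ eventually exceeds $C\cdot 2^{ci}$ for every fixed choice of $C$ and $c$, contradicting the bound and yielding $t \notin \Tree_3$. To prove the lemma itself, I would exploit a concrete presentation of $\Graph_2$, either as graphs MSO-interpretable in trees of $\Tree_2$ combined with the linear-growth bound already available at level one, or, following~\cite{CarayolWohrle}, via $\varepsilon$-contractions of the configuration graphs of level-two higher-order pushdown systems, where a pumping-style analysis would control both the size of configurations reachable in $i$ steps and the number of $\varepsilon$-targets accessible from each configuration.
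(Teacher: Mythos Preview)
Your example and its analysis are correct and essentially the same as the paper's (the paper uses $v\cdot(b\cdot 0)$ instead of bare $v$, so the extra edges are $b$-edges rather than $\ex$-edges, but this is cosmetic). The reduction to a growth question about a graph $H\in\Graph_2$ is also the right move.

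The gap is the ``level-two growth lemma'' you leave unproved. Your two suggested routes are not yet arguments: the first (``MSO-interpretable in a tree of $\Tree_2$ plus the linear bound at level one'') does not obviously compose, since an MSO-interpretation can define edges between vertices that are far apart in the underlying tree, so a path in $H$ need not correspond to anything like a path in the $\Tree_2$ tree; the second (``pumping on level-two pushdown configurations'') is plausible but is a genuine piece of work, not a remark. The paper avoids proving such a general lemma altogether. Instead it invokes a result of Braud~\cite[Theorem~4.5.3]{Braudthesis}: for any tree of the specific comb shape $t_d$ (one infinite $a$-spine, with $d(n)$ dangling $b$-edges at depth~$n$), membership in $\Graph_2$ forces $d(n)<2^{c(n+1)}$ for some constant $c$. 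The subtlety is that the graph $H\in\Graph_2$ with $\Unf(H,r)\cong t_{d_0}$ need not itself be a tree: the sets $V_n$ of $b$-targets of the spine vertices may overlap. The paper therefore applies an explicit MSO-interpretation to $H$ that deletes from each $V_n$ the vertices already seen in some earlier $V_m$; the result is a genuine tree $t_{d_1}\in\Graph_2$ with $d_1(n)\geq 2^{2^{n}}-\sum_{m<n}2^{2^m}\geq 2^{2^{n-1}}$, to which Braud's bound applies and yields the contradiction.

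In short: your plan is sound, but the heavy step you defer is exactly the step the paper outsources to Braud's thesis, and even then an extra MSO-interpretation is needed to massage $H$ into the shape Braud's theorem requires. If you want a self-contained proof, you would essentially have to reprove Braud's bound; if you are willing to cite it, you should replace your general lemma by that citation together with the interpretation that turns $H$ into a comb.
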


\begin{proof}
In this proof, we consider $\{a,b\}$-labelled trees  of a particular shape illustrated in Figure~\ref{fig:td0}. These trees have a unique infinite branch whose edges are labelled $a$. In addition the vertex at depth $n$ along this branch has $d(n)$ outgoing edges labelled $b$, for some mapping $d : \mathbb{N} \rightarrow \mathbb{N}$. Up to isomorphism, the tree is entirely characterized by the mapping $d$ and is denoted $t_d$.

In  \cite[Theorem~4.5.3]{Braudthesis}, Braud gives a necessary condition for a tree  of this form\footnote{Actually the condition holds for a larger class of graphs called $\#$-graph-combs which encompasses all trees $t_d$.} to belong to $\Graph_2$. Namely, if a tree $t_d$, for $d : \mathbb{N} \rightarrow \mathbb{N}$ belongs to $\Graph_2$ then there exists a constant $c>0$ such that $d(n) < 2^{c(n+1)}$ for all $n \geq 0$. 

Our proof goes as follows. We first show that for the mapping $d_0 : n \mapsto 2^{2^n}$, the tree $t_{d_0}$, depicted in Figure~\ref{fig:td0}, is $\Delta$-algebraic. Towards a contradiction, we assume that $t_{d_0}$ belongs to $\Tree_3$. We then show there  would exist a mapping $d_1$ satisfying $d_1(n) \geq 2^{2^{n-1}}$, $n \geq 1$ such that $t_{d_1}$ belongs to $\Graph_2$. This statement  contradicts Braud's condition as there cannot exist a constant $c$ such that $2^{2^{n-1}} < 2^{c(n+1)}$, for all $n \geq 1$.

Consider the $\Delta$-algebraic scheme
\[
\begin{array}{lcl}
S & = & G (1+1) \\
G(v) & = & a \cdot G (v \cdot v) + v \cdot {(b \cdot 0)}  \\
\end{array}
\]
The tree defined by this scheme is isomorphic to the tree $t_{d_0}$ with $d_0(n)=2^{2^n}$ for all $n \geq 0$.

\begin{figure}[ht]
\begin{center}
	\begin{tikzpicture}[transform shape,scale=1.0]
 \tikzstyle{n}=[circle, draw=black, fill=black, inner sep = 0.25mm, outer sep= 0.5mm]
 \tikzstyle{e}=[ inner sep = 0mm, outer sep= 0mm]
      	\node[n] (a0) at (0,0) {};  	
      	\node[n] (a1) at (3,0) {};  	
      	\node[n] (a2) at (6,0) {};
      	\node[n] (b00) at (-0.5,-1) {};  	
      	\node[n] (b01) at (0.5,-1) {};  	
      	\node[n] (b10) at (2,-1) {};
        \node[n] (b11) at (2.75,-1) {};    
        \node[n] (b12) at (3.25,-1) {};    		
      	\node[n] (b13) at (4,-1) {};
      	\node[n] (b20) at (4.5,-1) {};
      	\node[n] (b21) at (7.5,-1) {};
      	\draw[-latex] (a0) to[above] node {$a$} (a1);
      	\draw[-latex] (a1) to[above] node {$a$} (a2);
      	\draw[-latex] (a0) to[left] node {$b$} (b00);
      	\draw[-latex] (a0) to[right] node {$b$} (b01);
        \draw[-latex] (a1) to[left] node {$b$} (b10);
        \draw[-latex] (a1) to[left] node {$b$} (b11);
        \draw[-latex] (a1) to[right] node {$b$} (b12);
       	\draw[-latex] (a1) to[right] node {$b$} (b13);
       	\draw[-latex] (a2) to[left] node[yshift=1mm] {$b$} (b20);
       	\draw[-latex] (a2) to[right] node[yshift=1mm] {$b$} (b21);
      	\draw[dashed] (a2) -- (9,0); 
      	\draw[dashed] (a2) -- (5.5,-1);   
        \draw[dashed] (a2) -- (6.5,-1);   
		\draw[thick,decoration={brace,mirror},decorate]  (b20) to node[below] {$d_0(2)=16$} (b21) ;

 \end{tikzpicture}
 \end{center}
 \caption{The tree $t_{d_0}$ where $d_0(n)=2^{2^n}$ for all $n \geq 0$.\label{fig:td0}}
 \end{figure}
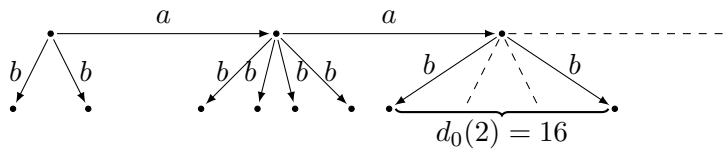

Assume now, towards a contradiction, that $t_{d_0}$ belongs to $\Tree_{3}$. Consider the graph $G$ in $\Graph_{2}$ from which $t$ can be obtained by unfolding from one of its roots $r$. 

For all $n \geq 0$, we denote by $a_n$ the unique vertex in $G$ such that $r \era{a^n} a_n$ and by $V_n$ the set of vertices $\{ v \mid r \era{a^nb} v \}$.
As $\Unf(G,r) \approx t_{d_0}$, we must have $a_n \neq a_m$ for $n \neq m$ and $|V_n|=2^{2^n}$ for all $n \geq 0$. The set of vertices of $G$ is therefore equal to $\{ a_n \mid  n \geq 0\} \cup \cup_{n \geq 0} V_n$ and its set of edges is:
\[
 \begin{array}{cl}
  & \{ (a_n,a,a_{n+1}) \mid n \geq 0 \}\\
\cup  & \{ (a_n,b,v) \mid n \geq 0 \;\textrm{and}\; v \in V_n \} \\	
 \end{array}
\]

Note that the sets $V_n$ are not necessarily pairwise disjoint and that in particular, $G$ is not necessarily a tree, as illustrated in Figure~\ref{fig:transform}. 

\begin{figure}[ht]
\begin{center}
	\begin{tikzpicture}[transform shape,scale=1.0]
 \tikzstyle{n}=[circle, draw=black, fill=black, inner sep = 0.25mm, outer sep= 0.5mm]
 \tikzstyle{e}=[ inner sep = 0mm, outer sep= 0mm]
 \begin{scope}
      	\node[n] (a0) at (0,0) {};  	
      	\node[n] (a1) at (2,0) {};  	
      	\node[n] (a2) at (4,0) {};
      	\node[n] (b00) at (-0.5,-1) {};  	
      	\node[n] (b01) at (0.5,-1) {};  	
      	\node[n] (b10) at (1.5,-1) {};
        \node[n] (b11) at (2,-1) {};    
        \node[n] (b12) at (2.5,-1) {};    		
     % 	\node[n] (b20) at (3.5,-1) {};
      	\draw[-latex] (a0) to[above] node {$a$} (a1);
      	\draw[-latex] (a1) to[above] node {$a$} (a2);
      	\draw[-latex] (a0) to[left] node {$b$} (b00);
      	\draw[-latex] (a0) to[right] node {$b$} (b01);
        \draw[-latex] (a1) to[above] node {$b$} (b01);
        \draw[-latex] (a1) to[left] node {$b$} (b10);
        \draw[-latex] (a1) to[right] node[xshift=-1mm] {$b$} (b11);
       	\draw[-latex] (a1) to[right] node {$b$} (b12);
       	\draw[-latex] (a2) to[above] node[xshift=4mm,yshift=1mm] {$b$} (b01);
       	\draw[-latex] (a2) to[below] node {$b$} (b12);
      	\draw[dashed] (a2) -- (5,0); 
      	\draw[dashed] (a2) -- (3.5,-1);   
      	\draw[dashed] (a2) -- (4,-1);   
        \draw[dashed] (a2) -- (4.5,-1);   
\end{scope}

\begin{scope}[xshift=6.5cm]
      	\node[n] (a0) at (0,0) {};  	
      	\node[n] (a1) at (2,0) {};  	
      	\node[n] (a2) at (4,0) {};
      	\node[n] (b00) at (-0.5,-1) {};  	
      	\node[n] (b01) at (0.5,-1) {};  	
      	\node[n] (b10) at (1.5,-1) {};
        \node[n] (b11) at (2,-1) {};    
        \node[n] (b12) at (2.5,-1) {};    		
      	\draw[-latex] (a0) to[above] node {$a$} (a1);
      	\draw[-latex] (a1) to[above] node {$a$} (a2);
      	\draw[-latex] (a0) to[left] node {$b$} (b00);
      	\draw[-latex] (a0) to[right] node {$b$} (b01);
     %   \draw[-latex] (a1) to[above] node {$b$} (b01);
        \draw[-latex] (a1) to[left] node {$b$} (b10);
        \draw[-latex] (a1) to[right] node[xshift=-1mm] {$b$} (b11);
       	\draw[-latex] (a1) to[right] node {$b$} (b12);
      % 	\draw[-latex] (a2) to[above] node[xshift=4mm,yshift=1mm] {$b$} (b01);
      % 	\draw[-latex] (a2) to[below] node {$b$} (b12);
      	\draw[dashed] (a2) -- (5,0); 
      	\draw[dashed] (a2) -- (3.5,-1);   
      	\draw[dashed] (a2) -- (4,-1);   
        \draw[dashed] (a2) -- (4.5,-1);   
		%\draw[thick,decoration={brace,mirror},decorate]  (b20) to node[below] {$d_0(2)=16$} (b21) ;
\end{scope}
 \end{tikzpicture}
 \end{center}
 \caption{An example of a possible graph $G$ whose unfolding is isomorphic to $t_{d_0}$ (on the left) and of the tree $\mathcal{I}(G)$ (on the right).\label{fig:transform}}
 \end{figure}
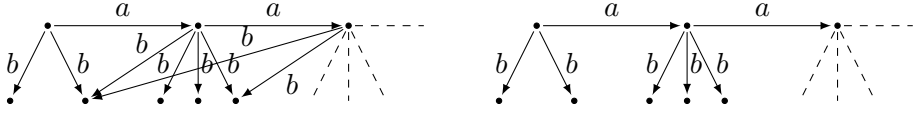

Consider the MSO-interpretation $\mathcal{I}$ that erases the $b$-labelled edges $(u,v)$ whenever there exists a vertex $s$ such that $s \era{a^+} u$ and $s \era{b} v$ and keeps all other edges unchanged. By applying $\mathcal{I}$ to $G$ (as illustrated in Figure~\ref{fig:transform}, we obtain
a tree $\mathcal{I}(G) \in \Graph_2$ with the same set of vertices as $G$. This tree has a unique infinite branch $(a_n)_{n \geq 0}$ and, for all $n \geq 0$, the vertex $a_n$ has 
outgoing edges labelled $b$ to the vertices in
\[
	V_n'= V_n \setminus (\cup_{m<n} V_m)
\]
The tree $\mathcal{I}(G)$ is hence isomorphic to $t_{d_1}$ where $d_1(n) = |V_n'|$ for all $n \geq 0$. To obtain a contradiction with \cite[Theorem~4.5.3]{Braudthesis}, we remark that for all $n \geq 1$, we have:
\[
\begin{array}{lcl}
  d_1(n)=|V_n'|  & \geq &  2^{2^n} - \sum_{m=0}^{n-1} 2^{2^m} \\[1mm]
          & \geq &  2^{2^n} - n \cdot 2^{2^{n-1}} \\
          & \geq  &  2^{2^{n-1}} ( 2^{2^{n-1}} - n) \\
          & \geq  &  2^{2^{n-1}}. \\
\end{array}
\]
  \end{proof}

\subsection{Contractions of synchronization trees in $\Tree_{2}$}
\label{sec:charac-gamma-algebraic}

In this section, we prove Theorem~\ref{thm:carac-gamma-algebraic} which states
that $\Gamma$-algebraic synchronization trees are the contractions of trees in $\Tree_{2}$. We start by showing that the synchronization trees in $\Tree_{2}$
are $\Gamma$-algebraic.

\begin{prop}
\label{prop:tree2-inc-alg}
Synchronization trees in $\Tree_{2}$ are $\Gamma$-algebraic.
\end{prop}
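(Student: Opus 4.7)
The plan is to show, for each synchronization tree $s \in \Tree_2$, the existence of a $\Gamma$-term tree $\tilde s$ whose denotation under the morphism $\tau' \colon T_\Gamma^\omega \to \Sta$ of Section~\ref{sec:morphism} is $s$ and whose \cite{Caucal}-style labelled-tree representation also lies in $\Tree_2$. Theorem~\ref{theorem:caucal} will then yield a $\Gamma$-algebraic recursion scheme whose initial term-tree solution is $\tilde s$, and the Mezei--Wright theorem (Theorem~\ref{thm-mezei}), applied to the essentially unique morphism $T_\Gamma^\omega \to \Sta$, will transfer this scheme into one whose initial solution in $\Sta$ is $s$.

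To produce $\tilde s$, I would first invoke the definition of $\Tree_2$, together with Remark~\ref{rem:rooted-graphs} and Proposition~\ref{prop:level-one}, to write $s = \Unf(G, r)$ for a prefix-recognizable graph $G \in \Graph_1$ whose vertex set is a regular subset of $C^*$ over some finite alphabet $C$, with $r$ a root of $G$ and all edge labels in $A \cup \{\ex\}$. The lexicographic order on $C^*$ restricted to the vertex set is MSO-definable, so for each vertex $u$ of $G$ the outgoing edges of $u$ are MSO-definably linearly ordered. I would then design an MSO-transduction $\mathcal{I}$ producing a graph $G' \in \Graph_1$ (using closure of $\Graph_1$ under MSO-transductions) whose unfolding from a distinguished vertex is the \cite{Caucal}-representation of a $\Gamma$-term tree $\tilde s$ with $\tau'(\tilde s) = s$. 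Concretely, $\mathcal{I}$ introduces at each vertex $v$ of $G$ a fixed finite set of copies ($v^m$, $v^s$, and $v^a$ for each $a \in A$) and connects them so that each vertex $u$ of $G$ heads an encoding of the $\Gamma$-term
\[
\iota(u) \;=\; \begin{cases}
0 & \text{if $u$ has no outgoing edge,} \\
1 & \text{if the unique outgoing edge of $u$ is $(u, \ex, v)$,} \\
a . \iota(v) & \text{if the unique outgoing edge of $u$ is $(u, a, v)$ with $a \in A$,} \\
\pi_1 + (\pi_2 + (\pi_3 + \cdots)) & \text{otherwise,}
\end{cases}
\]
where the outgoing edges of $u$ are enumerated $(u, \alpha_i, v_i)_{i \geq 1}$ in the MSO-definable order, and $\pi_i = \alpha_i . v_i^m$ when $\alpha_i \in A$ while $\pi_i = 1$ when $\alpha_i = \ex$. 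A possibly infinite right-leaning chain of binary $+$'s at a vertex of unbounded out-degree is realized inside $G'$ by using the sum-copy $v_i^s$ of $v_i$ as the $i$-th $+$-vertex of the chain, with its $\underline{1}$-edge directed toward the encoding of $\pi_i$ and its $\underline{2}$-edge toward $v_{i+1}^s$.

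The main obstacle is the careful design of $\mathcal{I}$ so that the possibly infinite chains of binary sums at vertices of unbounded out-degree are correctly assembled and so that the unfolding of $G'$ reproduces exactly the \cite{Caucal}-representation of $\tilde s$; handling unbounded out-degree with only a finite set of copies per vertex of $G$ is what forces the use of the $v_i^s$ trick above. Once $G' \in \Graph_1$ is specified, verifying $\tau'(\tilde s) = s$ is a direct trace of the $\{+_1, +_2\}$-contraction of $H(\tilde s)$ following Proposition~\ref{prop-Gamma-hom}, and the stated conclusion follows.
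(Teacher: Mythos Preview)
Your overall strategy---produce a $\Gamma$-term tree whose Caucal representation lies in $\Tree_2$, then invoke Theorem~\ref{theorem:caucal} and Mezei--Wright---is exactly the paper's. The difference is in how you handle out-degree, and that is where your argument breaks.

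The ``$v_i^s$ trick'' does not survive unfolding. In the graph $G\in\Graph_1$, a vertex $v$ may be a successor of several distinct vertices $u,u',\dots$. Your construction places the outgoing $\underline{2}$-edge of $v^s$ so that it points to $w^s$ where $w$ is the successor of $u$ immediately following $v$ in the lexicographic enumeration. But ``the successor immediately following $v$'' depends on which predecessor we mean: if $v$ is the $i$-th successor of $u$ and the $j$-th successor of $u'$, the intended $\underline{2}$-targets are the $(i{+}1)$-th successor of $u$ and the $(j{+}1)$-th successor of $u'$, which are in general different. Your MSO-transduction must therefore install several $\underline{2}$-edges out of the single vertex $v^s$, and after unfolding $G'$ all of them appear as children. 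The unfolding of $G'$ is then not the Caucal representation of any $\Gamma$-term tree (it is not even deterministic at the $\underline{2}$-edges). The same ambiguity arises for the $\underline{1}$-edge from $v^s$ if the edges into $v$ carry different labels. You cannot repair this with a finite copying set: encoding the chain by copies of the successors forces each copy to serve simultaneously in the chains of all its predecessors, and encoding it by copies of $u$ would require one copy per outgoing edge of $u$, hence unboundedly many.

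The paper sidesteps the issue entirely. Using the second item of Proposition~\ref{prop:trees-in-trees2}, it writes $t$ as the $B$-contraction of a \emph{deterministic} tree $t'\in\Tree_2$; determinism over a finite alphabet bounds the out-degree of $t'$, so a $\tilde{\Gamma}$-term tree with sums of fixed finite arity suffices, and the required transduction (on the $\Graph_1$ graph whose unfolding is $t'$) is local and commutes with unfolding. Closure of $\Gamma$-algebraic trees under contraction (Proposition~\ref{prop:contr-alg}) then transports the result back to $t$. If you want to salvage your route, the cleanest fix is to insert exactly this reduction to the deterministic (hence bounded-degree) case before designing the transduction.
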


\begin{proof}
Let $t$ be a synchronization tree in $\Tree_{2}$. By the first property of Proposition~\ref{prop:trees-in-trees2}, $t$ is the $B$-contraction of a deterministic tree $t' \in \Tree_{2}$ labelled by $A \cup \{\ex\}$. As $\Gamma$-algebraic synchronization trees are closed under contraction (see Proposition~\ref{prop:contr-alg}), it is enough to show that $t'$ is $\Gamma$-algebraic.

We are going to show that a tree $t''$ representing a $\tilde{\Gamma}$-term tree defining $t'$ belongs to $\Tree_{2}$. {Recall that $\tilde{\Gamma}$ is the signature 
 $\{ +^{n} \mid n \geq 1 \} \cup A \cup
  \{0,1\}$ allowing for sums of arbitrary arity that was introduced in Remark~\ref{rem:arbitrary-rank-sum} on page~\pageref{rem:arbitrary-rank-sum}.}
 Thanks to Theorem~\ref{theorem:caucal}, this implies that $t''$ is defined by a $\tilde{\Gamma}$-algebraic recursion scheme. In turn this implies that $t'$ is $\tilde{\Gamma}$-algebraic and hence $\Gamma$-algebraic (see Remark~\ref{rem:arbitrary-rank-sum}).

Let $<$ be an arbitrary total order on $A$. The tree $t''$ is obtained
by applying a transduction $\trans$ to $t'$. This transduction $\trans$ does the following:
\begin{itemize}
\item Whenever a vertex $v$ is the target of an edge labelled in $A$ 
and is not the source of an edge, then the transduction adds a new edge labelled $0$ 
from $v$ to a new vertex introduced by the transduction. 
\item For every vertex $u$ with $k\geq 1$ outgoing edges labelled $a_{1}< \ldots < a_{k} \in A$, respectively, going to vertices $u_{1},\ldots,u_{k}$, the transduction adds new vertices $v$, $v_{1},\ldots,v_{k}$, $v_{1}',\ldots,v_{k}'$ and
edges $u \era{+^{k}} v$, $v \era{\underline{i}} v_{i}$ and $v_{i} \era{a_{i}} v_{i}'$. If $u$ has an outgoing edge labelled by $\ex$ to a vertex $v$ then the transduction adds a vertex $v'$ and two edges $u \era{\underline{k+1}} v'$ and 
$v' \era{1} v$.
\item All the edges of the original structure are removed.
\end{itemize}

The tree $t''$ obtained by applying  $\trans$ to $t'$ represents a term
tree that defines $t'$. As $t'$ belongs to $\Tree_{2}$, it is (up to isomorphism) obtained by unfolding a graph $G \in \Graph_{1}$ from one of its roots $r$. {Furthermore it can be checked that the transduction $\trans$ commutes with the unfolding operation.} Hence the tree $t''$ is 
isomorphic to $\Unf(\trans(G),r)$ and therefore belongs to $\Tree_{2}$ (as $\trans(G)$ belongs to $\Graph_{1}$). 
\end{proof}

\begin{thm}
\label{thm:carac-gamma-algebraic}
The contractions of the synchronization trees in $\Tree_{2}$ are the 
$\Gamma$-algebraic synchronization trees. 
\end{thm}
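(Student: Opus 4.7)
The plan is to prove the theorem as a straightforward consequence of results already established in the excerpt, combining them to obtain both directions of the equivalence.

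First I would treat the inclusion that every $\Gamma$-algebraic synchronization tree arises as a contraction of some tree in $\Tree_2$. This is exactly the content of Proposition~\ref{prop:carac}: given a $\Gamma$-algebraic synchronization tree $t'$ defined by a $\Gamma$-algebraic recursion scheme with associated $\Gamma$-term tree $t$, Proposition~\ref{prop-Gamma-hom} tells us that $t'$ is the $\{+_1,+_2\}$-contraction of the synchronization tree $H(t)$, and by Remark~\ref{rem:ht-caucal} the tree $H(t)$ lies in $\Tree_2$. So $t'$ is a contraction of a tree in $\Tree_2$.

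For the converse inclusion, I would let $t$ be the $B$-contraction of some synchronization tree $t' \in \Tree_2$, for some non-empty $B \subseteq A$. By Proposition~\ref{prop:tree2-inc-alg}, every synchronization tree in $\Tree_2$ is $\Gamma$-algebraic, so $t'$ is $\Gamma$-algebraic. Then Proposition~\ref{prop:contr-alg} guarantees that the class of $\Gamma$-algebraic synchronization trees is closed under $B$-contraction for any non-empty $B \subseteq A$, so $t$ itself is $\Gamma$-algebraic.

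Combining the two inclusions yields the theorem. Since all the technical work is carried out in the earlier propositions (the translation of algebraic recursion schemes into trees of $\Tree_2$ via $H(\cdot)$ and Caucal's theorem on one side, and the transduction argument together with the syntactic closure under contraction on the other), there is no real obstacle left at this point; the final theorem is simply the packaging of those results into a single characterization.
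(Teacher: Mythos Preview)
Your proposal is correct and follows essentially the same route as the paper: one direction is exactly Proposition~\ref{prop:carac} (via $H(t)$ and Caucal's theorem, cf.\ Remark~\ref{rem:ht-caucal}), and the other combines Proposition~\ref{prop:tree2-inc-alg} with the closure under contraction from Proposition~\ref{prop:contr-alg}. The paper's own proof is likewise just a two-line packaging of these earlier results.
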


\begin{proof}
By Propositions~\ref{prop:tree2-inc-alg} and~\ref{prop:contr-alg}, we have that each contraction of 
a synchronization tree in $\Tree_{2}$ is $\Gamma$-algebraic. For the converse, 
a $\Gamma$-algebraic synchronization tree is defined as the contraction
with respect to $\{ +_{1},+_{2} \}$ of the $\Gamma$-term tree defined by an
algebraic scheme over $\Gamma$. From \cite[Theorem 3.5]{Caucal}, such a $\Gamma$-term tree
belongs to $\Tree_2$. Moreover, it may be seen as a synchronization tree 
over the alphabet which contains, in addition to the letters in $A$, 
the symbols $+_1$ and $+_2$.
  \end{proof}

Thanks to the second property of Proposition~\ref{prop:trees-in-trees2}, we have the following corollary.

\begin{cor}
Every $\Gamma$-algebraic tree is bisimilar to a tree in $\Tree_{2}$.
\end{cor}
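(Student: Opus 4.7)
The plan is to obtain the corollary as an immediate combination of Theorem~\ref{thm:carac-gamma-algebraic} with the first property of Proposition~\ref{prop:trees-in-trees2}. Let $t$ be a $\Gamma$-algebraic synchronization tree. First I would invoke Theorem~\ref{thm:carac-gamma-algebraic} to obtain a synchronization tree $s \in \Tree_{2}$ and a set of labels $B$ such that $t$ is (isomorphic to) the $B$-contraction of $s$. In fact, the proof of that theorem supplies $s$ explicitly: $s$ is the $\Gamma$-term tree defined by a $\Gamma$-algebraic recursion scheme for $t$, viewed as a synchronization tree over the extended alphabet $A \cup \{+_1,+_2\}$, and $B = \{+_1,+_2\}$.

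Next I would apply the first property of Proposition~\ref{prop:trees-in-trees2} to $s$ and the set $B$: since $s$ belongs to $\Tree_{2}$, its $B$-contraction is bisimilar to some tree $t'' \in \Tree_{2}$. Because $t$ is this very $B$-contraction (up to isomorphism, which implies bisimilarity), we conclude that $t$ is bisimilar to $t'' \in \Tree_{2}$, as required.

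There is essentially no obstacle: the corollary is a pure chaining of two previously established results. The only point worth double-checking is that the contraction operation invoked in the theorem (which contracts a fixed set of auxiliary labels $\{+_1,+_2\}$ introduced in Section~\ref{sec:morphism}) is an instance of the general $B$-contraction considered in Proposition~\ref{prop:trees-in-trees2}, which is the case since that proposition is stated for an arbitrary label set $B$. Thus the proof of the corollary reduces to a single sentence citing Theorem~\ref{thm:carac-gamma-algebraic} and Proposition~\ref{prop:trees-in-trees2}.
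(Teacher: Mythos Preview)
Your proof is correct and is exactly the natural argument: combine Theorem~\ref{thm:carac-gamma-algebraic} (every $\Gamma$-algebraic tree is a $B$-contraction of a synchronization tree in $\Tree_2$) with the first item of Proposition~\ref{prop:trees-in-trees2} (such a contraction is bisimilar to a tree in $\Tree_2$). This is the same chaining the paper intends; note only that the paper's one-line justification cites the \emph{second} property of Proposition~\ref{prop:trees-in-trees2}, whereas it is the \emph{first} property that does the work, precisely as you identified.
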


\section{Branch languages of bounded synchronization trees}\label{Sect:branchlang}

Call a synchronization tree \textbf{bounded} if there is a constant
$k$ such that the outdegree of each vertex is at most $k$.  Our aim in
this section will be to offer a language-theoretic characterization of
the expressive power of $\Gamma$-algebraic recursion schemes defining
synchronization trees. We shall do so by following Courcelle---see,
e.g.,~\cite{Courcelle83}---and studying the branch languages of
synchronization trees whose vertices have bounded outdegree. More
precisely, we assign a family of branch languages to each bounded
synchronization tree over an alphabet $A$ and show that a bounded tree
is $\Gamma$-algebraic if, and only if, the corresponding language
family contains a deterministic context-free language
(DCFL). Throughout this section, we will call $\Gamma$-algebraic trees
just algebraic trees, and similarly for regular trees.

\begin{defi}
Suppose that $t=(V,v_0,E,l)$ is a bounded synchronization tree 
over the alphabet $A$. Denote by  $k$ the maximum 
of the outdegrees of the vertices of $t$.
Let $B$ denote the alphabet $A \times [k]$.
A \textbf{determinization of $t$} is a tree $t'=(V,v_0,E,l')$ 
over the alphabet $B$ which differs from $t$ only in the 
labelling as follows. Suppose that $v \in V$ with outgoing edges 
$(v,v_1),\ldots,(v,v_\ell)$ labelled 
$a_1,\ldots,a_\ell \in\A \cup \{\ex\}$ in $t$.  Then there is some
permutation $\pi$ of the set $[\ell]$ such that the label of each
$(v,v_i)$ in $t'$ is $(a_i,\pi(i))$.

Consider a determinization $t'$ of $t$.
Let $v \in V$ and let $v_0,v_1,\ldots,v_m = v$ 
denote the vertices on the unique path from the root to $v$.
The \textbf{branch word} corresponding to $v$ in $t'$ 
is the alternating word
\[k_0(a_1,i_1)k_1\ldots k_{m-1}(a_m,i_m)k_m\]
where $k_0,\ldots,k_m$ denote the outdegrees of the
vertices $v_0,\ldots,v_m$, and for each $j \in [m]$,
$(a_j,i_j)$ is the label of the edge $(v_{j-1},v_j)$ 
in $t'$.
The \textbf{branch language} $L(t')$ corresponding
to a determinization $t'$ of $t$ consists of all
branch words of $t'$.

Finally, the family of branch languages corresponding to
$t$ is:
\[\L(t) = \{L(t') \mid t'\ {\it is}\ {\it a}\ {\it determinization}\ {\it of}\ t \}.\]
\end{defi}

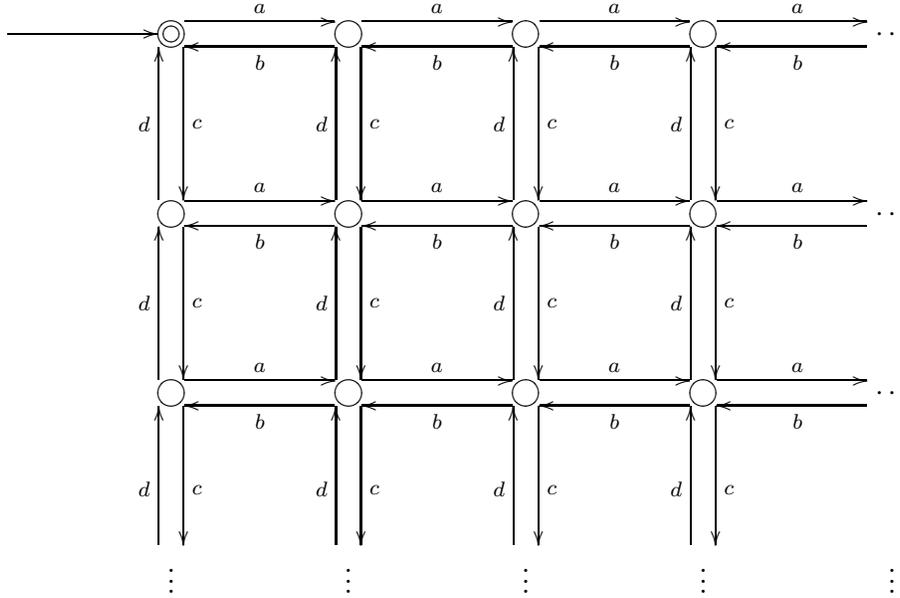
\begin{figure} 
\newcommand{\state}{*+[o][F-]{\makebox(4,4){}}}
\newcommand{\istate}{*+[o][F-]{\makebox(4,4){}}}
\newcommand{\fstate}{*+[o][F=]{\makebox(4,4){}}}
\[\xymatrix@C=12ex@R=12ex{
\ar[r] & \fstate{} \ar@<1ex>[d]^c \ar@<1ex>[r]^a & \state{} \ar@<1ex>[d]^c \ar@<1ex>[r]^a \ar@<1ex>[l]^b  & \state{} \ar@<1ex>[d]^c \ar@<1ex>[r]^a \ar@<1ex>[l]^b & \state{} \ar@<1ex>[d]^c \ar@<1ex>[r]^a \ar@<1ex>[l]^b & \cdots  \ar@<1ex>[l]^b \\
 & \state{} \ar@<1ex>[u]^d \ar@<1ex>[r]^a \ar@<1ex>[d]^c & \state{} \ar@<1ex>[u]^d \ar@<1ex>[r]^a \ar@<1ex>[l]^b \ar@<1ex>[d]^c &  \state{} \ar@<1ex>[u]^d \ar@<1ex>[r]^a \ar@<1ex>[l]^b \ar@<1ex>[d]^c & \state{} \ar@<1ex>[u]^d  \ar@<1ex>[d]^c \ar@<1ex>[r]^a \ar@<1ex>[l]^b & \cdots \ar@<1ex>[l]^b \\
 & \state{} \ar@<1ex>[u]^d \ar@<1ex>[r]^a \ar@<1ex>[d]^c & \state{} \ar@<1ex>[u]^d \ar@<1ex>[r]^a \ar@<1ex>[l]^b \ar@<1ex>[d]^c &  \state{} \ar@<1ex>[u]^d \ar@<1ex>[r]^a \ar@<1ex>[l]^b \ar@<1ex>[d]^c & \state{} \ar@<1ex>[u]^d \ar@<1ex>[r]^a \ar@<1ex>[l]^b \ar@<1ex>[d]^c & \cdots \ar@<1ex>[l]^b \\
&  \vdots \ar@<1ex>[u]^d & \vdots \ar@<1ex>[u]^d & \vdots \ar@<1ex>[u]^d & \vdots \ar@<1ex>[u]^d & \vdots 
}
\]
\caption{An LTS whose unfolding is not an algebraic synchronization tree. {The initial node as outdegree 3 (recall that there is an implicit outgoing edge labelled by $\ex$), the other nodes on the top-most row and on the left-most column also have outdegree 3 and all other "internal" nodes have outdegree 4.}}
\label{fig:bag}
\end{figure}

By way of example, consider the LTS depicted in
Figure~\ref{fig:bag}. This LTS describes the behaviour of a bag over a
{four}-letter alphabet, when we consider $b$ to stand for the output of
an item that was input via $a$, and $d$ to signal the output of an item
that was input via $c$.  The synchronization tree $t_{\text{bag}}$
that is obtained by unfolding this LTS from its start state is
bounded. In fact, the outdegree of each non-leaf node is three or four. The
{branch words} corresponding to the nodes of any determinization of
the tree $t_{\text{bag}}$ have the form
\[
k_0(a_1,i_1)k_1\ldots k_{m-1}(a_m,i_m)k_m ,
\]
where $k_0=3$, $k_m\in \{3,4,0\}$, $k_i \in \{3,4\}$ for each $i \in [m-1]$, each $i_j \in [k_{j-1}]$ for $j \in [m]$
and $a_1\ldots a_m$ is a word with
the property that, in any of its prefixes, the number of occurrences
of the letter $a$ is greater than or equal to the number of
occurrences of the letter $b$, and the number of occurrences of the
letter $c$ is greater than or equal to the number of occurrences of
the letter $d$. Moreover, for each $j \in [m]$, 
$a_j = \ex$ if and only if $j = m$ and $k_m = 0$. (Note that, when
$a_m = \ex$, the number of $a$'s in $a_1\ldots a_{m-1}$ equals
the number of $b$'s, and similarly for $c$ and $d$.)

\begin{thm}
\label{thm-DCFL}
A bounded synchronization tree $t$ is algebraic (respectively, regular) 
iff $\L(t)$ contains a DCFL (respectively, regular language).
\end{thm}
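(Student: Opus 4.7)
My plan is to reduce Theorem~\ref{thm-DCFL} to Courcelle's classical characterization of algebraic term trees (\cite{Courcelle78a,Courcelle78b,Courcelle83}): a $\Sigma$-term tree is algebraic iff its branch language is a DCFL, and regular iff its branch language is a regular language. The technical work consists in constructing, for a bounded synchronization tree $t$ with outdegree bound $k$, a faithful passage between determinizations of $t$ and term trees over a suitable signature, in a way that algebraicity is preserved in both directions.

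First I would fix the signature $\Sigma_k = \{\tau_n : 0 \leq n \leq k\} \cup \{(a,i) : a \in A \cup \{\ex\},\ i \in [k]\}$, where $\tau_n$ has arity $n$ and each $(a,i)$ is unary. Given a determinization $t'$ of $t$, I would encode it as a $\Sigma_k$-term tree $T(t')$ by alternating a $\tau_n$-vertex (where $n$ is the outdegree of the corresponding vertex of $t'$) with unary $(a,i)$-vertices representing the edge labels. Because $t'$ is deterministic over $A \times [k]$, the encoding $T(t')$ is well-defined, and its Courcelle branch language is, up to a trivial bijective recoding of symbols, the branch language $L(t')$. Hence by Courcelle, $T(t')$ is $\Sigma_k$-algebraic iff $L(t')$ is a DCFL, and $\Sigma_k$-regular iff $L(t')$ is regular.

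The $(\Leftarrow)$ direction is then straightforward: if $L(t') \in \mathcal{L}(t)$ is a DCFL, then $T(t')$ is a $\Sigma_k$-algebraic term tree. Interpreting $\Sigma_k$ in $\mathsf{ST}(A)$ by $\tau_n^{\mathsf{ST}(A)}(s_1,\ldots,s_n) = s_1 + \cdots + s_n$ (using Remark~\ref{rem:arbitrary-rank-sum}) and $(a,i)^{\mathsf{ST}(A)}(s) = a.s$ for $a\in A$, $(\ex,i)^{\mathsf{ST}(A)}(s) = 1$, yields a morphism of continuous categorical algebras from $T_{\Sigma_k}^\omega$ into $\mathsf{ST}(A)$. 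The Mezei-Wright theorem (Theorem~\ref{thm-mezei}) then produces a $\tilde{\Gamma}$-algebraic scheme defining the image of $T(t')$, which equals $t$ by construction of the encoding. Invoking Remark~\ref{rem:arbitrary-rank-sum} again gives a $\Gamma$-algebraic scheme for $t$. The regular case is analogous.

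The $(\Rightarrow)$ direction is the main obstacle, since the encoding $T(t')$ depends on a choice of determinization, and it is not a priori clear that any determinization of a $\Gamma$-algebraic tree is algebraic. The plan is to exploit Theorem~\ref{thm:carac-gamma-algebraic}: a $\Gamma$-algebraic $t$ is the $\{+_1,+_2\}$-contraction of some synchronization tree in $\Tree_2$. Starting from the concrete representative $H(\hat{t})$ where $\hat{t}$ is the $\Gamma$-term tree defining $t$, I would use the canonical left-to-right order on the leaves of $\hat{t}$ (induced by the order $+_1 < +_2$ on the choices) to totally order the outgoing edges at each vertex of $t$, thereby defining a specific determinization $t^\star$. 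The key step is to show that $T(t^\star)$ can be obtained from $H(\hat{t})$ (or from a suitable graph in $\Graph_1$ whose unfolding is in $\Tree_2$) via an MSO-transduction, hence $T(t^\star) \in \Tree_2$; Theorem~\ref{theorem:caucal} then yields that $T(t^\star)$ is $\Sigma_k$-algebraic, and Courcelle gives that $L(t^\star)$ is a DCFL. The MSO-transduction itself must, at every vertex of $t$, compute the outdegree and compute, for each outgoing edge, its position in the lexicographic order; boundedness of $t$ is precisely what makes this definable in MSO, since the relevant quantities are bounded. For the regular case, $t$ is $\Gamma$-regular, hence in $\Tree_1$ after the analogous reduction, and the same MSO-transduction argument gives $T(t^\star) \in \Tree_1$, which is equivalent to the branch language being regular.
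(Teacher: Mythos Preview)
Your proposal is correct and follows essentially the same approach as the paper: the $(\Leftarrow)$ direction via Mezei--Wright matches Lemma~\ref{lem-A}, and the $(\Rightarrow)$ direction via an MSO-transduction carried out at the $\Graph_1$ level (so that the result still unfolds into $\Tree_2$) matches Lemma~\ref{lem-B}. The paper differs only in cosmetic details: it uses the signature $\Sigma = \{+_n : n \le k\} \cup A \cup \{0,1\}$ rather than your redundant $(a,i)$-labelled variant, works directly with the algebraic $\Gamma$-term tree $T_0$ rather than invoking Theorem~\ref{thm:carac-gamma-algebraic}, and spells out the two-step transduction (first collapsing maximal $+/0$-only subterms to $0$, then replacing each bounded $+$-cluster by a single $+_n$ with the children ordered lexicographically) with explicit MSO formulas.
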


In the proof, we make use of the following construction.
Suppose that $t =(V,v_0,E,l)$ is a bounded synchronization tree 
over $A$. Let $k$ be defined as above, and let $\Sigma$ be the 
signature containing the symbol $+_i$ of rank $i$ for each $i \in [k]$, 
the constant symbols $0$ and $1$, and the letters of $A$ as symbols of
rank $1$. We will sometimes write $+_0$ for $0$. 
Then each determinization $t'$ of $t$ naturally corresponds 
to an `alternating'  \emph{term tree} $T_{t'}$ in the initial continuous categorical 
$\Sigma$-algebra $T_\Sigma^\omega$. As a partial function $\N^* \to \Sigma$, 
the term tree $T_{t'}$ 
is defined as follows. Consider a vertex $v \in V$ with corresponding
branch word $k_0(a_1,i_1)\ldots(a_n,i_n)k_n$. Then $T_{t'}$ is defined on 
both words $i_11\ldots i_{n}1$ and $i_11\ldots 1i_n$, 
and 
\begin{eqnarray*}
T_{t'}(i_11\ldots i_n1) &=& +_{k_n}\\
T_{t'}(i_11\ldots 1i_n) &=&
\left\{ 
\begin{array}{ll}
a_{n} & {\rm if}\ a_n \in A\\
1 &{\rm if}\ a_n = \ex . 
\end{array}
\right.
\end{eqnarray*}
In addition, $T_{t'}$ is defined on the empty word $\epsilon$ 
and $T_{t'}(\epsilon) = +_{k_0}$, where ${k_0}$ is the outdegree of the root.

\begin{lem}
\label{lem-A}
Suppose that $t\in \Sta$ is bounded and has a determinization $t'$
such that the $\Sigma$-term tree $T_{t'}$ is algebraic (regular, resp.). Then $t$ is
algebraic (regular, resp.).
\end{lem}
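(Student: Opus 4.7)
The plan is to exhibit $t$ as the image of $T_{t'}$ under an essentially unique continuous $\Sigma$-algebra morphism into $\Sta$ and then invoke the Mezei--Wright theorem. First I would equip $\Sta$ with the natural continuous categorical $\Sigma$-algebra structure, interpreting each $+_i$ as the $i$-ary sum functor $\Sta^i \to \Sta$ (with $+_0$ giving the tree $0$), each letter $a\in A$ as the unary prefixing functor $a.(\_)$, and the constant $1$ as the one-edge tree labelled $\ex$. This is essentially the $\tilde{\Gamma}$-algebra structure on $\Sta$ from Remark~\ref{rem:arbitrary-rank-sum}.

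Next, let $h : T_\Sigma^\omega \to \Sta$ be the essentially unique continuous categorical $\Sigma$-algebra morphism guaranteed by Proposition~\ref{prop-freeness}. The key step is to verify that $h(T_{t'}) = t$. I would do this first for finite truncations of $T_{t'}$ by structural induction: at each node of the form $+_{k_v}$ (corresponding to a vertex $v$ of $t$ of outdegree $k_v$), the interpretation forms a $k_v$-ary sum whose $j$-th summand is the prefix $a_j.(\_)$ applied to the recursively computed subtree, thus reconstructing the local branching structure of $t$ exactly; note that the permutation $\pi$ underlying the determinization only reshuffles the summands, which is invisible modulo the commutativity of $+$ in $\Sta$. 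The general case then follows because $h$ preserves colimits of $\omega$-diagrams and $T_{t'}$ is the colimit of its finite approximations, which map under $h$ to an $\omega$-diagram of injective synchronization tree morphisms whose colimit is $t$.

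Once $h(T_{t'}) = t$ is established, the Mezei--Wright theorem (Theorem~\ref{thm-mezei}) immediately gives that $t$ is an algebraic (resp.\ regular) object of $\Sta$ viewed as a continuous $\Sigma$-algebra, i.e., $t$ is $\tilde{\Gamma}$-algebraic (resp.\ $\tilde{\Gamma}$-regular). To finish, I would apply the translation of Remark~\ref{rem:arbitrary-rank-sum}, which turns a $\tilde{\Gamma}$-scheme defining $t$ into a $\Gamma$-scheme defining the same tree by rewriting each $+^n$-subterm as an iterated binary sum; since this rewriting only affects the right-hand sides and preserves the ranks of the functor variables, the regular case is preserved as well.

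The main obstacle is the bookkeeping in the identification $h(T_{t'}) = t$: one has to reconcile the alternating pattern of $+_{k_j}$-nodes and unary action nodes in $T_{t'}$ (together with the indices produced by the permutations $\pi$) with the vertices and edge labels of the original synchronization tree $t$. This is entirely routine once the explicit correspondence between the positions $i_1 1 i_2 1 \cdots i_n 1$ in $T_{t'}$ and the paths $v_0 v_1 \cdots v_n$ in $t$ is made precise, so no genuinely new idea is needed beyond the Mezei--Wright transfer principle.
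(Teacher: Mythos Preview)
Your proposal is correct and follows essentially the same approach as the paper: equip $\Sta$ with a $\Sigma$-algebra structure via $i$-ary sums, take the unique morphism $h_\Sigma:T_\Sigma^\omega\to\Sta$, observe that $h_\Sigma(T_{t'})\cong t$, and apply the Mezei--Wright theorem. The paper's proof is terser---it simply asserts that $h_\Sigma$ maps $T_{t'}$ to a tree isomorphic to $t$ and leaves the final passage from $\Sigma$-algebraic to $\Gamma$-algebraic implicit---whereas you spell out both the colimit verification and the $\tilde\Gamma\to\Gamma$ rewriting from Remark~\ref{rem:arbitrary-rank-sum}; but these are elaborations of the same argument, not a different route.
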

%%%%
\begin{proof}
We can turn $\Sta$ into a continuous categorical
$\Sigma$-algebra by defining 
\[+_i(t_1,\ldots,t_i) = t_1 + \cdots + t_i\] for each $i \in [k]$ and
$t_1,\ldots,t_i \in \Sta$, and similarly for morphisms between 
trees in $\Sta$.   Now up to natural isomorphism, there is a
unique categorical $\Sigma$-algebra morphism
$h_\Sigma:T_\Sigma^\omega \to \Sta$. For any $t \in \Sta$ and for any
determinization $t'$ of $t$, $h$ maps $T_{t'}$ to a tree isomorphic to
$t$.  Since, by the Mezei-Wright theorem \cite{BEmezei}, 
$h_\Sigma$ preserves
algebraic and regular objects, if $T_{t'}$ is algebraic or regular,
then so is $t$.   
\end{proof}

\begin{lem}
\label{lem-B}
Suppose that $t \in \Sta$ is bounded and algebraic (regular, resp.).
Then $t$ has a determinization $t'$ such that the $\Sigma$-term tree $T_{t'}$ is 
algebraic (regular, resp.).
\end{lem}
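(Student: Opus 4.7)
My plan is to exploit the characterization of algebraic $\Sigma$-term trees as elements of $\Tree_2$ (Theorem~\ref{theorem:caucal}), together with the membership of $\Gamma$-algebraic synchronization trees' associated $H(T)$ in $\Tree_2$ (Remark~\ref{rem:ht-caucal}). Since $t$ is $\Gamma$-algebraic, let $T$ be the $\Gamma$-term tree defined by a $\Gamma$-algebraic recursion scheme for which $t = \tau'(T)$ is the $\{+_1,+_2\}$-contraction of $H(T)$; then $H(T) \in \Tree_2$, so $H(T) = \Unf(G,r)$ for some prefix-recognizable graph $G \in \Graph_1$ and root $r$. By Remark~\ref{rem:rooted-graphs} I may assume every vertex of $G$ is reachable from $r$.

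The first step uses the bounded outdegree of $t$ to fix a canonical determinization $t'$: for any vertex $v$ of $G$ that corresponds to a vertex of $t$ (the root of $G$, or the target of an edge labelled in $A \cup \{\ex\}$), the maximal paths from $v$ whose edges are labelled in $\{+_1,+_2\}$ and which terminate with an edge labelled in $A \cup \{\ex\}$ number at most $k$ and may be ordered lexicographically using $+_1 < +_2$. The second and main step builds an MSO-transduction $\trans$ such that $\Unf(\trans(G), r')$ is the labelled-tree representation (in the style of Section~\ref{Sect:first-levels}) of $T_{t'}$. Concretely, for each vertex $v$ of $G$ that corresponds to a vertex of $t$ with outdegree $\ell$, $\trans$ produces one ``$+_\ell$-copy'' of $v$ together with $\ell$ ``action copies'' labelled by the action symbols (or $1$, for an $\ex$-path) of the ordered choice paths from $v$, and also with intermediate ``position'' copies labelled $\underline 1,\dots,\underline\ell$; since the outdegree is bounded by $k$, a fixed copying operation of arity $O(k)$ suffices. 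The required MSO formulas---expressing that $v$ corresponds to a vertex of $t$, that a $\{+_1,+_2\}$-path from $v$ terminates with a given action label, that a given such path is the $i$-th in the lexicographic order, and that its endpoint is a given vertex---are expressible by reachability formulas restricted to $\{+_1,+_2\}$-edges together with quantification over sets encoding candidate paths.

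The main obstacle will be verifying that $\trans$ produces precisely the $\Tree_2$-representation of $T_{t'}$ with the prescribed interleaving of $\Sigma$-symbol vertices and position vertices $\underline i$, and that the resulting graph is indeed a tree when unfolded. Once checked, $\trans(G) \in \Graph_1$ since MSO-transductions preserve $\Graph_1$ membership, hence $\Unf(\trans(G), r') \in \Tree_2$, and by Theorem~\ref{theorem:caucal} the $\Sigma$-term tree $T_{t'}$ is algebraic. The regular case follows from the same construction after observing that when $t$ is $\Gamma$-regular the underlying $\Gamma$-term tree $T$ is regular, so $H(T) \in \Tree_1$ and $G$ may be taken in $\Graph_0$; the transduction then yields a finite $\trans(G) \in \Graph_0$ whose unfolding represents $T_{t'} \in \Tree_1$, which is a regular $\Sigma$-term tree.
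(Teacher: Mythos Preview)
Your overall strategy is the same as the paper's: pass through the Caucal hierarchy, represent the $\Gamma$-term tree as the unfolding of a deterministic graph $G\in\Graph_1$, apply an MSO-transduction to $G$, and unfold back into $\Tree_2$ to obtain the labelled-tree representation of an alternating $\Sigma$-term tree $T_{t'}$. The paper proceeds in exactly this way.

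There is, however, a genuine gap in your proposal. You claim that the formulas you need ``are expressible by reachability formulas restricted to $\{+_1,+_2\}$-edges together with quantification over sets encoding candidate paths.'' Encoding a path by its vertex set only works when the path is \emph{simple}. But nothing in your setup prevents $G$ from containing $\{+_1,+_2\}$-cycles: these arise whenever the $\Gamma$-term tree $T$ has an infinite branch consisting entirely of $+$-vertices whose side subtrees are all $0$'s (for instance the subterm defined by $F = F + 0$). Such a branch contributes nothing to the contraction $t$, so $t$ can still be bounded, yet in $G$ it shows up as a $\{+_1,+_2\}$-cycle. Once cycles are present, a vertex set no longer determines a $\{+_1,+_2\}$-path, you cannot speak of ``the $i$-th path in lexicographic order'' in MSO, and your transduction is not well defined.

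The paper avoids this by an explicit preliminary step: it first replaces every maximal subterm of the $\Gamma$-term tree all of whose vertices are labelled $+$ or $0$ by a single $0$-leaf (a monadic colouring). After this cleanup, boundedness of $t$ forces every $+$-prefix to be finite, hence the graph one unfolds from has no $+$-cycles, and the paper can then legitimately write ``each such path is simple'' and encode paths by vertex sets. You can repair your argument the same way, or equivalently by first applying to $G$ the MSO-interpretation that deletes every vertex with outgoing $\{+_1,+_2\}$-edges from which no vertex with an outgoing $A\cup\{\ex\}$-edge is $\{+_1,+_2\}^*$-reachable; boundedness of $t$ then guarantees the remaining $\{+_1,+_2\}$-subgraph is acyclic, and your path-encoding goes through.
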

\begin{proof}
Suppose that $t$ is algebraic and bounded by $k$. Then,
by the Mezei-Wright theorem \cite{BEmezei},  there is an
algebraic term tree $T_0\in T_\Gamma^\omega$ such that $h_\Gamma(T_0)$ is isomorphic
to $t$, where $h_\Gamma$ denotes the essentially unique continuous categorical 
$\Gamma$-algebra morphism $T_\Gamma^\omega \to \Sta$. We want to show that there
is an alternating algebraic $\Sigma$-term tree $T_1 \in T_\Sigma^\omega$ 
such that $h_\Sigma(T_1)$ is isomorphic to $t$, where $h_\Sigma$ is the essentially
unique continuous categorical $\Sigma$-algebra morphism $T_\Sigma^\omega \to \Sta$. 
Since such a term tree $T_1$ is $T_{t'}$ for some determinization $t'$ of $t$,
this completes the proof. 

We construct $T_1$ from $T_0$ in two steps. First, we replace all 
maximal subterms all of whose vertices are labelled $+$ or $0$ by a 
single vertex labelled $0$ to obtain a $\Gamma$-term tree $T_0'$.
Second, we consider vertices of $T_0'$ labelled 
$+$ whose parent, if any, is labelled in $A$. Since $t$ is 
bounded by $k$, the subterm rooted at such a vertex can be 
written as $s_0(S_1,\ldots,S_n)$, where $s_0$ is a finite 
term all of whose vertices are labelled $+$ or $0$ or a variable 
in the set $\{v_1,\ldots,v_n\}$ for some $n \leq k$, whose frontier word is 
$v_1\cdots v_n$, and each $S_i$ is a $\Gamma$-term tree whose root is labelled 
in $A \cup \{ 1 \}$. We replace each such vertex by a vertex labelled
$+_n$ having $n$ outgoing edges labelled $1,\ldots,n$ connecting this vertex to 
the roots of $S_1,\ldots,S_n$, respectively.

The first transformation is a monadic colouring
\cite{Braudthesis,CarayolWohrle} (a special case of monadic
interpretation, which is also known as monadic
marking~\cite{CarayolWohrle}), since there is a monadic second-order
formula $\phi(x)$ characterizing those vertices $x$ of $T_0$ such that
all vertices of the subterm rooted at $x$ are labelled $+$ or $0$, but
any other subterm containing $x$ has a vertex labelled in $A \cup
\{1\}$:
\begin{eqnarray*}
\forall y. (x \leq y \Rightarrow ( l_+(y) \vee l_0(y))
\wedge 
\forall y. (y < x \Rightarrow (\exists z. y \leq z \wedge \bigvee_{a \in A\cup \{1\}} l_a(z)) 
\end{eqnarray*}
(Here, $x \leq y$ denotes that there is a path from $x$ to $y$
and $x < y$ stands for $x \leq y \wedge x \neq y$.
Moreover, $l_a(z)$ denotes that $z$ is labelled $a$.) Thus, the first transformation gives a $\Gamma$-algebraic term tree,
since algebraic term trees (and in fact deterministic algebraic trees
in the Caucal's pushdown hierarchy) are closed under monadic
colourings~\cite[Proposition~1]{CarayolWohrle}.

In order to prove that the second transformation also gives an
algebraic term tree, we argue on the level of graphs.  Suppose that
$T_0'$ is the algebraic term tree obtained after the first step. Since
$T_0'$ is algebraic, it is the unfolding of a (deterministic) prefix
recognizable graph $G$ from its root $r$,
see~\cite{CarayolWohrle,Caucal}. Without loss of generality we may
assume that every vertex of $G$ is accessible from $r$. Our aim is to
define a monadic transduction which, when applied to $G$, produces a
graph $G'$ whose unfolding from vertex $r$ is $T_1$. Since, by
Proposition~1 in conjunction with Lemma~2 in~\cite{CarayolWohrle}, prefix
recognizable graphs are closed under monadic transductions, it
follows that $T_1$ is algebraic.

We start by considering $G$ together with a disjoint copy of $G$, 
whose vertices are ordered pairs $(v,1)$, where $v$ is 
vertex of $G$. The label of a vertex $(v,1)$ is the label
of $v$ in $G$. The edges are the edges of $G$ and an edge 
$v \to (v,1)$ labelled $\#$ for each vertex $v$ of $G$.
Edges in $G$ retain their label.

Then we drop all vertices of the form $(v,1)$, where the label of $v$ 
is different from $+$, using the formula
\[\exists y. E_{\#}(y,x) \wedge \neg l_+(x) \]
where the meaning of $E_{\#}(y,x)$ is that there is an edge 
labelled $\#$ from $y$ to $x$, 
which is satisfied by exactly those vertices $(v,1)$ labelled $+$.
Moreover, we define new edges. First of all, 
we keep all edges $v \to v'$ of $G$ such that $v$ is labelled in $A$,  
or $v$ is labelled $+$ but $v'$ is not. Each such edge retains its label. 
Second, whenever $v$ and $v'$ are both labelled $+$ in $G$ 
we introduce an edge $v \to (v',1)$ and an edge $(v,1) \to (v',1)$ 
whose label is the same as that of the edge $v \to v'$ in $G$. 
For this purpose, we use the formulas in the free variables $x,y$, 
\[l_+(x) \wedge l_+(y) \wedge \exists y'.(E_{\#}(y',y) \wedge 
(E_i(x,y') \vee \exists x'.(E_{\#}(x',x) \wedge E_i(x',y'))))\]
where $i = 1,2$. 
Last, for each edge $v \to v'$ of $G$ such that $v$ is labelled $+$ 
but $v'$ is not, we introduce an edge $(v,1) \to v'$ whose label 
is that of the edge $v \to v'$. This is done by utilizing the 
formula
\[l_+(x) \wedge \neg l_+(y) \wedge 
\exists x'. ( E_{\#}(x',x) \wedge E_i(x',y) )  \] 
where again $i = 1,2$. 

Note that the unfolding of the graph constructed above from the 
vertex $r$ is $T_0'$. Next, consider any vertex $v$  labelled $+$ together
with all the paths originating in $v$ leading to a vertex labelled
in $A \cup\{1\}$. Since $t'$ is bounded by $k$, each such 
path is simple and the number of such paths is at most $k$. 
Let $v_1,\ldots,v_n$ denote the (not necessarily different) 
end vertices of these paths, ordered `lexicographically'. 
We relabel $v$ by $+_n$ and introduce a new edge $v \to v_i$ labelled $i$ for each $i$.
This is accomplished by using the following formulas. 
Let $\Path(x,X,y)$ denote a formula that says that the set of vertices $X$ 
forms a path from $x$ to $y$, the label of each vertex in $X$ 
other than $y$ is different from $+$, and the label of $y$ is in $A \cup\{1\}$. 
Then for each $n$, the formula 
\[\exists X_1\ldots X_n,\exists x_1,\ldots,x_n .\Wedge_{i < j} \neg (X_i = X_j) 
\wedge \Wedge_i \Path(x,X_i,x_i)\] 
expresses that there are at least $n$ different paths from $x$ to some vertex $y$ 
labelled in $A \cup \{1\}$,  
all of whose vertices different from $y$ are labelled $+$. With the help 
of these formulas we can also express that there are exactly $n$ such
paths from $x$.  Finally, when $\Path(x,X,y)$ and $\Path(x,X',y')$ 
with $X \neq X'$, we can express the fact that $X$ is lexicographically 
less than $X'$ by the formula 
\[
\begin{array}{c}
\exists Y,Z,Z'. \exists z_0,z,z' 
(X = Y \cup Z \wedge X' = Y \cup Z' \wedge \Path(x,Y,z_0) \wedge \\
(z = y \vee \Path(z,Z,y)) \wedge (z' = y' \vee \Path(z',Z',y') )
\wedge E_0(z_0,z) \wedge E_1(z_0,z')
\end{array}\]

In addition to these new edges, we keep all edges 
originating from a vertex labelled in $A$ (that 
are necessarily labelled $1$). All vertices 
of the form $(v,1)$ become inaccessible from $r$.
The unfolding of the new graph from vertex $r$ 
is almost an alternating term. In order to make it alternating,
we have to add a new root labelled $+_1$ if $r$ 
is labelled in $A$ together with an edge to $r$, 
and replace each edge $v \to v'$
where both $v$ and $v'$ are labelled in $A$ by new edges 
$v \to u$ and $u \to v$, where $u$ is a new vertex labelled $+_1$.
These edges are labelled $1$. The new graph is still obtained by 
monadic transduction, and its unfolding is the alternating term $T_1$.

The same argument works in the regular case using the fact that 
regular terms are unfoldings of finite (deterministic) graphs,
and that finite graphs are closed under monadic transduction.
\end{proof}

{\sl Proof of Theorem~\ref{thm-DCFL}.}  Suppose that $t \in \Sta$ is
bounded. If $t$ is algebraic, then by Lemma~\ref{lem-B} there is some
determinization $t'$ of $t$ such that $T_{t'}$ is algebraic.  By
Courcelle's theorem, the branch language of $T_{t'}$ is a DCFL.  But
the branch language of $T_{t'}$ is essentially $L(t')$.

Suppose now that $t$ has a determinization $t'$ 
such that $L(t')$ is a DCFL. Then  the branch language of 
$T_{t'}$ is a DCFL, and thus by Courcelle's theorem, 
$T_{t'}$ is algebraic. The proof is completed by using 
Lemma~\ref{lem-A}. 

A similar reasoning applies in the regular case.

The language-theoretic characterization of the class of bounded
algebraic synchronization trees offered in Theorem~\ref{thm-DCFL} can
be used to prove that certain trees are {\em not} algebraic.

\begin{prop}\label{Prop:bagnotalg}
The synchronization tree $t_{\text{bag}}$ associated with the bag over
a binary alphabet depicted on Figure~\ref{fig:bag} is not algebraic,
even up to language equivalence.
\end{prop}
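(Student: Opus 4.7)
The plan is to combine two facts already established in the paper: the path language of every $\Gamma$-algebraic synchronization tree is context-free, and the path language of $t_{\text{bag}}$ is not context-free. Since two synchronization trees are language equivalent exactly when they share a path language, this suffices to rule out $\Gamma$-algebraicity up to language equivalence.

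First, I would pin down the path language $L$ of $t_{\text{bag}}$. Inspecting the LTS in Figure~\ref{fig:bag}, the only vertex carrying an implicit $\ex$-edge is the double-circled initial state (all other vertices in the top row, left column, and interior already realize their full outdegree via explicit $a, b, c, d$-transitions). Hence an exit vertex of the unfolding $t_{\text{bag}}$ corresponds precisely to a run of the bag that returns it to empty while never going negative in either component. Writing $|u|_x$ for the number of occurrences of the letter $x$ in the word $u$, this gives
\[
L = \{w \in \{a,b,c,d\}^* : |u|_b \leq |u|_a \tand |u|_d \leq |u|_c \text{ for every prefix $u$ of $w$, and } |w|_a = |w|_b, |w|_c = |w|_d\}.
\]
Intersecting $L$ with the regular language $a^*c^*b^*d^*$ yields $L \cap a^*c^*b^*d^* = \{a^n c^m b^n d^m : n,m \geq 0\}$, which is a textbook non-context-free language: in any Ogden/pumping-lemma decomposition $uvxyz$ of $a^N c^N b^N d^N$ with $|vxy| \leq N$, the factor $vy$ is confined to at most two adjacent blocks, so iterating the pump breaks one of the two balance constraints $n_a = n_b$ or $n_c = n_d$. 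Since context-free languages are closed under intersection with regular languages, $L$ itself is not context-free.

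To conclude, I would invoke the first proposition of Section~\ref{ssec:up-to-languages}, which (by a direct Mezei--Wright transfer along the path-language morphism $\Sta \to P(A^*)$ combined with the Chomsky--Sch\"utzenberger style characterization of $\Gamma$-algebraic objects of $P(A^*)$) asserts that the path language of every $\Gamma$-algebraic synchronization tree is context-free. Since no $\Gamma$-algebraic tree has path language equal to the non-context-free $L$, the tree $t_{\text{bag}}$ is not $\Gamma$-algebraic, even up to language equivalence. The main (and essentially only) subtle point in the argument is reading off $L$ correctly from the figure; once $L$ is identified, the rest is a standard closure-under-intersection application of the pumping lemma plus the context-free upper bound on $\Gamma$-algebraic path languages, and the stronger branch-language characterization of Theorem~\ref{thm-DCFL} is not required for this statement.
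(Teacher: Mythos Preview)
Your argument is correct and takes a cleaner route than the paper. The paper proves the proposition via Theorem~\ref{thm-DCFL}: it observes that for any determinization of $t_{\text{bag}}$, applying the erasing morphism $k_j \mapsto \epsilon$, $(a_j,i_j) \mapsto a_j$ to the branch language yields a non-context-free language, so no determinization has a DCFL branch language, hence $t_{\text{bag}}$ is not algebraic. You instead go straight to the path-language level, invoking only the proposition of Section~\ref{ssec:up-to-languages} (path languages of $\Gamma$-algebraic trees are context-free) and a standard pumping argument on $L \cap a^*c^*b^*d^* = \{a^nc^mb^nd^m : n,m \geq 0\}$.

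The gain from your approach is twofold. First, it is more elementary: it avoids the branch-language machinery and Courcelle's DCFL characterization entirely, relying only on results already proved earlier in the paper. Second, it makes the ``even up to language equivalence'' clause immediate, since your obstruction (non-context-freeness of the path language) is by definition invariant under language equivalence; the paper's branch-language argument establishes this only implicitly, since the erased branch language still determines the path language. What the paper's route buys is an illustration of Theorem~\ref{thm-DCFL} in action, which is the stated purpose of placing the proposition in Section~\ref{Sect:branchlang}; your proof would fit more naturally at the end of Section~\ref{ssec:up-to-languages}.
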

\begin{proof}
Recall that the {branch words} corresponding to the nodes
of any determinization of the tree $t_{\text{bag}}$ have the form
\[
k_0(a_1,i_1)k_1\ldots k_{m-1}(a_m,i_m)k_m ,
\]
where $k_0=3$, $k_m\in \{3,4,0\}$, $k_i \in \{3,4\}$ for each $i \in [m-1]$, each $i_j \in [k_{j-1}]$ for $j \in [m]$
and $a_1\ldots a_m$ is a word with
the property that, in any of its prefixes, the number of occurrences
of the letter $a$ is greater than or equal to the number of
occurrences of the letter $b$, and the number of occurrences of the
letter $c$ is greater than or equal to the number of occurrences of
the letter $d$. Moreover, $a_j =\ex$ if and only if $j = m$ 
and $k_m = 0$. The words accepted by that LTS are those that in
addition satisfy that the total number of occurrences of the letter
$a$ in $a_1\ldots a_m$ is equal to the number of occurrences of the
letter $b$, and the number of occurrences of the letter $c$ in
$a_1\ldots a_m$ is equal to the number of occurrences of the letter
$d$ and which end in $0$.

If the language associated with any determinization of
$t_{\text{bag}}$ were context-free, then so would the language
obtained by applying to each word in it the morphism that erases the
letters $k_j$ and renames each $(a_j,i_j)$ to $a_j$. However, that
language is not context free. Therefore, Theorem~\ref{thm-DCFL} yields
that $t_{\text{bag}}$ is not algebraic.   
\end{proof}

The above proposition is a strengthening of a classic result from the
literature on process algebra proved by Bergstra and Klop
in~\cite{BergstraK84}. Indeed, in Theorem~4.1 in~\cite{BergstraK84},
Bergstra and Klop showed that the bag over a domain of values that
contains at least two elements is not expressible in BPA. Moreover, by
Proposition~\ref{Prop:moreexpthanBPA}, the collection of
synchronization trees that are definable in BPA is strictly included
in the set of $\Gamma$-algebraic ones
(Theorem~\ref{Thm:Delta2Gamma}). Therefore,
Proposition~\ref{Prop:bagnotalg} is stronger than the above-mentioned
inexpressibility result by Bergstra and Klop, and offers an
alternative proof for it. Up to bisimilarity, we shall offer an
even stronger statement in Section~\ref{Sect:MSO-bag} (see Proposition~\ref{prop:bag-bisim}).

As a final result, we can use Theorem~\ref{thm-DCFL} to characterize the 
synchronization trees of bounded degree in $\Tree_{2}$.

\begin{cor}
The  synchronization trees of bounded-degree in $\text{Tree}_{2}$ are the 
$\Gamma$-algebraic synchronization trees of bounded-degree. 
\end{cor}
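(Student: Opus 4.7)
The plan is to establish the two inclusions separately. For the easy direction---that every bounded-degree synchronization tree in $\Tree_{2}$ is $\Gamma$-algebraic of bounded degree---I would simply invoke Proposition~\ref{prop:tree2-inc-alg}, noting that the bounded-degree property is trivially preserved under this inclusion.

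For the converse, suppose $t$ is a $\Gamma$-algebraic synchronization tree of degree at most $k$. I would first apply Lemma~\ref{lem-B} to obtain a determinization $t'$ of $t$ whose associated $\Sigma$-term tree $T_{t'}$ is algebraic, where $\Sigma = A \cup \{0,1\} \cup \{+_{i} \mid i \in [k]\}$. By Theorem~\ref{theorem:caucal}, the standard tree representation of $T_{t'}$ lies in $\Tree_{2}$, so $T_{t'}$ is isomorphic to $\Unf(G,r)$ for some $G \in \Graph_{1}$ and some root $r$ of $G$.

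Next, I would design an MSO-interpretation $\mathcal{I}$ that recovers $t'$ from $T_{t'}$ by collapsing the alternating structure: keep the root together with all vertices labelled in $A \cup \{1\}$, and for each pair $(u,v)$ of such vertices joined in $T_{t'}$ by a path of length two through a $+_{k}$-labelled vertex at which $v$ is the $i$-th child, introduce an edge labelled $(a,i)$ from $u$ to $v$ when $v$ carries label $a \in A$, or an edge labelled $\ex$ into the corresponding leaf vertex when $v$ carries label $1$. Applied to $G$, this interpretation produces a graph $\mathcal{I}(G) \in \Graph_{1}$, so $\Unf(\mathcal{I}(G),r) \in \Tree_{2}$.

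The core difficulty will be the verification that $\mathcal{I}$ commutes with the unfolding operation, i.e. that $\Unf(\mathcal{I}(G),r) = \mathcal{I}(\Unf(G,r)) = \mathcal{I}(T_{t'}) \cong t'$. This is not automatic for arbitrary MSO-interpretations, but here $\mathcal{I}$ is strictly local---it depends only on paths of length at most two and on the fixed alternating labelling pattern of $T_{t'}$---so the commutation reduces to a routine case analysis on short prefixes of runs from each vertex of $G$. Once this is settled, $t' \in \Tree_{2}$ follows; since $t$ is obtained from $t'$ by the trivial MSO-definable relabelling that projects each label $(a,i)$ onto $a$, which preserves the tree structure, we conclude that $t \in \Tree_{2}$.
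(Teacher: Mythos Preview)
Your approach is sound up to the last step, and your route to placing $t'$ in $\Tree_{2}$ via $T_{t'}$ and an MSO-interpretation is a reasonable alternative to the paper's branch-language/DPDA argument. The gap is in the final passage from $t'$ to $t$.

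The relabelling $(a,i)\mapsto a$ does, of course, turn the tree $t'$ into the tree $t$; but this only shows $t\in\Graph_{2}$. To conclude $t\in\Tree_{2}$ you must exhibit a graph in $\Graph_{1}$ whose unfolding is $t$, and the natural candidate---applying the relabelling to your graph $\mathcal{I}(G)$---does \emph{not} work. The relabelling fails to commute with unfolding precisely because it is not injective on labels: if $\mathcal{I}(G)$ contains two edges $u\xrightarrow{(a,1)}v$ and $u\xrightarrow{(a,2)}v$ with the \emph{same} target $v$ (which happens whenever $G$ identifies isomorphic sibling subtrees of $T_{t'}$), the relabelled graph has a single edge $u\xrightarrow{a}v$, and its unfolding has strictly fewer branches than $t$. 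A minimal instance is $t=a.0+a.0$: the term tree $+_{2}(a(0),a(0))$ can be unfolded from a graph where both children of the root are the same vertex, and after your interpretation and relabelling that graph has only one $a$-edge at the root.

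The paper handles exactly this issue by replacing the naive relabelling with an MSO-\emph{transduction} that first introduces $k$ fresh copies $u_{1},\dots,u_{k}$ of each vertex $u$ and then routes the edge labelled $(a,i)$ to the $i$-th copy of its target; this separates the targets and restores commutation with unfolding. If you patch your last step with such a vertex-copying transduction instead of a plain relabelling, your argument goes through.
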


\begin{proof}
The direct inclusion is immediate. For the converse inclusion, we know by
Theorem~\ref{thm-DCFL} that a $\Gamma$-algebraic synchronization tree of
bounded-degree $t$ has a determinization $t'$ whose branch language is a 
deterministic context-free language. In particular, this implies that $t'$
belongs to $\Tree_{2}$. Indeed from a deterministic pushdown automaton accepting the branch language of $t'$, we can construct a deterministic graph $G \in \Graph_{1}$ whose unfolding 
from some root $r$ is isomorphic to $t'$. Let $A=\{x_{1},\ldots,x_{k}\}$ be
the alphabet labelling $G$. Consider the transduction $\trans$ defined as follows:
\begin{itemize}
\item for every vertex $u$, the transduction introduces new vertices $u_{1},\ldots,u_{k}$;
\item for every edge from $u$ to $v$ labelled by $x_{\ell}=(a,i)$, the transduction
adds edges from each of the $u_{j}$, $j \in [k]$, to $v_{\ell}$ labelled by $a$.
\end{itemize}
Unfolding the graph $\trans(G)$ from any of the vertices added by the transduction $\trans$ for the root $r$ gives
a tree isomorphic to $t$.
\end{proof}

Note that in the previous proof, $\trans$ cannot simply rename the edges labelled $(a,i)$ to $a$.
Consider for instance the case where $G$ consists of two vertices $r$ and $s$
and two edges $r \era{(a,1)} s$ and $r \era{(a,2)} s$. Applying such a transduction would lead to a graph with only one edge.

\section{Synchronization trees and logic}\label{Sect:treesandlogic}

In this section, we investigate the consequences of the decidability of
monadic second-order logic for our synchronization trees. 

\subsection{A synchronization tree with an undecidable monadic theory}
\label{Sect:MSO-bag}

In this subsection we point out that the synchronization tree
associated with the bag process depicted on Figure~\ref{fig:bag} has
an undecidable monadic theory (even without the root being the source
of an exit edge). Hence that tree is not in the Caucal hierarchy and
therefore, by Proposition~\ref{Prop:Delta-alg-in-Graph3}, is not
$\Delta$-algebraic not even up to bisimilarity (Proposition~\ref{prop:bag-bisim}). {A  similar result was obtained in \cite[Section 6.6.2]{Colcombetthesis} for a slightly richer structure. For completeness sake, we give below a detailed proof of this result.}

\begin{prop}
The synchronization tree $t_{bag}$ associated with  the bag process has an undecidable MSO-theory.
\end{prop}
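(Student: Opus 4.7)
The plan is to prove undecidability by reducing the halting problem for Minsky two-counter machines on initial configuration $(0,0)$ to MSO satisfiability over $t_{bag}$. The LTS of Figure~\ref{fig:bag} already simulates a two-counter machine in which the letters $a,c$ increment the two counters and $b,d$ decrement them; the only feature missing for Turing completeness is the ability to test a counter for zero.

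The key observation is that zero-testing is first-order definable in $t_{bag}$. At a vertex $v$ whose associated LTS-state is $(m,n)$, the tree has an outgoing $b$-edge if and only if $m>0$, and an outgoing $d$-edge if and only if $n>0$. Hence the predicates ``$m=0$'' and ``$n=0$'' are expressed respectively by the first-order formulas $\neg \exists y\,(x \era{b} y)$ and $\neg \exists y\,(x \era{d} y)$. This single remark is where all the work happens; once it is in hand, the rest is routine MSO coding.

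Given a two-counter machine $M=(Q,q_0,q_f,\delta)$, I will construct an MSO sentence $\varphi_M$ that existentially quantifies over a set $P$ of vertices, constrained to form a finite downward path from the root, together with a partition $(X_q)_{q\in Q}$ of $P$ recording the state of $M$ at each step. The sentence then asserts: the root belongs to $P \cap X_{q_0}$; for every consecutive pair $v,v'$ of vertices of $P$, the label of the edge $(v,v')$ and the partition class of $v'$ are consistent with $\delta$ applied at $v$, where the relevant branch of $\delta$ is selected using the two zero-test formulas above; and the maximal vertex of $P$ lies in $X_{q_f}$. Writing ``$P$ is a finite downward path'' and ``$v'$ is the $P$-successor of $v$'' in MSO over trees is entirely standard.

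By construction, $t_{bag} \models \varphi_M$ if and only if $M$ halts on $(0,0)$, so undecidability of Minsky's halting problem transfers to the MSO theory of $t_{bag}$. The argument never uses the exit edge at the root, so it applies verbatim to the variant in which the root is not the source of an exit edge, which gives the strengthened form mentioned in the paper. The main obstacle I expect, and arguably the only nontrivial ingredient, is precisely the zero-test observation: without it, one would be simulating a counter machine deprived of zero-tests, whose halting problem is decidable.
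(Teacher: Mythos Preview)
Your proposal is correct and follows essentially the same approach as the paper: both reduce from two-counter machines, and both hinge on the identical key observation that the zero-tests are first-order definable via the absence of an outgoing $b$-edge (respectively $d$-edge). The only cosmetic difference is that the paper encodes the \emph{non-halting} problem by asserting the existence of an \emph{infinite} path partitioned into instruction-indexed sets, whereas you encode the halting problem via a \emph{finite} path partitioned by control states; this is an immaterial variation and either works.
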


\begin{proof}
Consider a $2$-counter machine whose program $P$ is given as a sequence of 
instructions $I_1,\ldots,I_n$ where each $I_j$ has one of the 
following forms: 
  \[z:= z+1,k\quad z:= z-1,k,\quad z=0?,k_1,k_2\]
where $z$ is one of the counters $x,y$ and $k,k_1,k_2$ are integers between 
$1$ and $n$. At any moment of time, the value of the counters $x$ and $y$ 
is described by an ordered pair $(m,n)$ of non-negative integers. 
The meaning of the above instructions
is standard, where $k$ denotes the index of the instruction to be performed 
after execution of the given instruction, if the instruction 
is an increment or decrement, and $k_1$ and $k_2$ denote 
the indices of the instructions 
to be performed depending on the outcome of the test, if the 
instruction is of the last form.  The machine with program $P$ halts if 
a decrement instruction  $z:= z-1,k$ is executed, but the current value 
of the counter $z$ is $0$. 
A well-known undecidable question for $2$-counter machines asks whether 
a $2$-counter machine started with $(0,0)$ ever halts. 

We encode the halting program for a counter machine with program $P$
in monadic second order logic as follows. Let $X_1,\ldots,X_n$ be a set of
variables associated with the instructions of $P$. Then, for each
instruction $I_i$, we consider the formula $\varphi_i(u)$ in the free
first-order variable $u$ in the language with four binary predicates
associated with the edge relations $\stackrel{e}{\to}$, $e \in
\{a,b,c,d\}$.
\begin{itemize}
\item
  If $I_i$ is of the form $z:= z+1,k$, then $\varphi_i(u)$ expresses that 
  $X_k(v)$ holds for all $v$ such that $u \stackrel{e}{\to} v$, 
  where $e$ is $a$ if $z= x$ and $c$ if $z = y$.
\item 
   If $I_i$ is of the form $z:= z-1,k$, then $\varphi_i(u)$ expresses that 
   there exists a $v$ with $u \stackrel{e}{\to} v$, and $X_k(v)$ holds 
    for all such $v$, where $e$ is $b$ if $z= x$ and $d$ if $z = y$.
\item 
   If $I_i$ is of the form $z = 0?,k_1,k_2$, then $\varphi_i(u)$ expresses that 
   $X_i(v_1)$ and $X_{k_1}(v_2)$ hold for all $v_1,v_2$ such that 
   $u \stackrel{a}{\to} v_1$ and $v_1 \stackrel{b}{\to} v_2$, provided that 
   there is no $v$ with $u\stackrel{e}{\to} v$, and that 
   $X_i(v_1)$ and $X_{k_2}(v_2)$ hold for all such $v_1,v_2$ otherwise,
   where again $e$ is $b$ if $z= x$ and $d$ if $z = y$.
\end{itemize}
Now we assign to the machine with program $P$ the formula 
\[\varphi_P =\exists X_1\ldots\exists X_n[\psi_1 \wedge \psi_2 \wedge
  \forall u (X_1(u) \to \varphi_1(u)\wedge \ldots\wedge X_n(u) \to
  \varphi_n(u))]\] where $\psi_1$ asserts that the $X_1,\ldots ,X_n$
are pairwise disjoint and jointly form an infinite path starting with
the root, and $\psi_2$ says that the root belongs to $X_1$.  Then the
machine does not halt iff the synchronization tree defined by the
process on Figure~\ref{fig:bag} is a model of $\varphi_P$.
  \end{proof}

\begin{rem}
Thomas showed in~\cite[Theorem~10]{Thomas01} that the monadic
 second-order theory of the infinite two-dimensional grid is
 undecidable. However, we cannot use that result to prove that the
 synchronization tree $t_{\text{bag}}$ has an undecidable monadic
 second-order theory. Indeed, the unfolding of the infinite
 two-dimensional grid is the full binary tree, which has a decidable
 monadic second-order theory by Rabin's celebrated Tree
 Theorem~\cite{Rabin1969}.
\end{rem}

\begin{prop}
\label{prop:bag-bisim}
The synchronization tree $t_{\text{bag}}$ is not $\Delta$-algebraic
up to bisimilarity.
\end{prop}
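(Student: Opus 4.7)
The plan is to derive a contradiction by combining the decidability of MSO for graphs in $\Graph_{3}$ with a bisimulation-invariant MSO encoding of the halting problem for two-counter machines. Assume towards a contradiction that $t$ is a $\Delta$-algebraic synchronization tree with $t$ bisimilar to $t_{\text{bag}}$. By Proposition~\ref{Prop:Delta-alg-in-Graph3}, $t$ lies in $\Graph_{3}$ and therefore has a decidable MSO theory. If I can exhibit, for each two-counter machine program $P$, a bisimulation-invariant MSO sentence $\tilde{\varphi}_P$ such that $t_{\text{bag}} \models \tilde{\varphi}_P$ iff $P$ does not halt from counters $(0,0)$, then $t \models \tilde{\varphi}_P$ iff $t_{\text{bag}} \models \tilde{\varphi}_P$ iff $P$ does not halt, so the decidability of the MSO theory of $t$ would yield a decision procedure for halting, a contradiction.

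To produce $\tilde{\varphi}_P$, I would take the formula $\varphi_P$ constructed in the proof of the preceding proposition and drop the conjunct $\psi_1$ requiring the $X_i$ to be pairwise disjoint and jointly form an infinite path, keeping only the conditions that the root lies in $X_1$ and that $X_i(u) \to \varphi_i(u)$ holds for every $i$ and every $u$. The body then consists purely of first-order statements about existence and targets of single-step transitions, which enables a standard saturation argument for bisimulation-invariance: given a bisimulation $R$ between $t$ and $t_{\text{bag}}$ and witnesses $X_i$ in $t_{\text{bag}}$, the saturated sets $Y_i = \{v' \in t \mid \exists v \in X_i,\, vRv'\}$ serve as witnesses in $t$ (and symmetrically for the converse direction), since any transition on one side can be matched on the other. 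To confirm that $\tilde{\varphi}_P$ still characterizes non-halting, note that when $P$ does not halt the sets obtained by placing the $k$-th vertex of the unique computation path into $X_{i_k}$ satisfy the formula; conversely, from any model of $\tilde{\varphi}_P$ and starting at the root (which is in $X_1$), one can trace an infinite sequence of transitions dictated by the instructions, since the only way for $P$ to halt is via a failed decrement on a zero counter, which would falsify the existential successor clause of the corresponding $\varphi_i$.

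The main obstacle is justifying that this weakening of $\varphi_P$ to $\tilde{\varphi}_P$ does not introduce spurious satisfiers in $t_{\text{bag}}$ while still being genuinely bisimulation-invariant. In particular, the zero-test case requires careful treatment because the clause for $\varphi_i$ uses the $a$-then-$b$ (or $c$-then-$d$) gadget from the previous proof to simulate staying at the same bag state while advancing the program counter, and one must check that removing the path/disjointness requirement does not sever the link between the $X_i$ and the actual computation trace. Once this case analysis on the three instruction types is carried out, the reduction above yields the desired contradiction, proving that no $\Delta$-algebraic synchronization tree can be bisimilar to $t_{\text{bag}}$.
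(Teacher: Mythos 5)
Your overall reduction has the right shape: a $\Delta$-algebraic tree $t$ bisimilar to $t_{\text{bag}}$ would lie in $\Graph_{3}$ and hence have a decidable MSO theory by Proposition~\ref{Prop:Delta-alg-in-Graph3}, so it suffices to encode halting in a bisimulation-invariant sentence; and your saturation argument does show that a sentence of the form $\exists \bar X\,(\psi_2 \wedge \forall u\,\bigwedge_i(X_i(u)\to\varphi_i(u)))$ is preserved under bisimulation. The genuine gap is exactly the step you yourself flag as ``the main obstacle'': you never establish that the weakened sentence $\tilde\varphi_P$ still holds in $t_{\text{bag}}$ precisely when $P$ does not halt. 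The direction you worry about (spurious satisfiers when $P$ halts) is in fact the unproblematic one: since $t_{\text{bag}}$ is deterministic, membership of the root in $X_1$ forces, by induction along the run, each configuration vertex into the $X_i$ of the current instruction, and the failed decrement falsifies the corresponding $\varphi_i$; no path or disjointness hypothesis is needed there. The problematic direction is the existence of a satisfying assignment when $P$ does \emph{not} halt: the zero-test clause forces the $a$-successor $v_1$ of $u$ into $X_i$ as well, so $\varphi_i(v_1)$ must also hold, spawning further constraints at vertices whose bag contents are $(x+1,y)$ rather than $(x,y)$; iterating, the least assignment closed under these constraints contains auxiliary ``runs'' of $P$ started from perturbed configurations, and such a run can reach a failed decrement even though the genuine run from $(0,0)$ diverges (for instance when the $x=0$ branch of a test loops harmlessly but the $x>0$ branch decrements $y$). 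So the case analysis you defer would not merely have to be checked; the zero-test gadget would have to be redesigned.

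The paper sidesteps all of this with a more robust, modular argument worth comparing with yours: since $t_{\text{bag}}$ is deterministic, for any $t$ bisimilar to it one writes a formula $\varphi_{\mathrm{det}}(X)$ selecting, for each node and each label, exactly one outgoing edge with that label; the subtree carried by any such $X$ is deterministic and bisimilar to $t$, hence isomorphic to $t_{\text{bag}}$ by Lemma~\ref{lem:bisim-det-iso}, and relativizing quantifiers to $X$ interprets the \emph{entire} MSO theory of $t_{\text{bag}}$ inside that of $t$. This transfers undecidability wholesale, without requiring any sentence of the counter-machine encoding to be bisimulation-invariant.
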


\begin{proof}
This statement follows from the more general remark that any synchronization tree $t$
that is bisimilar to a deterministic synchronization tree $t_{0}$ having an undecidable MSO-theory also has an undecidable MSO-theory. Given a formula $\varphi$ with no free variables, consider the formula
\[
 \varphi^{*}=\exists X\; \varphi_{\textrm{det}}(X) \wedge \varphi'
\]
where $\varphi'$ is the formula $\varphi$ in which all quantifications are relativized to $X$ and $\varphi_{\textrm{det}}(X)$ states that if a node 
has a successor by an $a$-labelled edge then it has one and only one successor 
by an $a$-labelled edge which belongs to $X$. If the formula $\varphi_{\textrm{det}}(X)$ is satisfied on $t$ for some set of vertices $U$ then
$t$ restricted to the vertices in $U$ is a deterministic tree isomorphic to $t_{0}$ (cf. Lemma~\ref{lem:bisim-det-iso} on page~\pageref{lem:bisim-det-iso}). Clearly $\varphi^{*}$ holds on $t$ if and only if $\varphi$ holds on $t_{0}$. This implies that the MSO-theory of $t$ is undecidable.
  \end{proof}

{The argument used in this proof can also be used to show that $t_{bag}$ does not belong to the Caucal hierarchy up to bisimilarity.}

\subsection{Minimization}

\newcommand{\normm}[1]{|\!|\,u\,|\!|}

It is well known that, for each bisimulation equivalence class
$\mathcal{C}$ of synchronization trees in $\Sta$, there is a tree
$t_\mathcal{C} \in \mathcal{C}$ such that for all $t\in \mathcal{C}$
there is a \emph{surjective} morphism $\varphi: t \to t_\mathcal{C}$,
and, moreover, the relation 
\[
R = \varphi \cup \varphi^{-1} =
\{(u,v),(v,u) : \varphi(u) = v\}
\] 
is a bisimulation between $t$ and
$t_\mathcal{C}$.  Furthermore, $t_\mathcal{C}$ is unique up to
isomorphism. When $t \in \mathcal{C}$, we call $t_\mathcal{C}$ the
\emph{minimization} of $t$.

The minimization of a tree $t\in \Sta$ can be constructed in the
following way. We define an increasing sequence $V_0,V_1,\ldots$ of
sets of vertices of $t$, where $V_0$ is a singleton set containing
only the root of $t$. The construction will guarantee that, for each
$i$, the set $V_i$ contains only vertices of depth at most $i$, and whenever a
vertex $v$ belongs to $V_i$ and there is a path from a vertex $u$ to
$v$, then $u$ is also in $V_i$.  Given $V_i$ and $u \in V_i$ of depth
$i$, consider the set $S(u)$ of successors of $u$.  We may divide
$S(u)$ into equivalence classes according to the bisimulation
equivalence classes of the corresponding subtrees {and the label of the edge coming from $u$}. To this end, we
define $v \sim v'$, for $v,v'\in S(u)$, if the subtrees rooted at $v$
and $v'$ are bisimilar {and if $u \era{a} v$ and $u \era{a} v'$ for some $a \in A$}. Then we select a representative of each
$\sim$-equivalence class.  The set $V_{i+1}$ consists of all vertices
in $V_i$ together with those vertices in $S(u)$ of depth $i+1$
selected above, where $u$ ranges over the set of all vertices in $V_i$
of depth $i$. Finally, let $V = \bigcup_{i \geq0}V_i$.  The `subtree"
of $t$ spanned by the vertices in $V$ is the minimization of $t$.

It is known, see e.g. \cite{BEbook}, that the minimization of a 
$\Gamma$-regular synchronization tree is $\Gamma$-regular. 
In contrast with this result, we have: 

\begin{prop}\label{Prop:QuotientnotCaucal}
There exists a $\Gamma$-algebraic synchronization tree whose
minimization does not have a decidable MSO-theory,  and hence 
does not belong to the Caucal hierarchy and is neither a $\Gamma$-algebraic 
nor a $\Delta$-algebraic synchronization tree.
\end{prop}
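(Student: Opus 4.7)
The plan is to exhibit an explicit $\Gamma$-algebraic synchronization tree $t$ whose minimization $t_{\text{min}}$ has an undecidable MSO-theory; the proposition then follows directly. Indeed, by Caucal's theorem every graph in the hierarchy has a decidable MSO-theory, so $t_{\text{min}}$ cannot belong to any level. By Proposition~\ref{Prop:Delta-alg-in-Graph3}, every $\Delta$-algebraic tree lies in $\Graph_{3}$ and so has a decidable MSO-theory, hence $t_{\text{min}}$ is neither $\Delta$-algebraic nor (a fortiori) $\Gamma$-algebraic.

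The idea is to exploit that MSO-equivalence is strictly finer than bisimilarity: even though a $\Gamma$-algebraic tree lies in $\Graph_{3}$ and has a decidable MSO-theory, its bisimulation collapse need not. Concretely, I would take $t$ to be the initial solution of a $\Gamma$-algebraic recursion scheme with a unary functor variable that threads a growing `summary' parameter through the recursion (in the spirit of the schemes for Figure~\ref{fig:algebraic-not-BPA} and Figure~\ref{fig:non-BPA2}). The scheme would be arranged so that, after collapsing bisimilar subtrees, one obtains a deterministic synchronization tree whose vertices sit on a two-dimensional grid: from a `main-diagonal' vertex one can perform MSO-definable moves that increment or decrement two simulated counters. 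The MSO encoding of the halting problem for $2$-counter machines used in the proof that $t_{\text{bag}}$ has an undecidable MSO-theory then transfers verbatim to $t_{\text{min}}$.

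The main obstacle is that, by Proposition~\ref{prop:bag-bisim}, the minimization cannot itself be bisimilar to $t_{\text{bag}}$, since no $\Gamma$-algebraic tree is. One must therefore decorate the grid underlying $t_{\text{min}}$ with distinguishing `fingerprint' subtrees, so that every grid position has its own bisimulation class, making the minimization isomorphic to the decorated grid rather than to a further collapse. The fingerprints have to depend on the position in a way that is still producible by an algebraic (first-order) scheme: this is where the parameter of the unary functor variable is essential, as it lets one carry position-dependent data through the recursion---something a regular $\Delta$-scheme cannot do. Once $t$ is written down, the proof reduces to three verifications: (i) $t$ is $\Gamma$-algebraic by construction; (ii) the bisimulation quotient of $t$ is isomorphic to the decorated grid, which follows from the fact that distinct grid positions carry non-bisimilar fingerprints; and (iii) the halting problem of an arbitrary $2$-counter machine is MSO-reducible to the MSO-theory of this grid, by adapting the formula $\varphi_{P}$ from the proof that $t_{\text{bag}}$ has an undecidable MSO-theory. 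The first two are essentially bookkeeping; the technical heart is choosing the fingerprint gadgets so that (ii) and (iii) hold simultaneously.
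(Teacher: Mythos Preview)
Your plan has a conceptual gap that would prevent it from working as stated. You propose to attach ``fingerprint'' subtrees so that distinct grid positions have distinct bisimulation classes, making the minimization isomorphic to your decorated structure. But if every vertex of $t$ already sits in its own bisimulation class, then $t$ is essentially its own minimization, so $t_{\min}$ would itself be $\Gamma$-algebraic, lie in $\Graph_{3}$, and have a decidable MSO-theory---the opposite of what you need. The whole point is that minimization must \emph{create} MSO-visible information that was not definable in $t$: collapsing bisimilar subtrees must leave a trace (for instance a change in out-degree) that encodes an undecidable predicate.

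The paper's construction realises exactly this, and with a \emph{binary} functor variable:
\[
S=F(0,0),\qquad F(x,y)=a.F(f.x,y)+b.F(x,f.y)+c.F(0,y)+d.F(x,0)+e.x+e.y .
\]
At the vertex reached by $u\in\{a,b,c,d\}^{*}$ the two parameters are $f^{\|u\|_a}.0$ and $f^{\|u\|_b}.0$, where $\|u\|_a$ (resp.\ $\|u\|_b$) counts the $a$'s (resp.\ $b$'s) since the last $c$ (resp.\ $d$). The two $e$-branches at $u$ are bisimilar precisely when $\|u\|_a=\|u\|_b$, and minimization merges them in that case only. Hence in $t_{\min}$ the predicate ``the two counters are equal'' becomes the MSO-definable property ``this vertex has exactly one $e$-successor'', and one reduces the halting problem for $2$-counter machines with increment, reset and equality test to the MSO-theory of $t_{\min}$.

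Two further points. First, a single unary functor variable carries one tree parameter; it is not clear how you would maintain two independent counters that way, and the paper explicitly lists the unary case as an open problem. Second, the minimization of a synchronization tree is again a tree, not a grid; speaking of $t_{\min}$ as ``isomorphic to the decorated grid'' is at best a loose analogy and cannot be made literal.
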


\begin{proof}
Let $A=\{a,b,c,d,e,f\}$. Consider the following $\Gamma$-algebraic scheme:
\begin{eqnarray*}
  S      & = & F(0,0) \\
  F(x,y) & = & a.F(f.x,y) + b.F(x,f.y) + c.F(0,y) + d.F(x,0) + e.x + e.y\ . 
\end{eqnarray*}
Let $t$ be the synchronization tree defined the above scheme.  For a
word $u \in \{a,b,c,d \}^{*}$, we designate by $\normm{u}_{a}$
(resp. $\normm{u}_{b}$) the number of $a$'s (resp. $b$'s) in the
longest suffix of $u$ that does \emph{not} contain any occurrence of
the letter $c$ (resp. $d$).  Intuitively, the tree $t$ consists of a
full quaternary deterministic tree $t$ with edges labelled in
$\{a,b,c,d\}$ such that every vertex of $t$ is also the origin of
two additional parallel disjoint branches. Since $t$ is
deterministic, we may identify each vertex of $t$ with a word $u \in
\{a,b,c,d\}^*$.  The two additional branches at vertex $u$ of $t$
are such that their edge labels form the words $ef^{\normm{u}_{a}}$
and $e f^{\normm{u}_{b}}$, respectively.

The minimization $t'$ of $t$ is obtained by removing one of the 
two branches labelled 
$ef^{\normm{u}_{a}}$ for all vertices $u$ of $t$ such 
that $\normm{u}_{a}=\normm{u}_{b}$.

The fact that $t'$ has an undecidable MSO-theory is based on a
reduction from the halting problem for 2-counter machines with
increment, reset and equality test \cite{Wojna}. The idea is, as usual,
to define, for every such machine $M$, a closed MSO-formula
$\varphi_{M}$ such that $t' \models \varphi_{M}$ if and only if $M$
does not halt.

When constructing the formula $\varphi_M$, we associate to a vertex
$u$ in $\{a,b,c,d\}^{*}$ representing a possible history of the
computation of $M$ from its initial state, the integer $\normm{u}_{a}$
as the current value of the first counter, and the integer
$\normm{u}_{b}$ as the current value of the second counter of the
machine.  Assuming that the current values of the counters are given
by vertex $u$, we show how to simulate the various operations of the
machine. The increment of the first (resp. second) counter is obtained
by moving to vertex $ua$ (resp. $ub$).  The reset of the first
(resp. second) counter is obtained by moving to $uc$
(resp. $ud$). Finally, the test for equality between the two counters
is performed by testing that vertex $u$ has only one outgoing edge
labelled $e$. The details of the construction are similar to those in
Section~\ref{Sect:MSO-bag}.    
\end{proof}

The above result yields that the collection of
synchronization trees in the Caucal hierarchy is {\em not} closed
under minimization. Indeed, there is a
$\Gamma$-algebraic tree whose minimization is
not in the Caucal hierarchy.
This leaves open the corresponding question for $\Delta$-regular
synchronization trees.

\section{Open questions}\label{Sect:concl}

There are several questions that we leave open in this paper and that
lead to interesting directions for future research.

\begin{itemize}
\item Is there a strict expressiveness hierarchy for $\Gamma$- and
  $\Delta$-algebraic recursion schemes that is induced by the maximum
  rank of the functor variables used in defining recursion schemes?
\item Is the minimal synchronization tree that is bisimilar to a
$\Delta$-regular synchronization tree also $\Delta$-regular? If not,
is it in the Caucal hierarchy? 
\item The $\Gamma$-algebraic scheme we use in the proof of
Proposition~\ref{Prop:QuotientnotCaucal} uses a binary functor
variable. Is the minimal synchronization tree that is bisimilar to a
tree defined by a $\Gamma$-algebraic scheme involving only unary
functor variables $\Gamma$-algebraic?

\end{itemize}

\thebibliography{nn}

\bibitem{Abramsky91}
S.~Abramsky.
\newblock A domain equation for bisimulation. 
\newblock {\em Inform. and Comput.},  92 (1991),  no. 2, 161--218.

\bibitem{Aho68}
A.V. Aho. 
\newblock Indexed grammars --- an extension of context-free grammars. 
\newblock \textit{Journal of the ACM} 15 (1968), 647--671.

\bibitem{BaetenBR2009}
J.C.M. Baeten, T.~Basten, and M.A. Reniers.
\newblock {\em Process Algebra: Equational Theories of Communicating
  Processes}, volume~50 of {\em Cambridge Tracts in Theoretical Computer
  Science}.
\newblock Cambridge University Press, November 2009.

\bibitem{Baetenetal}
J.C.M. Baeten and W.P. Weijland.
\newblock \emph{Process Algebra}. 
\newblock Cambridge Tracts in Theoretical Computer Science, 
18. Cambridge University Press, Cambridge, 1990. 

\bibitem{deBakker}
J.W. de Bakker. 
\newblock \emph{Recursive Procedures}. 
\newblock Mathematical Centre Tracts, No. 24.
Mathematisch Centrum, Amsterdam, iv+108 pp., 1971. 

\bibitem{BergstraK84}
J.A. Bergstra and J.W. Klop. 
\newblock The algebra of recursively defined processes and the algebra
               of regular processes. 
\newblock Proceedings of ICALP 1984, LNCS 172, pp.~82--94, 
Springer, 1984.

\bibitem{BET}
S.L. Bloom, Z. \'Esik and D. Taubner.
\newblock Iteration theories of synchronization trees.  
\newblock {\em Inform. and Comput.},  102 (1993),  no. 1, 1--55.

\bibitem{BEbook}
S.L. Bloom and Z. \'Esik.
\newblock {\em Iteration Theories.} 
Springer, 1993.

\bibitem{BEreg}
S.L. Bloom and Z. \'Esik.
\newblock The equational theory of regular words.
\newblock \textit{Information and Computation}, 197 (2005),  55--89.

\bibitem{BEmezei}
S.L. Bloom and Z. \'Esik.
\newblock A Mezei-Wright theorem for categorical algebras.
\newblock \textit{Theoretical Computer Science}, 411 (2010), 341--359.

\bibitem{BloomTWW83} S.L. Bloom, J. W. Thatcher, E. G. Wagner and
  J. B. Wright.  \newblock Recursion and iteration in continuous
  theories: The ``$M$-construction''.  \newblock
  \textit{J. Comput. System Sci.}, 27 (1983), 148--164.
  
\bibitem{Blumensath}
A. Blumensath.
\newblock Prefix recognizable graphs and monadic second order logic.
\newblock Technical Report AIB-06-2001, RWTH Aachen, 2001.

\bibitem{Braudthesis}
L. Braud.
\newblock The structure of linear orders in the pushdown hierarchy.
\newblock PhD thesis, Institut Pascal Monge, 2010.

\bibitem{vBreugel2001} F.~van Breugel.  \newblock An introduction to
metric semantics: Operational and denotational models for programming
and specification languages.  \newblock \textit{Theoretical Computer
Science}, 258(2001), 1--98.

\bibitem{CarayolWohrle}
A. Carayol and S. W\"ohrle. 
\newblock The Caucal hierarchy of infinite graphs in terms
of logic and higher-order pushdown automata.
\newblock Proceedings of FSTTCS 03, LNCS 2914, pp.~112--123, 
Springer, 2003.

\bibitem{Carayol05} Arnaud Carayol \newblock Regular sets of higher-order pushdown stacks. \newblock  Proceedings of MFCS 2005, LNCS 3618, pp.~168--179, Springer, 2005.

\bibitem{Carayolthesis}
A. Carayol.
\newblock Automates infinis, logique et langages.
\newblock PhD thesis, Universit{\'e} Rennes I, 2006.

\bibitem{Caucal}
D. Caucal. 
\newblock On infinite terms having a decidable monadic theory.
\newblock Proceedings of MFCS 02, LNCS 2420, 165--176, Springer, 2002.

\bibitem{Caucal03}
D. Caucal. 
\newblock On infinite transition graphs having a decidable monadic 
theory.
\newblock \textit{Theoretical Computer Science} 290(2003), 79--115.

\bibitem{Christensen83} S. Christensen.  \newblock Decidability and
  decomposition in process algebras.  \newblock PhD thesis
  ECS-LFCS-93-278, Department of Computer Science, University of
  Edinburgh, 1983.

\bibitem{Colcombetthesis}
T. Colcombet.
\newblock Propri{\'e}t{\'e}s et repr{\'e}sentation de structures infinies.
\newblock PhD thesis, Universit{\'e} Rennes I, March 2004.

\bibitem{Courcelle78a}
B. Courcelle.
\newblock 
A representation of trees by languages {I}. 
\newblock \textit{Theoretical Computer Science} 6 (1978), 255--279.

\bibitem{Courcelle78b}
B. Courcelle.
\newblock 
A representation of trees by languages {II}. 
\newblock \textit{Theoretical Computer Science} 7 (1978), 25--55.

\bibitem{Courcelle83}
B. Courcelle.
\newblock 
Fundamental properties of infinite trees, 
\newblock \textit{Theoretical Computer Science} 25 (1983), 69--95.

\bibitem{Courcelle94}
B. Courcelle. 
Monadic second-order definable graph transductions: A survey.
{\em Theoretical Computer Science}, 126 (1994), 53--75.

\bibitem{CourcelleN76}
B. Courcelle and M. Nivat. 
\newblock Algebraic families of interpretations.
\newblock In {\em 17th Annual Symposium on Foundations of Computer Science},
IEEE Computer Society, 1976, 137--146. 

\bibitem{Ebbinghaus95}
H.D. Ebbinghaus and J.~Flum.
\newblock {\em Finite Model Theory}.
\newblock Springer-Verlag, 1995.

\bibitem{Esiksynch} Z. \'Esik.  
\newblock Continuous additive algebras
  and injective simulations of synchronization trees.  
\newblock Fixed
  Points in Computer Science, 2000 (Paris).  
\newblock
  \textit{Journal of Logic and Computation}, 12 (2002), 271--300.

\bibitem{Fischer}
M.J. Fischer,
Grammars with macro-like productions, 
In {9th Annual Symp. Switching and Automata Theory, Schenedtady, NY, USA, 1968},
IEEE Press, 1968, 131--142. 

\bibitem{GoguenTWW77}
J.A. Goguen, J.W. Thatcher, E.G. Wagner and J.B. Wright. 
\newblock Initial algebra semantics and continuous algebras. 
\newblock \textit{Journal of the ACM} 24 (1977), 68--95.

\bibitem{Gue81}
I. Guessarian, 
\newblock \emph{Algebraic Semantics}. 
\newblock LNCS  99, Springer, 1981. 

\bibitem{MiliusM2006}
S. Milius and L. Moss. 
\newblock The category-theoretic solution of recursive program schemes.
\newblock \textit{Theoretical Computer Science} 366 (2006), 3--59.

\bibitem{MandrioliG88}
D. Mandrioli and C. Ghezzi.
\newblock \textit{Theoretical foundations of Computer Science.}
\newblock John Wiley and Sons, 1988.

\bibitem{Mi71}
R. Milner.
\newblock An algebraic definition of simulation between programs.
\newblock In {Proceedings 2nd Joint Conference on Artificial Intelligence},
  pages 481--489. BCS, 1971.
\newblock Also available as Report No.\ CS-205, Computer Science Department,
  Stanford University.

\bibitem{Milner}
R. Milner. 
\newblock \emph{A Calculus of Communicating Systems.} 
\newblock LNCS  92, Springer, 1980. 

\bibitem{Mil89CC}
R. Milner.
\newblock {\em Communication and Concurrency}.
\newblock Prentice Hall, 1989.

\bibitem{Moller96} F. Moller.  
\newblock Infinite results.  
\newblock Proceedings of CONCUR '96, Concurrency Theory, 7th International
  Conference, LNCS 1119, pp.~195--216, Springer, 1986

\bibitem{Nivat75}
M. Nivat, 
\newblock  On the interpretation of recursive polyadic program schemes. 
\newblock \emph{Symposia Mathematica} XV (1975), 255 -281.

\bibitem{Par81}
D.M.R. Park.
\newblock Concurrency and automata on infinite sequences.
\newblock In {Theoretical Computer Science, 5th GI-Conference}, LNCS 104, pp.~167--183, Springer, 1981.

\bibitem{Rabin1969}
M.O. Rabin.
\newblock Decidability of second-order theories and automata on infinite trees.
\newblock \emph{Transactions of the American Mathematical Society} 141 (1969), 
1--35.

\bibitem{Scott71}
D.S. Scott. 
\newblock The lattice of flow diagrams. 
\newblock In {Symposium on Semantics of
Algorithmic Languages 1971}, Lecture Notes in Mathematics, vol. 188,
Springer, 1971, pp.~311--366.

\bibitem{Stirling00}
C.~Stirling.
\newblock Decidability of bisimulation equivalence for pushdown processes.
\newblock Technical Report EDI-INF-RR-0005, School of Informatics, University
  of Edinburgh, 2000.

\bibitem{Thomas01} W. Thomas.  \newblock A short introduction to
infinite automata.  \newblock In {Developments in Language
Theory}, LNCS 2295, pp.~130--144, Springer, 2001.

\bibitem{Thomas03}
W. Thomas.
\newblock Constructing infinite graphs with a decidable MSO-theory. 
\newblock In {Proceedings of the 28th International Symposium on Mathematical Foundations of Computer Science},  LNCS 2747, pp.~113--124, Springer, 2003.

\bibitem{Winskel}
G. Winskel. 
\newblock Synchronization trees.  
\newblock \textit{Theoretical Computer Science}   34 (1984),  no. 1--2, 33--82. 

\bibitem{Wojna}
A. Wojna.
\newblock Counter machines.
\newblock \emph{Information Processing Letters} 71 (1991), 193--197.

\end{document}